\newif\ifijfcs
\numberwithin{theorem}{section}
\newtheorem{conjecture}[theorem]{Conjecture}
\newtheorem{problem}[theorem]{Problem}
\newcommand{\qed}{}
\newcommand{\N}{\mathbb{N}}
\newcommand{\Ns}{\mathbb{N}^{*}}
\newcommand{\bcdot}{$\discretionary{\mbox{$ \cdot $}}{}{}$}
\definecolor{Pepijn}{rgb}{0,0.5,1}
\definecolor{Michiel}{rgb}{1,0.7,0}
\definecolor{Twins}{rgb}{0.1,0.5,0.0} 
\definecolor{lightTwins}{rgb}{0.6,1,0.5}
\colorlet{lightPepijn}{Pepijn!50}
\colorlet{lightMichiel}{Michiel!50}
\colorlet{darkTwins}{Twins!50!black}
\colorlet{tableIndex}{lightTwins}
\colorlet{tableBold}{lightMichiel}
\colorlet{tableVoid}{lightPepijn}
\colorlet{tableFrame}{Twins!50!gray}
\colorlet{Chaser}{red}
\colorlet{Resigner}{Michiel}
\colorlet{lastPawn}{Twins}
\colorlet{pawnCircle}{darkTwins}
\colorlet{CernyColA}{white}
\colorlet{CernyColB}{lightMichiel}
\colorlet{CernyColC}{lightTwins}
\tikzset{
	myline/.style={lastPawn,line width=0.8mm}, % style of line of pawn race
	mycirc/.style={pawnCircle,line width=0.4mm,radius=1.2mm}, % style of circle of pawn race
	mydash/.style={dash pattern=on 2pt off 2pt}, % style of line dash of pawn race
	mynode/.style={circle,inner sep=0pt,minimum width=5mm} % style of node of Cerny automata
}
\newcommand{\Cerny}[4][0]{\begin{scope}[semithick]
		\pgfmathsetmacro{\g}{#1} % Extra angle between last and first state
		\pgfmathsetmacro{\n}{#2} % Number of states
		\pgfmathsetmacro{\c}{#3} % Number of states with one symbol undefined
		\pgfmathsetmacro{\r}{#4} % radius of node circle
		\pgfmathsetmacro{\nmm}{\n-1}
		\pgfmathsetmacro{\nmc}{\n-\c}
		\pgfmathsetmacro{\nmcmm}{\n-\c-1}
		% draw state nodes
		\foreach \i in {1,...,\n}{
			\pgfmathsetmacro{\angle}{90-0.5*\g-(360-\g)*(\i-0.5)/\n}
			\pgfmathsetmacro{\CernyCol}{\i<\n-\c?"CernyColA":(\i<\n?"CernyColB":"CernyColC")}
			\node[mynode,draw,fill=\CernyCol] (\i) at (\angle:\r) {{$\i$}};
		}
		\ifx2#2 % if n == 2
		\pgfmathsetmacro{\h}{(\r+0.1)/3} % mid position of arrow for symbol a
		\pgfmathsetmacro{\ar}{\h+0.2} % radius of symbol a
		% draw successor arrow for one symbol
		\draw (1) edge[out=-150,in=0]  (-90:\h) (-90:\h) edge[->,out=180,in=-30] (2);
		% draw symbol a at successor arrow
		\draw (-90:\ar) node {$a$};
		% draw successor arrow for symbol a and symbol b
		\draw (2) edge[out=30,in=180]  (90:\h)  (90:\h) edge[->,out=0,in=150] (1);
		% draw symbol a and symbol b at successor arrow
		\draw (90:\ar) node {$a,b$};
		\else % if n != 2
		\pgfmathsetmacro{\ar}{\r*cos(0.5*(360-\g)/\n)+0.2} % radius of symbol
		\pgfmathsetmacro{\abr}{\r*cos(0.5*((360-\g)/\n+\g))+0.2} % radius of a,b
		% draw successor arrows for one symbol
		\foreach \i in {1,...,\nmm}{
			\pgfmathsetmacro{\angle}{90-0.5*\g-(360-\g)*(\i+0.5)/\n}
			\node[mynode] (ipp) at (\angle:\r) {};
			\draw[->] (\i) -- (ipp);
		}
		% draw symbols a at successor arrows
		\foreach \i in {1,...,\nmcmm}{
			\pgfmathsetmacro{\angle}{90-0.5*\g-(360-\g)*\i/\n} % angle of symbol
			\draw (\angle:\ar) node {$a$};
		}
		% draw successor arrow for two symbols
		\draw[->] (\n) -- (1);
		% draw symbol a and symbol b at successor arrow
		\draw (90:\abr) node {$a,b$};
		\fi
		\ifx0#3 % if c == 0
		\else % if c != 0
		% draw symbols b at successor arrows
		\foreach \i in {\nmc,...,\nmm}{
			\pgfmathsetmacro{\angle}{90-0.5*\g-(360-\g)*\i/\n} % angle of symbol
			\draw (\angle:\ar) node {$b$};
		}
		\fi
		% draw self arrows for symbol b
		\foreach \i in {1,...,\nmcmm} {
			\pgfmathsetmacro{\angle}{90-0.5*\g-(360-\g)*(\i-0.5)/\n}
			\pgfmathsetmacro{\outangle}{\angle+30}
			\pgfmathsetmacro{\inangle}{\angle-30}
			\draw (\i) edge[->,out=\outangle,in=\inangle,looseness=7] (\i);
		}
		% draw symbols b at self arrows
		% and invisible symbols b to get better centering
		\foreach \i in {1,...,\n} {
			\pgfmathsetmacro{\angle}{90-0.5*\g-(360-\g)*(\i-0.5)/\n}
			\pgfmathsetmacro{\opa}{(\i<\n-\c)?1:0}
			\draw[opacity=\opa] (\i) + (\angle:0.9) node {$b$};
		} 
	\end{scope}
}
\newcommand{\Cernystar}[4][0]{\begin{scope}[semithick]
		\pgfmathsetmacro{\g}{#1} % Extra angle between last and first state
		\pgfmathsetmacro{\n}{#2} % Number of states
		\pgfmathsetmacro{\c}{#3} % Number of states with one symbol undefined
		\pgfmathsetmacro{\r}{#4} % radius of node circle
		\pgfmathsetmacro{\nmm}{\n-1}
		\pgfmathsetmacro{\nmc}{\n-\c}
		\pgfmathsetmacro{\nmcmm}{\n-\c-1}
		% draw state nodes
		\foreach \i in {1,...,\n}{
			\pgfmathsetmacro{\angle}{90-0.5*\g-(360-\g)*(\i-0.5)/\n}
			\pgfmathsetmacro{\CernyCol}{\i<\n-\c?"CernyColA":(\i<=\n?"CernyColB":"CernyColC")}
			\node[mynode,draw,fill=\CernyCol] (\i) at (\angle:\r) {{$\i$}};
		}
		\ifx2#2 % if n == 2
		\pgfmathsetmacro{\h}{(\r+0.1)/3} % mid position of arrow for symbol a
		\pgfmathsetmacro{\ar}{\h+0.22} % radius of symbol a
		% draw successor arrow for one symbol
		\draw (1) edge[out=-150,in=0]  (-90:\h) (-90:\h) edge[->,out=180,in=-30] (2);
		% draw symbol a at successor arrow
		\draw (-90:\ar) node {$a,\tilde{a}$};
		% draw successor arrow for symbol a and symbol b
		\draw (2) edge[out=30,in=180]  (90:\h)  (90:\h) edge[->,out=0,in=150] (1);
		% draw symbol a and symbol b at successor arrow
		\draw (90:\ar) node {$\tilde{a},\tilde{b}$};
		\else % if n != 2
		\pgfmathsetmacro{\ar}{\r*cos(0.5*(360-\g)/\n)} % radius of symbol without offset
		\pgfmathsetmacro{\abr}{\r*cos(0.5*((360-\g)/\n+\g))+0.22} % radius of a,b
		% draw successor arrows for one symbol
		\foreach \i in {1,...,\nmm}{
			\pgfmathsetmacro{\angle}{90-0.5*\g-(360-\g)*(\i+0.5)/\n}
			\node[mynode] (ipp) at (\angle:\r) {};
			\draw[->] (\i) -- (ipp);
		}
		% draw symbols a at successor arrows
		\foreach \i in {1,...,\nmcmm}{
			\pgfmathsetmacro{\angle}{90-0.5*\g-(360-\g)*\i/\n} % angle of symbol
			\begin{scope}[shift={(\angle:\ar)},xscale=1.7]
				\draw (\angle:0.22) node {$a,\tilde{a}$};
			\end{scope}
		}
		% draw successor arrow for two symbols
		\draw[->] (\n) -- (1);
		% draw symbol a and symbol b at successor arrow
		\draw (90:\abr) node {$\tilde{a},\tilde{b}$};
		\fi
		\ifx0#3 % if c == 0
		\else % if c != 0
		% draw symbols b at successor arrows
		\foreach \i in {\nmc,...,\nmm}{
			\pgfmathsetmacro{\angle}{90-0.5*\g-(360-\g)*\i/\n} % angle of symbol
			\draw (\angle:\ar) node {$b$};
		}
		\fi
		% draw self arrows for symbol b
		\foreach \i in {1,...,\nmcmm} {
			\pgfmathsetmacro{\angle}{90-0.5*\g-(360-\g)*(\i-0.5)/\n}
			\pgfmathsetmacro{\outangle}{\angle+30}
			\pgfmathsetmacro{\inangle}{\angle-30}
			\draw (\i) edge[->,out=\outangle,in=\inangle,looseness=7] (\i);
		}
		% draw symbols b at self arrows
		% and invisible symbols b to get better centering
		\foreach \i in {1,...,\n} {
			\pgfmathsetmacro{\angle}{90-0.5*\g-(360-\g)*(\i-0.5)/\n}
			\pgfmathsetmacro{\opa}{(\i<\n-\c)?1:0}
			\draw[opacity=\opa] (\i) + (\angle:0.9) node {$\tilde{b}$};
		} 
	\end{scope}
}
\begin{document}

\ifijfcs	
\markboth{S. Cambie, M. de Bondt, and H. Don}{Extremal Binary PFAs with Small Number of States}
%%%%%%%%%%%%%%%%%%%%% Publisher's Area please ignore %%%%%%%%%%%%%%%
%
\catchline{}{}{}{}{}
%
%%%%%%%%%%%%%%%%%%%%%%%%%%%%%%%%%%%%%%%%%%%%%%%%%%%%%%%%%%%%%%%%%%%%
\title{Extremal Binary PFAs with Small Number of States\footnote{The first author has been supported by a Vidi Grant of the Netherlands Organization for Scientific Research (NWO), grant number $639.032.614$. His current affiliation is at the Extremal Combinatorics and Probability Group (ECOPRO), Institute for Basic Science (IBS), Daejeon, South Korea.}
}
\author{Stijn Cambie and Michiel de Bondt and Henk Don}
\address{Department of Mathematics, Radboud University Nijmegen, Postbus 9010, 6500 GL Nijmegen, The Netherlands\\
	\email{stijn.cambie@hotmail.com,\{m.debondt, h.don\}@math.ru.nl}}	
\else
\title{Extremal Binary PFAs with Small Number of States 
	\thanks{The first author has been supported by a Vidi Grant of the Netherlands Organization for Scientific Research (NWO), grant number $639.032.614$. His current affiliation is at the Extremal Combinatorics and Probability Group (ECOPRO), Institute for Basic Science (IBS), Daejeon, South Korea.}
}
\author{Stijn Cambie \and Michiel de Bondt \and Henk Don}
\authorrunning{S. Cambie et al.}
\institute{Department of Mathematics, Radboud University Nijmegen, Postbus 9010, 6500 GL Nijmegen, The Netherlands\\ \email{stijn.cambie@hotmail.com, \{m.debondt, h.don\}@math.ru.nl}}	
\fi

\maketitle              
	
	\begin{abstract}
		The largest known reset thresholds for DFAs are equal to $(n-1)^2$, where $n$ is the number of states. This is conjectured to be the maximum possible. PFAs (with partial transition function) can have exponentially large reset thresholds. This is still true if we restrict to binary PFAs. However, asymptotics do not give conclusions for fixed $n$. We prove that the maximal reset threshold for binary PFAs is strictly greater than $(n-1)^2$ if and only if $n\geq 6$. 
		
		These results are mostly based on the analysis of synchronizing word lengths for a certain family of binary PFAs. This family has the following properties: it contains the well-known \v{C}ern\'y automata; for $n\leq 10$ it contains a binary PFA with maximal possible reset threshold; for all $n\geq 6$ it contains a PFA with reset threshold larger than the maximum known for DFAs. 
		
		Analysis of this family reveals remarkable patterns involving the Fibonacci numbers and related sequences such as the Padovan sequence. We derive explicit formulas for the reset thresholds in terms of these recurrent sequences.
		
		Asymptotically the \v{C}ern\'y family gives reset thresholds of polynomial order. We prove that PFAs in the family are not extremal for $n\geq 41$. For that purpose, we present an improvement of Martyugin's prime number construction of binary PFAs. 
\end{abstract}

		\keywords{Finite automata \and Synchronization \and \v{C}ern\'y conjecture.}

	\section{Introduction and Preliminaries}

\ifijfcs		
\begin{wrapfigure}{r}{4cm}  	
\begin{tikzpicture}
\useasboundingbox (-1,-0.5) rectangle (0,2);
\begin{scope}[shift={(1,1)}]
\Cerny{4}{0}{1.41421356237}
\end{scope}
\node at (1,-0.9) {The DFA $C_4$};
\end{tikzpicture}	
\end{wrapfigure}
\else
\begin{wrapfigure}{r}{3.5cm}  	
\begin{tikzpicture}
\useasboundingbox (-1,0) rectangle (0,0.1);
\begin{scope}[shift={(0.707,0.707)}]
\Cerny{4}{0}{1}
\end{scope}
\node at (0.707,-0.9) {The DFA $C_4$};
\fi
\end{tikzpicture}	
\end{wrapfigure}
	
%	\begin{wrapfigure}{r}{4cm}  
%		\begin{tikzpicture}
%		\useasboundingbox (-1,-0.5) rectangle (0,2);
%		\begin{scope}[shift={(1,1)}]
%		\Cerny{4}{0}{1.41421356237}
%		\end{scope}
%		\node at (1,-0.9) {The DFA $C_4$};
%		\end{tikzpicture}
%	\end{wrapfigure}
	
	The diagram on the right depicts the \emph{deterministic finite automaton} (DFA) $C_4$. Starting in any state $q$ and reading the word $ba^3ba^3b$ leads to state 1. Therefore, $w$ is called a synchronizing word for $C_4$. It is also the only synchronizing word for $C_4$ of length at most 9. 
	
	Formally, a DFA $A$ is defined as a triple $(Q,\Sigma,\delta)$.  Here $\Sigma$ is a finite alphabet, $Q$ a finite set of states, which we generally choose to be $[n]=\{1,2,\ldots,n\}$, and $\delta: Q \times \Sigma \to Q$ the transition function. For $w \in \Sigma^*$ and $q \in Q$, we define $qw$ inductively by
	$q\varepsilon = q$ and $q w a = \delta(qw,a)$ for
	$a \in \Sigma$, where $\varepsilon$ is the empty word.
	So $qw$ is the state where one ends, when starting in $q$ and reading the symbols in $w$ consecutively, and $qa$ is a shorthand notation for
	$\delta(q,a)$. We extend the transition function to sets $S\subseteq Q$ by $Sw := \{qw:q\in S\}$. A word $w \in \Sigma^*$ is called {\em synchronizing}, if a
	state $q_s \in Q$ exists such that $q w = q_s$ for all $q \in Q$. The length of a shortest word with this property is the \emph{reset threshold} of $A$. 
	
	A central conjecture in the field is the following. It is attributed to \v{C}ern\'y's paper \cite{C64} of 1964, but a more accurate acknowledgement can be found in \cite{volkov3}.
	\begin{conjecture}\label{cernyconjecture}
		Every synchronizing DFA on $n$ states admits a synchronizing word of length $\leq (n-1)^2$.
	\end{conjecture}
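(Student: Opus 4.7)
The statement is \v{C}ern\'y's conjecture, which has stood open since 1964, so I shall not pretend to deliver a proof; what follows is the skeleton of the most natural attack and a frank account of where it breaks down. Any serious attempt has to explain both why $(n-1)^2$ is the right number and why the \v{C}ern\'y automata $C_n$ are extremal, so the plan must be structured around them.

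The classical strategy is iterative compression of subsets. Given a synchronizing DFA $A=(Q,\Sigma,\delta)$ with $|Q|=n$, start with $S_0=Q$ and at each step replace the current subset $S\subseteq Q$ by $Sw$ for some word $w$ with $|Sw|<|S|$. Concatenating gives a synchronizing word of length at most $\sum_{k=2}^{n} L(k)$, where $L(k)$ bounds the shortest compression of any $k$-subset. The Frankl--Kl\'\i ma--Pin--Rystsov argument, using that every pair of distinct states can be merged by a word of length at most $\binom{n}{2}$, yields $L(k)\le\binom{n-k+2}{2}$ and hence the cubic bound $\tfrac{n^3-n}{6}$. To reach $(n-1)^2$ one needs an \emph{averaged} gain: many compression steps must cost only $O(1)$ per unit of progress, rather than $O(n)$.

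First I would try to formalise this averaging through a potential function $\Phi(S)$ on nonempty subsets, with $\Phi(Q)=(n-1)^2$, $\Phi(\{q\})=0$, and $\Phi(S)-\Phi(Sw)\ge|w|$ for a suitably chosen shrinking word $w$. The potential must be tight along the compression trajectory of $C_n$, which forces it to encode cyclic structure (for instance via a weighted sum over pairwise cyclic distances). Next I would invoke a rigidity step: if the shortest word mapping $Q$ to any proper subset has length close to $n-1$, the transition monoid must contain a near-cyclic permutation together with a near-rank-$(n-1)$ letter, i.e.\ $A$ must be close to $C_n$; for such automata the explicit analysis of the \v{C}ern\'y-like family described later in the paper should recover the bound. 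Automata far from this extremal configuration admit a short initial compression, after which one iterates with an improved per-step inequality.

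The main obstacle is precisely the reason the conjecture remains open. No choice of potential $\Phi$ is known that is simultaneously globally monotone along some compression rule and tight at $C_n$; every candidate collapses either by being too lax (recovering only the cubic bound) or too strict (being violated on some sporadic example). The rigidity step also fails in its crudest form, because many non-\v{C}ern\'y extremal or near-extremal automata exist for small $n$, among them the binary PFAs constructed in the present paper. A genuine proof would have to fuse a sharp averaged compression inequality with a classification of all extremal transition structures, and it is the interplay between these two ingredients -- and not any single technical estimate -- that has defeated every attempt so far.
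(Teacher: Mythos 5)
You are being asked to prove Conjecture~\ref{cernyconjecture}, which is precisely the \v{C}ern\'y conjecture; the paper does not prove it and in fact cites it as the central open problem of the field, noting that the best known upper bounds are still cubic in $n$. You correctly recognised this and declined to fabricate a proof, which is the honest and appropriate response. Your sketch of the subset-compression strategy, the cubic Frankl--Pin bound, the hypothetical potential-function approach, and the rigidity heuristic around the \v{C}ern\'y automata $C_n$ is an accurate summary of why the gap between $\Theta(n^3)$ and $(n-1)^2$ has resisted attack. Since there is no proof in the paper to compare against, the only assessment to make is that your account of the state of the art and of the structural obstructions is sound; no correction is needed.
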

	
	We denote the maximal possible reset threshold for a DFA on $n$ states by $d(n)$, rephrasing the conjecture to $d(n)=(n-1)^2$. The best known upper bounds are still cubic in $n$. In 1983 Pin \cite{pin} established the bound $\frac{1}{6}(n^3-n)$, using a combinatorial result by Frankl \cite{frankl}. 
	More than thirty years later, the leading constant was improved to 0.1664 by Szyku\l{}a, and subsequently to 0.1654 by Shitov \cite{shitov}.
	For a survey on synchronizing automata and the \v{C}ern\'y conjecture, we refer to \cite{volkov}.
	
	If Conjecture \ref{cernyconjecture} holds true, the bound is sharp. The DFA $C_4$ is one in a sequence found by \v{C}ern\'y \cite{C64}. For $n\geq 2$, the DFA $C_n$ has $n$ states which we denote by $Q=[n]$, a symbol $a$ sending $q$ to $q+1 \pmod n$ and a symbol $b$ sending $n$ to 1 and being the identity in all other states. The shortest synchronizing word for $C_n$ is $b(a^{n-1}b)^{n-2}$ of length $(n-1)^2$, so that $d(n)\geq (n-1)^2$.
	
	The picture changes drastically if we consider \emph{partial finite automata} (PFAs). In a PFA, the transition function is allowed to be partial. This means that $qa$ may be undefined for $q\in Q$ and $a\in\Sigma$. If $q\in S\subseteq Q$ and $qw$ is undefined, then $Sw$ is undefined as well. In this setting a word $w$ is called synchronizing for a PFA if there exists a $q_s\in Q$ such that $qw$ is defined and $qw=q_s$ for all $q\in Q$. 
Our notion of synchronization for PFAs is equivalent to D1- and D3-direction, and to careful synchronization as in \S 6.2 of \cite{V16}, but not to D2-direction and exact synchronization \cite{Shabana_2019}. The last two notions allow $qw$ to be undefined.

%In \cite{V16}, it is claimed that the length of the shortest D2-directing word of a PFA with $n$ states does not exceed the largest reset threshold for PFAs with $n$ states, but this requires $n \ge 4$ as a condition ($n \ge 5$ for binary PFAs). The difference between D2-direction and our notion of synchronization is that a word $w$ such that $sw$ is undefined for all states $s$ is D2-directing, too. One can show that for PFAs with $n$ states, the shortest word $w$ which has this property has length at most $n(n+1)/2$ if it exists, and that for each $n \ge 2$, there exists a non-synchronizing binary PFAs with $n$ states which reaches this upper bound. Since $n(n+1)/2 \le (n-1)^2$ if $n \ge 5$, the largest length of the shortest D2-directing word for (binary) PFAs with $n$ states does not exceed the largest reset threshold for such PFAs for $n \ge 5$. For each $2 \le n \le 4$, one can compare the value of $n(n+1)/2$ with the largest reset threshold of PFAs with $n$ states, which can be found in \cite{BDZ19}. The values are both $10$ for $n=4$ (reset threshold $10$ requires $3$ symbols), and for $n=2$ and $n=3$, the value of $n(n+1)/2$ is bigger.

	For PFAs the maximal reset thresholds grow asymptotically like an exponential function of $n$, in contrast with the polynomial growth for DFAs. Also the behaviour in terms of alphabet size is different. The upper bound of Conjecture \ref{cernyconjecture} is attained by binary DFAs. For PFAs there is evidence that the alphabet size has to grow with $n$ to attain the maximal reset thresholds \cite{BDZ19}. Still, also binary PFAs give exponentially growing reset thresholds. We denote the maximal values by $p(n,2)$. A binary PFA attaining the maximal reset threshold is called \emph{extremal}. For $2\leq n \leq 10$ the values as found in \cite{BDZ19} are given below. For $n\geq 11$, the maximum is unknown.
\begin{center}	
%\vspace{-3pt}
\begin{tikzpicture}[x=6mm,y=-5mm]
% draw frame header
\fill[tableIndex] (0,-1) rectangle (9,0);
\fill[tableFrame] (-2.45,-1) rectangle (0,1);
\draw[white] (-2.45,0) -- (0,0);
\node[white,anchor=east] at (0,-0.44) {{\mathversion{bold}$n\,\mathstrut$}};
\node[white,anchor=east] at (0,0.56) {{\mathversion{bold}$p(n,2)\mathstrut$}};
% draw numbers
\foreach \n/\p in {2/1, 3/4, 4/9, 5/16, 6/26, 7/39, 8/55, 9/73, 10/94} {
  \draw[tableFrame,very thick] (\n-2,-1) -- (\n-2,1);
  \node at (\n-1.5,-0.44) {$\n\mathstrut$};
  \node at (\n-1.5,0.56) {$\p\mathstrut$};
}
% draw frame boundary
\draw[tableFrame,very thick] (-2.45,-1) -- (9,-1) -- (9,1) -- (-2.45,1) -- cycle (0,0) -- (9,0) ; 
\end{tikzpicture}
%\vspace{-5pt}
\end{center}
	For all $2\leq n\leq 10$, these reset thresholds are attained by members of what we will call the \emph{\v{C}ern\'y family}. This family of PFAs $C_n^c$ will be introduced in Section~\ref{sec:cernyfamily}. 

In Section~\ref{sec:race} we relate the problem of finding reset thresholds for this family to a minimization problem involving racing pawns. A recursive solution for this problem is presented in Section~\ref{sec:recursive_asymptotic}, from which it follows that the maximal reset thresholds in the family grow like $n^2\log(n)$. In Section \ref{sec:solve_fc}, we give an exact solution of the minimization problem in terms of recurrent sequences. In addition, we determine the number of different optimal races. 
	In Section~\ref{sec:est_fc} we estimate the solution more precisely and find the asymptotic size of a shortest synchronizing word for $C_n^c$ for fixed $c$. Furthermore, we estimate the optimal choice of $c$ asymptotically in terms of $n$.
\ifijfcs
\else	
In Section~\ref{sec:drops}, we discuss odd behavior in the optimal choice of $c$ which emerged from computations. 
\fi 
	
	We end with the presentation of another construction of binary PFAs in Section~\ref{sec:Primes}, to defeat the \v{C}ern\'y family for large $n$. We show in Section~\ref{sec:LargerRT} that this construction gives binary PFAs with larger reset thresholds than $C_n^c$ for $n \ge 41$. It is unknown to us if there are constructions that beat the \v{C}ern\'y family for some $n<41.$ Our construction is an improvement of Martyugin's prime number construction of binary PFAs \cite{Mar08}, which has reset threshold $\exp{\big((1+o(1))\sqrt{n\ln(n)/2}\big)}$. We show in Section~\ref{sec:Primes} that the asymptotic behavior of the reset theshold of our construction is $\exp{\big((1+o(1))\sqrt{n\ln(n)}\big)}$, which is comparable to that of Martyugin's prime number construction of ternary PFAs \cite{Mar08}. We do this by providing sufficiently accurate estimates of the reset theshold for all three prime number constructions. To our knowledge, estimates with this level of accuracy have not been given before.
	
	The current paper extends the earlier work in \cite{BCD21}. The proof of Theorem \ref{thm:reduction} has been formalized and the content has been extended by the results in sections \ref{sec:solve_fc}, \ref{sec:est_fc}, \ifijfcs\else\ref{sec:drops},\fi \ref{sec:Primes} and \ref{sec:LargerRT}.

	\section{Extending the \v{C}ern\'y Sequence to a Family}\label{sec:cernyfamily}
	The \v{C}ern\'y family of binary PFAs, denoted by $C_n^c$, contains the \v{C}ern\'y sequence $C_n = C_n^0$ of binary DFAs. For fixed $c\in\mathbb{N}$ and $n\geq c+2$, we define the PFA $C_n^c$ with $n$ states and alphabet $\Sigma=\{a,b\}$ by
	\[
	qa = \left\{ \begin{array}{ll} 
	q+1\quad & 1 \le q \le n-c-1 \\
	\bot     & n-c \le q \le n-1 \\
	1        & q = n
	\end{array} \right. \qquad
	qb = \left\{ \begin{array}{ll} 
	q\quad & 1 \le q \le n-c-1 \\
	q+1  \quad      & n-c \le q \le n-1\\
	1        & q = n
	\end{array} \right.
	\]
	The PFA $C_n^c$ is depicted in Figure~\ref{fig:PFA_Pn^c} for $n=8$ and $c=2$, next to the DFA $C_n^{0}$ of \v{C}ern\'y.
	By analyzing this family, we obtain our main results.
	In particular, we will conclude that $ p(n,2)>(n-1)^2$ if and only if $n\geq 6$.
\ifijfcs
\else
\newpage
\fi	
	\begin{figure}[ht!]
		\centering{
			\begin{tikzpicture}
			%\useasboundingbox (0,-2.5) rectangle (6,2.5);
			\Cerny{8}{0}{2}
			\begin{scope}[shift={(6,0)}]
			\Cerny{8}{2}{2}
			\end{scope}
			\end{tikzpicture}
		}
		\caption{The DFA $C_8^0$ and the PFA $C_8^2$}
		\label{fig:PFA_Pn^c}
		%\vspace{-5pt}		
	\end{figure}
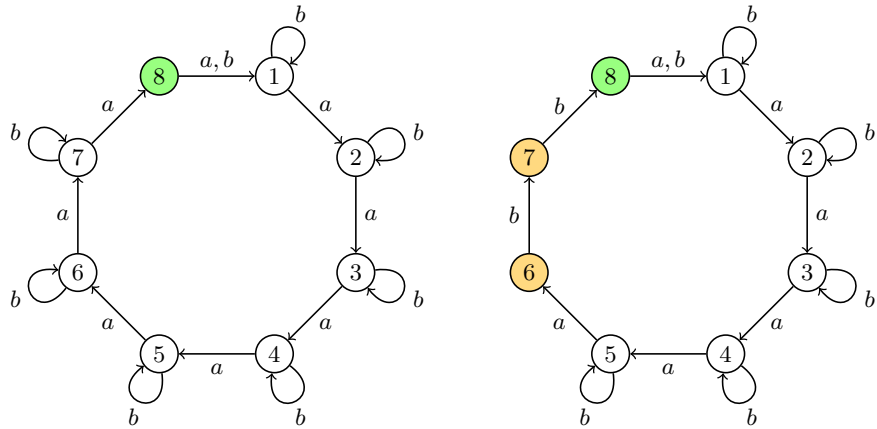

	Before deriving general formulas for the reset thresholds, we present the values for $2\leq n\leq 15$ and $0\leq c\leq 4$ in the following table. Independent of the analysis that will follow, these values were found by an algorithm computing the reset threshold for a given PFA.
\begin{center}	
%\vspace{-3pt}
\begin{tikzpicture}[x=6mm,y=-5mm]
% draw frame header
\fill[tableIndex] (0,-1) rectangle (15.5,0);
\fill[tableFrame] (-2,-1) rectangle (0,5);
\draw[white] (-2,0) -- (0,0);
\node[white,anchor=east] at (0,-0.44) {{\mathversion{bold}$n\,\mathstrut$}};
\foreach \c [count=\cpp] in { 0, 1, 2, 3, 4 } {
  \node[white,anchor=east] at (0,\cpp-0.44) {{\mathversion{bold}$c=\c\mathstrut$}};
  \fill[tableVoid] (0,\c) rectangle (\c,\cpp);
}
% draw values < 100
\fill[tableBold] (0,0) rectangle (4,1) rectangle (8,2) rectangle (9,3);
\foreach \n/\subopts/\opts in {2/{{}}/{1}, 3/{{},2}/{4}, 4/{{},7,3}/{9}, 5/{{},15,10,4}/{16}, 6/{25,{},21,13,5}/{{},26}, 7/{36,{},35,27,16}/{{},39}, 8/{49,{},52,44,33}/{{},55}, 9/{64,{},72,65,53}/{{},73}, 10/{81,93,{},89,78}/{{},{},94}} {
  \draw[tableFrame,very thick] (\n-2,-1) -- (\n-2,5);
  \node at (\n-1.5,-0.44) {$\n\mathstrut$}; 
  \foreach \subopt [count=\cpp] in \subopts {
    \node at (\n-1.5,\cpp-0.44) {$\subopt\mathstrut$};
  }
  \foreach \opt [count=\cpp] in \opts {
    \node at (\n-1.5,\cpp-0.44) {{\mathversion{bold}$\opt\mathstrut$}};
  }
}
% draw values >= 100
\begin{scope}[shift={(9,0)},xscale=1.3,shift={(-9,0)}]
\fill[tableBold] (9,2) rectangle (12,3) (11,3) rectangle (14,4);
\foreach \n/\sopts/\opts in {11/{100,116,{},115,106}/{{},{},119}, 12/{121,141,{},144,136}/{{},{},146}, 13/{144,168,{},{},169}/{{},{},176,176}, 14/{169,197,208,{},206}/{{},{},{},211}, 15/{196,228,242,{},246}/{{},{},{},248}} {
  \draw[tableFrame,very thick] (\n-2,-1) -- (\n-2,5);
  \node at (\n-1.5,-0.44) {$\n\mathstrut$};  
  \foreach \sopt [count=\cpp] in \sopts {
    \node at (\n-1.5,\cpp-0.44) {$\sopt\mathstrut$};
  }
  \foreach \opt [count=\cpp] in \opts {
    \node at (\n-1.5,\cpp-0.44) {{\mathversion{bold}$\opt\mathstrut$}};
  }
}
\end{scope}
% draw frame boundary
\draw[tableFrame,very thick] (-2,-1) -- (15.5,-1) -- (15.5,5) -- (-2,5) -- cycle (0,0) -- (15.5,0);
\draw[tableFrame] foreach \c in {1,2,3,4} { (0,\c) -- (15.5,\c) };
\end{tikzpicture}
%\vspace{-5pt}
\end{center}
	Values in boldface represent the maximal reset threshold in the family for a given $n$. For $n=13$, the maximum is attained twice, see also Figure~\ref{fig:C13_2&3}. Later we will see that for large $n$, the optimal $c$ is close to $n/2$. For $2\leq n\leq 10$, these maxima exactly match the values of $p(n,2)$. This means that the \v{C}ern\'y family contains a binary PFA on $n$ states with maximal possible reset threshold for all $2\leq n\leq 10$. In fact, for $6\leq n\leq 10$, there exists only one binary PFA reaching this maximum \cite{BDZ19}.
	
	The first line of the table shows the squares $(n-1)^2$ for the \v{C}ern\'y sequence $C_n^0$. To give explicit expressions for subsequent lines is much harder. The order of growth is still quadratic for every $c$, but no formula of the form $a_2n^2+a_1n+a_0$ exists in general, as we will see later in this paper. 
	
\begin{figure}\label{fig:C13_2&3}
%\vspace{-3pt}
\centering{
\begin{tikzpicture}
%\useasboundingbox (0,-2.8) rectangle (6,2.7);
\Cerny[9]{13}{2}{2}
\begin{scope}[shift={(6,0)}]
\Cerny[9]{13}{3}{2}
\end{scope}
\end{tikzpicture}
}
\caption{The PFAs $C^2_{13}$ and $C^3_{13}$ both synchronize in $176$ steps.}
%\vspace{-5pt}
\end{figure}
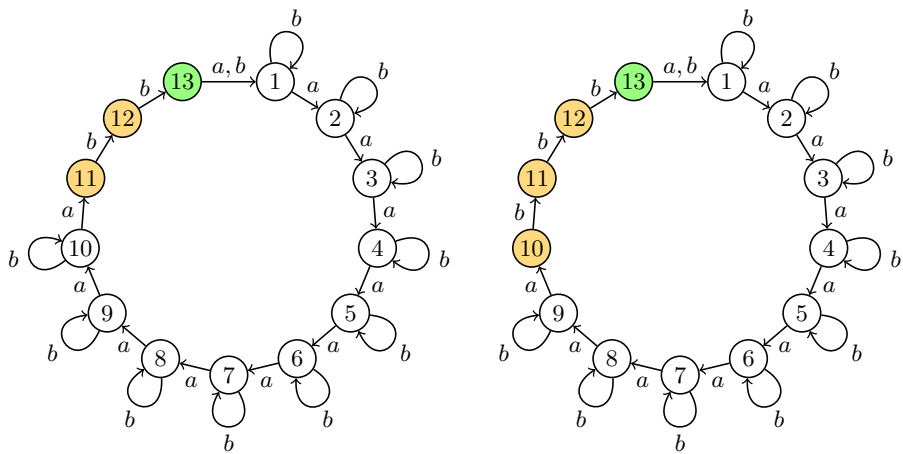

	We now turn to the analytic derivation of reset thresholds for the \v{C}ern\'y family. We use the following interpretation of synchronization: let a pawn be placed in every state of a PFA, let them simultaneously follow the same word $w$ and let two of them merge if they are in the same state after reading some prefix of $w$. A synchronizing word is then a word that merges all pawns.

	\section{Reduction to a Pawn Race Problem}\label{sec:race} 

Our first result reduces the question of synchronizing $C_n^c$ to the following problem.
\begin{problem}[Pawn race problem]\label{prob:pawnproblem}
	We have $n$ pawns on the \emph{integers} $1,2,\ldots ,n$.
	In every iteration, every pawn has the choice to move from its location $k$ to $k+1$ or to stay at $k$. Moving costs $c+1$, staying costs $c$. After every iteration, if two pawns are in the same position, they merge. 
	What is the minimum cost for which it is possible to merge all the pawns?	
\end{problem}

\begin{theorem}\label{thm:reduction} Let $f_c(n)$ be the solution to Problem \ref{prob:pawnproblem} and denote $n' = n-c-1$. The reset threshold of $C_n^c$ is equal to
	\[
	n'(n'-1)+c+1+f_c(n').
	\]
\end{theorem}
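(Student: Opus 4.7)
My plan is to reduce synchronization of $C_n^c$ to Problem~\ref{prob:pawnproblem} by a careful count of the $a$- and $b$-letters in any shortest synchronizing word. The starting observation is that $a$ is undefined on the tail states $\{n-c,\ldots,n-1\}$, so whenever the active configuration contains a tail state the next letter must be $b$. Starting from the full set $\{1,\ldots,n\}$, the pawn at state $n-c$ is trapped in the tail and requires exactly $c+1$ consecutive $b$'s to reach state~$1$ along $n-c\to n-c+1\to\cdots\to n\to 1$. Thus the first $c+1$ letters are forced to be $b$'s, and the resulting configuration is $\{1,\ldots,n'\}$, with all pawns from $\{1\}\cup\{n-c,\ldots,n\}$ merged at state~$1$. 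This accounts for the $c+1$ term of the formula.

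From the configuration $\{1,\ldots,n'\}$ I would count $a$- and $b$-letters in the remainder separately. Each $a$ shifts the active main states by one; if the previous configuration contained $n'$, a pawn is pushed to the tail and must be returned to state~$1$ by $c+1$ subsequent $b$'s (or the length-equivalent block $b^c a$ that incorporates the next main shift). A potential-function argument analogous to the classical \v{C}ern\'y lower bound should show that any synchronizing word uses at least $n'(n'-1)$ $a$-letters---the cumulative main-loop progress required to bring $n'$ distinct pawns together---and a matching upper-bound construction will produce words achieving this. This yields the $n'(n'-1)$ term.

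The remaining $b$-letters encode a pawn-race strategy. I would associate each of the $n'$ surviving pawns with a race pawn, defining its race position as its current main state plus $n'$ times the number of completed tail excursions. Inside each tail-excursion block the $c+1$ $b$-letters contribute cost $c+1$ to the pawn making the excursion (a \emph{move} in the race) and cost $c$ to each pawn that remains in the main cycle (a \emph{stay}). Two PFA pawns merge exactly when their race positions coincide, so the number of $b$-letters beyond the initial $c+1$ equals the optimal race cost $f_c(n')$, contributing the final term.

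The main obstacle will be the normalization step together with the lower bound on $a$-letters. I expect to show by local length-preserving rewrites---commuting $a$'s past $b$'s that act trivially on the relevant states, and regrouping $b$-letters into complete tail-excursion blocks---that any synchronizing word can be transformed, without increasing length, into the canonical form ``$b^{c+1}$ followed by alternating main shifts and tail-excursion blocks''. From this canonical form the total-length bound follows directly from the separate $a$- and $b$-counts, and the upper bound is realised by expanding any optimal race strategy into such a word.
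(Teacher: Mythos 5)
Your decomposition matches the paper's: force the $b^{c+1}$ prefix, reduce to the configuration $[n']$, normalize the rest into iteration blocks, and count $a$'s and $b$'s separately with the $b$-count reducing to the pawn race. You also correctly flag the normalization as the hard part; the paper does it via the pawn-displacement Lemmas~\ref{lem:resigner} and \ref{lem:chaser} (showing that removing or shifting a resigner/chaser shortens the required word by $c$ resp.\ $c+1$), which is a more global argument than local commutations of $a$'s past trivially-acting $b$'s, and it is not obvious that purely local rewrites suffice.

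There is, however, a genuine gap in your $b$-count. The canonical iteration block has the form $a\,w_{n'}\,a\,w_{n'-1}\cdots a\,w_1\,a\,w_n$, and the terminal segment $w_n$ is \emph{forced} to be $b^{c+1}$ whenever $w_{n'}=b^{c+1}$, i.e.\ whenever the pawn currently at position $n'$ moves: that pawn exits the main cycle at the start of the block, re-enters at state $1$, and by the end of the block has been pushed to the tail again, so it must be brought back at an extra cost of $c+1$ letters $b$ beyond what the race model charges. Your bookkeeping assigns each pawn cost $c$ or $c+1$ per iteration and concludes the extra $b$-count is exactly $f_c(n')$, but that is only correct if $w_n = \varepsilon$ in every iteration, i.e.\ if the never-moving yellow jersey sits at position $n'$. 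Moreover, the two lower bounds $\ge n'(n'-1)$ for $a$'s and $\ge c+1+f_c(n')$ for $b$'s are each achieved for the \emph{same} choice of finishing state $j=n'$: that choice lets one drop $a^{n'-1}$ at the end (minimizing the $a$-count) and makes the pawn at $n'$ a permanent resigner (so $w_{n'}=b^c$, hence $w_n=\varepsilon$, avoiding the extra $b$'s). Without the observation that both minima occur at $j=n'$, the two separate counts do not combine into the stated formula. The paper handles this explicitly in Lemma~\ref{claim:iterationword}(iii) and the closing paragraph of the proof.
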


The rest of this section will be devoted to the proof of Theorem \ref{thm:reduction}.

\begin{lemma} \label{clm:sync}
	$C_n^c = (Q,\Sigma,\delta_n)$ has a synchronizing word. 
\end{lemma}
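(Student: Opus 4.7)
My plan is to show that $C_n^c$ is synchronizing by exhibiting an explicit word, building on the observation that $C_n^c$ contains a structural copy of the Černý DFA $C_{n-c}$ once we restrict to the subset $Q' := \{1,2,\ldots,n-c-1,n\}$ and replace the letter $a$ by the compound word $ab^c$.

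First, I would verify that the word $b^c$ collapses the full state set $\{1,\ldots,n\}$ onto $Q'$. The letter $b$ fixes $\{1,\ldots,n-c-1\}$, shifts each state in $\{n-c,\ldots,n-1\}$ forward by one, and sends $n$ to $1$; iterating $c$ times, each state in $\{n-c+1,\ldots,n\}$ traverses through its successors and ends at $1$, while the single state $n-c$ arrives at $n$. So the image of $\{1,\ldots,n\}$ under $b^c$ is exactly $Q'$.

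The main step is the claim that on $Q'$ the compound letters $a' := ab^c$ and $b' := b$ implement the transitions of the Černý DFA $C_{n-c}$ under the bijection $Q'\to\{1,\ldots,n-c\}$ that identifies $n$ with $n-c$. I would check this case-by-case: $a$ is defined on every state of $Q'$ (the undefined region $\{n-c,\ldots,n-1\}$ is disjoint from $Q'$), and evaluating $ab^c$ on each element of $Q'$ produces the cyclic shift $1\to 2\to\cdots\to n-c-1\to n\to 1$, while $b$ is the identity on $\{1,\ldots,n-c-1\}$ and sends $n\to 1$. These are exactly the Černý transitions on $n-c$ states. The key sub-step to check is that all intermediate states visited during one execution of $ab^c$ (which land in $\{2,\ldots,n-c\}\cup\{1\}$) lie in the domain of $b$; this is immediate since $b$ is defined on every state of $C_n^c$.

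Granted the simulation, I would invoke the classical synchronizing word $b(a^{n-c-1}b)^{n-c-2}$ for $C_{n-c}$ and substitute $a\mapsto ab^c$, concluding that the explicit word
\[
b^{c+1}\bigl((ab^c)^{n-c-1}\, b\bigr)^{n-c-2}
\]
synchronizes $C_n^c$. The only genuinely delicate point is ensuring careful definedness throughout the simulation, which reduces to the observation in the previous paragraph; beyond that, the argument is a direct substitution into a known synchronizing word, so no major obstacle is anticipated.
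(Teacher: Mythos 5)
Your proposal is correct and takes essentially the same approach as the paper: both reduce $C_n^c$ to the Černý DFA $C_{n-c}^0$ via compound letters, the only cosmetic difference being that the paper takes $\tilde a = b^c a$ acting on $[n-c]$ whereas you take $a' = ab^c$ acting on $\{1,\dots,n-c-1,n\}$. The explicit synchronizing word you write down is a nice (but optional) addition.
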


\begin{proof}
	We denote $[k] := \{1,2,\ldots,k\}$ and note that $Qb^{c+1} = [n-c-1] \subset [n-c]$. Define $\tilde a = b^ca$ and $\tilde b = b^{c+1}$. Then $\tilde a$ acts as a cyclic permutation on $[n-c]$ and $\tilde b$ sends $n-c$ to 1 and is the identity otherwise. Here we recognize the \v{C}ern\'y automaton $C_{n-c}^0$, 
	so that $C_n^c$ is synchronizing.
	\qed\end{proof}

Inspired by the proof of Lemma \ref{clm:sync}, we define the PFA $C_{n}^{*}=([n],\Gamma,\eta_n)$ with state set $[n]$ and alphabet $\Gamma = \{a, \tilde{a},\tilde{b}\}$. The transition function is defined by 
\begin{align}
\begin{cases}qa=q\tilde a = q+1\ \text{and}\ q\tilde b = q &\qquad \text{if}\ q\neq n,\\
na=\bot, n\tilde a =n\tilde b= 1.
\end{cases}
\end{align}
See Figure \ref{fig:PFA_C*} for an illustration. Observe that restricting the transition function $\delta_n$ of $C_n^c$ to $[n-c]$ relates to the PFA $C_{n-c}^*$ in the following way:
\begin{align}\label{eq:transitionrelation}
\delta_n(q,a)=\eta_{n-c}(q,a),\qquad \delta_n(q,b^ca)=\eta_{n-c}(q,\tilde a),\qquad \delta_n(q,b^{c+1})=\eta_{n-c}(q,\tilde b) 
\end{align}
for all $q\in [n-c]$. By substituting $\tilde a = b^ca$ and $\tilde b = b^{c+1}$, a word $w\in\Gamma^*$ naturally corresponds to a word $s_c(w)\in\Sigma^*$ with the property $\delta_n(q,s_c(w)) = \eta_{n-c}(q,w)$ for all $q\in [n-c]$. We define the {\em $c$-weighted length}
of a word $w\in\Gamma^*$ as the length of $s_c(w)$, which we denote by $|w|_c$. 

In Corollary \ref{cor:1.4} below, we prove that a synchronizing word of minimal $c$-weighted length for $C_{n-c}^*$ corresponds to a shortest synchronizing word for $C_{n}^c$. For instance, consider the PFA $C_6^*$ as in Figure \ref{fig:PFA_C*} and take $c=0$. Then $\tilde a=a$ and $\tilde b=b$ have weight $1$ and the resulting PFA is equivalent to the \v{C}ern\'y automaton $C_6^0$. If we instead take $c=2$, then $\tilde a=b^2a$ and $\tilde b=b^3$ both have weight 3 and a word of minimal $2$-weighted length for $C_6^*$ corresponds to a shortest synchronizing word for the PFA $C_8^2$ given in Figure \ref{fig:PFA_Pn^c}.

\begin{figure}[ht!]
	\vspace{-7pt}
	\centering{
		\begin{tikzpicture}
		\Cerny{6}{0}{1.7}
		\begin{scope}[shift={(6,0)}]
		\Cernystar{6}{0}{1.7}
		\end{scope}
		\end{tikzpicture}
	}
	\caption{The DFA $C_6^0$ and the PFA $C_6^*$. If $\tilde a$ and $\tilde b$ have weight 3, and $a$ has weight 1, then a synchronizing word of minimum weighted length for $C_6^*$ corresponds to a shortest synchronizing word for $C_8^2$.}
	\label{fig:PFA_C*}
	\vspace{-7pt}		
\end{figure}
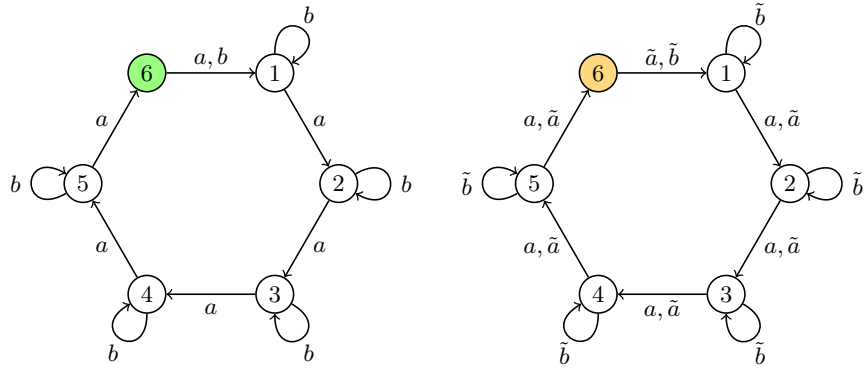

Let $S \subseteq Q$ and $w \in \Sigma^{*}$. We say that \emph{$w$ has minimum length for $Sw = T$} if $Sw = T$, and if $Sv = T$ implies $|v| \ge |w|$ for all $v \in \Sigma^{*}$. %We say that \emph{$w$ has minimum length for $|Sw| = t$} if $|Sw| = |Sv| = t$ implies $|v| \ge |w|$ for all $v \in \Sigma^{*}$. 

\begin{lemma}\label{clm:obs1}
	Consider $C_n^c$. Let $S \subseteq Q=[n]$ and $T \subsetneq [n-c]$. Take $w \in \Sigma^{*}$ such that $w \ne \varepsilon$, and suppose that %either $w$ has minimum length for $|Sw| = |T|$, or 
	$w$ has minimum length for $Sw = T$. 
	\begin{enumerate}[(i)]
		\item\label{claim1.1} If $S=Q$, then $w$ starts with $b^{c+1}$.
		\item\label{claim1.2} If $S \subseteq [n-c]$ and $n-c\in S$, then $w$ starts with $b^ca$ or $b^{c+1}$.
		\item\label{claim1.3} If $S \subseteq [n-c]$ and $n-c\not\in S$, then $w$ starts with $a$.
	\end{enumerate}	
\end{lemma}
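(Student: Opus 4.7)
The plan is to determine the initial letters of $w$ in each case, using two structural facts about $C_n^c$: (a) $a$ is undefined precisely on the ``bad zone'' $\{n-c,\ldots,n-1\}$, and (b) $b$ fixes $[n-c-1]$ pointwise, shifts states in $\{n-c,\ldots,n-1\}$ up by one, and sends $n$ to $1$. Iterating (b) yields $Sb^j=(S\cap[n-c-1])\cup\{n-c+j\}$ for $1\le j\le c$ whenever $n-c\in S$, and $Sb^{c+1}=(S\cap[n-c-1])\cup\{1\}\subseteq[n-c]$; similarly, $Qb^j=[n-c-1]\cup\{n-c+j,\ldots,n\}$ for $1\le j\le c$ and $Qb^{c+1}=[n-c-1]$.

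Case (iii) is the quickest: since $S\subseteq[n-c-1]$, $b$ fixes $S$, so any leading $b$ of $w$ could be deleted without changing $Sw$, contradicting the minimality of $w$ (the borderline $T=S$ does not arise, since then the minimum length would be $0$, violating $w\ne\varepsilon$). Hence $w$ starts with $a$. For (ii), $n-c\in S$ makes $Sa$ undefined, so $w$ begins with a run $b^k$; the formulas above show that for $1\le j\le c-1$ the element $n-c+j\in Sb^j$ lies in the bad zone, making $Sb^ja$ undefined, while for $1\le k\le c$ we have $n-c+k\in Sb^k\setminus[n-c]$, so $w=b^k$ cannot equal $T$. Thus the initial $b$-run has length exactly $c$ (followed by $a$) or at least $c+1$, giving the two allowed prefixes.

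For (i), when $c=0$, $a$ is a bijection on $Q$, so a leading $a$ can be stripped without changing the image, forcing $w$ to start with $b=b^{c+1}$. When $c\ge1$, $Qa$ is undefined, and the same analysis as in (ii) narrows the possible prefixes of $w$ down to $b^ca$ or $b^{c+1}$. The main obstacle---and the step that requires care---is ruling out the prefix $b^ca$. For this I would apply (ii) one level down: if $w=b^cav$, then $Qw=[n-c]v=T$, and $v$ is itself a minimum-length word sending $[n-c]$ to $T\subsetneq[n-c]$, so (ii) with $S=[n-c]$ forces $v$ to start with either $b^ca$ or $b^{c+1}$.

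In both subcases, a strictly shorter word with the same image as $w$ contradicts minimality, using the identities $[n-c]b^ca=[n-c]$ and $[n-c]b^{c+1}=[n-c-1]$. Concretely, if $v=b^cav'''$, then $Qw=[n-c]v'''$ is also the image of $w^*=b^cav'''$, which is shorter than $w$ by $c+1$; and if $v=b^{c+1}v''$, then $Qw=[n-c-1]v''$ is also the image of $w^*=b^{c+1}v''$, again shorter by $c+1$. This excludes the prefix $b^ca$, leaving $w$ to start with $b^{c+1}$, as claimed.
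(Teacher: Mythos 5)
Your proof is correct and follows essentially the same route as the paper's: it observes that $b$ fixes $[n-c-1]$, that $(n-c)b^ja$ is undefined for $0\le j<c$, and that $(n-c)b^j\notin[n-c]$ for $1\le j\le c$, which yield (iii) and (ii), and then rules out the prefix $b^ca$ in (i) by invoking (ii) on the suffix and producing a strictly shorter word with the same image. The only cosmetic difference is that in (i) the paper stops after noting that (ii) forces the prefix $b^cab^c$ and observes $Qb^c=Qb^cab^c$ directly, while you run one level further to $b^cab^ca$ or $b^cab^{c+1}$ before finding the collision; both are valid.
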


\begin{proof} 
	Since $qb^{c+2} = qb^{c+1}$ for all states $q \in Q$, the word $w$ cannot contain $b^{c+2}$. Furthermore, $(n-c)b^{m}a$ is not defined for $m=0,1,\ldots, c-1$. %These observations together with the fact that $|Sb^m| = |S|$ for $S \subseteq [n-c]$ and $m=1,2,\ldots,c$ yield (ii) for the case where $w$ has minimum length for $|Sw| = |T|$. 
	Using the fact that $(n-c) b^m \notin [n-c]$ for $m=1,2,\ldots,c$, %instead, the case where $w$ has minimum length for $Sw = T$ of 
	statement (ii) follows.% in a similar manner.
		
	It also follows that $w$ starts with $b^c$ if $S=Q$. Suppose that $S=Q$ and $w$ starts with $b^ca$. Since $Qb^ca = [n-c]$, we infer from (ii) that $w$ starts with $b^cab^c$. But this contradicts the assumption that $w$ has minimum length, because $Qb^{c} = [n-c-1] \cup \{n\} = Qb^cab^c$. This yields (i).
	
	Statement (iii) follows by observing that $Sb=S$ in this case. 
	\qed\end{proof}

This lemma gives a corollary to relate words in $C_n^c = ([n],\{a,b\},\delta_n)$ to words in $C_{n-c}^*=([n-c],\{a,\tilde a,\tilde b\},\eta_{n-c})$. Essentially, for $S\subseteq [n-c]$ there is a one-to-one correspondence between a word $w$ of minimum length for $\delta_n(S,w)=T$ and a word $w'$ of minimum $c$-weighted length for $\eta_{n-c}(S,w')=T$.

\begin{corollary}\label{cor:1.4}
	Let $c \in \N$, and suppose that $S = [n]$ or $S \subseteq [n-c]$. Let $T \subsetneq [n-c]$. Suppose that $w \in \Sigma^{*}$ has minimum length for $\delta_n(S,w) = T$. Then there exists $w'\in\Gamma^{*}$ with $s_c(w') = w$ and minimum $c$-weighted length for $\eta_{n-c}([n-c]\cap S,w')=T$.
\end{corollary}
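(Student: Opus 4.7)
The plan is to induct on $|w|$, building $w'$ block by block by repeatedly applying Lemma~\ref{clm:obs1} to identify the next prefix of $w$ among $\{b^{c+1}, b^c a, a\}$ and encoding each such block as the corresponding single letter of $\Gamma$, namely $\tilde b$, $\tilde a$, or $a$.

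The base case $|w| = 0$ forces $S = T$, and since $T \subsetneq [n-c]$ the alternative $S = [n]$ is excluded; hence $[n-c] \cap S = S = T$ and $w' = \varepsilon$ works. For the inductive step $|w| \ge 1$, Lemma~\ref{clm:obs1} dictates the initial prefix $\alpha \in \{b^{c+1}, b^c a, a\}$ of $w$, according to whether $S = [n]$, $n-c \in S \subseteq [n-c]$, or $n-c \notin S \subseteq [n-c]$. Let $\beta \in \{\tilde b, \tilde a, a\}$ be the unique letter with $s_c(\beta) = \alpha$. The key small verification is that $S\alpha = \eta_{n-c}([n-c] \cap S, \beta)$ and $S\alpha \subseteq [n-c]$ in each of the three cases, using equation~\eqref{eq:transitionrelation} when $S \subseteq [n-c]$ and the explicit computation $[n]b^{c+1} = [n-c-1] = [n-c]\tilde b$ when $S = [n]$. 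Writing $w = \alpha w_1$, minimality of $w$ forces $w_1$ to have minimum length for $\delta_n(S\alpha, w_1) = T$, so the induction hypothesis applied to $w_1$ produces $w_1' \in \Gamma^*$ with $s_c(w_1') = w_1$ of minimum $c$-weighted length for $\eta_{n-c}(S\alpha, w_1') = T$; then $w' := \beta w_1'$ satisfies $s_c(w') = w$, $\eta_{n-c}([n-c]\cap S, w') = T$, and $|w'|_c = |w|$.

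The main obstacle is to show that $w'$ itself has minimum $c$-weighted length. For any competitor $v' \in \Gamma^*$ with $|v'|_c < |w|$ and $\eta_{n-c}([n-c] \cap S, v') = T$, the idea is to transport $v'$ back via $s_c$ and contradict the minimality of $w$. When $S \subseteq [n-c]$ this is immediate: the extension of equation~\eqref{eq:transitionrelation} to words gives $\delta_n(S, s_c(v')) = \eta_{n-c}(S, v') = T$ with $|s_c(v')| = |v'|_c < |w|$. The delicate case is $S = [n]$, because $\delta_n([n], s_c(v'))$ need not agree with $\eta_{n-c}([n-c], v')$ due to the extra state $n \notin [n-c]$. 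I would resolve this by analyzing the first letter of a shortest such $v'$: it cannot be $a$ (undefined at $n-c \in [n-c]$ in $C_{n-c}^{*}$), and may be assumed not $\tilde a$ (which fixes $[n-c]$ pointwise and could be stripped for an even shorter word); so it must be $\tilde b$, after which $\delta_n([n], b^{c+1}) = [n-c-1] = \eta_{n-c}([n-c], \tilde b)$, and the remaining computation happens entirely within $[n-c]$, reducing to the already-settled sub-case and yielding the desired contradiction.
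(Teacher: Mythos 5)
Your proof is correct and follows essentially the same route as the paper: induct on $|w|$, use Lemma~\ref{clm:obs1} to peel off the initial block $\alpha\in\{b^{c+1},b^ca,a\}$, check $S\alpha\subseteq[n-c]$ and $S\alpha=\eta_{n-c}([n-c]\cap S,\beta)$, recurse, and establish minimality by transporting any shorter competitor $v'$ back through $s_c$. The paper states this last transport as a one-liner, while you explicitly handle $S=[n]$; that extra care is reasonable since the identity $\delta_n(q,s_c(v'))=\eta_{n-c}(q,v')$ is only stated for $q\in[n-c]$.

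One small slip in that extra analysis: $\tilde a$ does \emph{not} fix $[n-c]$ pointwise. It acts on $[n-c]$ as a cyclic permutation ($q\mapsto q+1$ for $q\ne n-c$, and $n-c\mapsto 1$). It only fixes $[n-c]$ \emph{as a set}, which is exactly what the stripping argument needs: $\eta_{n-c}([n-c],\tilde a)=[n-c]$, so a leading $\tilde a$ can be removed without changing $\eta_{n-c}([n-c],\cdot)$. In fact, even stripping is more than necessary: a direct computation gives $\delta_n([n],b^ca)=[n-c]=\eta_{n-c}([n-c],\tilde a)$, so the correspondence survives a first letter $\tilde a$ as well as $\tilde b$, and you can transport $v'$ back without any case split on its first letter.
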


\begin{proof}
	If $w=\varepsilon$, then $S = T \subsetneq [n-c]$. Therefore, $[n-c]\cap S=T$ and $w'=\varepsilon$ suffices. So assume that $w \ne \varepsilon$. From Lemma \ref{clm:obs1}, it follows that $w$ has a prefix $u \in \{a,b^c a,b^{c+1}\}$. Let $u'\in\Gamma$ satisfy $s_c(u')=u$. By (\ref{eq:transitionrelation}), we have $\delta_n(S,u) = \eta_{n-c}([n-c]\cap S,u')\subseteq [n-c]$ if $S\subseteq [n-c]$. The same is true if $S=Q$, since then $u=b^{c+1}$. By induction, we find $w'$ such that $s_c(w') = w$ and $\eta_{n-c}([n-c]\cap S,w')=T$. 
	
	If $v'\in\Gamma^*$ were a word with $\eta_{n-c}([n-c]\cap S,v')=T$ and $|v'|_c<|w'|_c$, then we would have $\delta_n(S,s_c(v')) =T$ and $|s_c(v')|<|w|$, contradicting the minimality of $|w|$. So $|w'|_c$ is minimal.

\end{proof}

%\begin{corollary} \label{cor:1.4}
%	Let $c \in \N$, and suppose that either $S = Q$ or $S \subseteq [n-c]$. Let $T \subsetneq [n-c]$ and $w \in \Sigma^{*}$, and suppose that either $w$ has minimum length for $|Sw| = |T|$, or $w$ has minimum length for $Sw = T$. 
%	
%	Then there exists a word $w' \in \Gamma^{*}$ for which $s_c(w')=w$, such that $[n-c]w' = Sw$ in $C^{*}_{n-c}$ if $S = Q$, and $Sw' = Sw$ in $C^{*}_{n-c}$ if $S \subseteq [n-c]$. Furthermore, $w'$ has minimum $c$-weighted length as such.
%\end{corollary}
%
%\begin{proof}
%	If $w = \varepsilon$, then $w' = \varepsilon$ suffices. So assume that $w \ne \varepsilon$. From Lemma \ref{clm:obs1}, it follows that $w$ has a prefix $u \in \{a,b^c a,b^{c+1}\}$. From the proof of Lemma \ref{clm:sync}, we infer that $Su \subset [n-c]$. So Corollary \ref{cor:1.4} follows by induction.
%	\qed\end{proof}

We will now consider $C_n^*$ and investigate words which are applied on a subset of the state set $[n]$. We see a subset of the state set $[n]$ as a collection of pawns on those states. Symbols $a$ and $\tilde{a}$ move these pawns clockwise without merging, but if both $n$ and $1$ are occupied by a pawn, then symbol $\tilde{b}$ merges both pawns. This is the only possibility for pawns to merge. We call a pawn a \emph{chaser} if its next merge will be with a pawn in front of it, and a resigner otherwise. So a chaser is on state $n$ directly before merging and a resigner on state $1$. 

Notice that pawns do not need to be a chaser or a resigner, because they may not take part in a merge. But if the word at hand is a synchronizing word, then all pawns end in the same state, which is state $1$ if the word has minimum $c$-weighted length.

Suppose for now that the word at hand is synchronizing. Since all pawns end in the same state, there is a unique pawn travelling the largest distance. This pawn is always a chaser. We therefore call it the \emph{lanterne rouge}. Similarly, there is a pawn that makes the least number of moves and is always a resigner. This one is called the \emph{yellow jersey}. If the lanterne rouge or the yellow jersey merges, then we see the pawn which results from the merge as its continuation. Therefore, the lanterne rouge and the yellow jersey are the chaser and resigner respectively of the last merge. 

\begin{lemma}\label{claim:n-c}
	Let $c \in \N$ and $S,T \subseteq [n]$, such that $n \in S$. Suppose that $w \in \Gamma^{*}$ has minimum $c$-weighted length for $Sw=T$.%such that $Sw = T$, and that $w$ has minimum $c$-weighted length as such. 
	\begin{enumerate}[(i)]
		
		\item If the pawn at $n$ is a resigner, then $w$ starts with $\tilde{a}$.
		
		\item If the pawn at $n$ is a chaser and $c \ne 0$, then $w$ starts with $\tilde{b}$.
		
		\item If the pawn at $n$ is a chaser and $c=0$, then $w$ can be chosen to start with $\tilde{b}$.
		
	\end{enumerate} 
\end{lemma}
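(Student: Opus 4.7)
I would prove each part by examining the first symbol of $w$. Since $a$ is undefined at state $n$ and $n \in S$, the first letter of $w$ must lie in $\{\tilde a, \tilde b\}$, so it suffices to rule out (or improve upon) the wrong choice.

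For part (i), suppose toward contradiction that $w = \tilde b \cdot u$. The initial $\tilde b$ sends the pawn $p$ at $n$ to state $1$ and fixes every other pawn at its original (non-$n$) position. Since $p$ is assumed to be a resigner, it must remain at state $1$ until it merges with a chaser that arrives at $n$ via a later $\tilde b$. Only $\tilde b$ keeps a pawn at state $1$ (both $\tilde a$ and $a$ send such a pawn to $2$), so all letters in $u$ up to and including the merging one must be $\tilde b$. But $\tilde b$ also fixes every state other than $n$, and $p$ was the unique pawn initially at $n$; hence no pawn can ever reach $n$ in this $\tilde b$-prefix, and the merge of $p$ with an incoming chaser never occurs. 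This contradicts $p$'s role as a resigner, so $w$ must start with $\tilde a$.

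For parts (ii) and (iii), $p$ at $n$ is a chaser, and its merge occurs at some step $t$ with $w_t = \tilde b$ and $p$ at state $n$ just before step $t$. Suppose $w$ starts with $\tilde a$. Then the initial $\tilde a$ sends $p$ to state $1$, and $p$ must loop once around the cycle inside $u$, contributing $n-1$ forward-moving letters ($a$'s or $\tilde a$'s) before step $t$. Using the identity $\tilde a^n = \mathrm{id}$ on $[n]$, the composite effect of the initial $\tilde a$ together with the $n-1$ forward moves of $p$ within $u$ produces no net shift on the non-$p$ pawns. I would therefore construct $w' = \tilde b \cdot u'$ by deleting the initial $\tilde a$, the $n-1$ forward moves of $p$ inside $u$, and the merging $\tilde b$ at step $t$, prepending a single $\tilde b$, and inserting at most a bounded number of compensating letters where necessary to preserve the trajectories of the other pawns. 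The weight accounting yields $|w'|_c - |w|_c < 0$ whenever $c \ge 1$ (contradicting the minimality of $|w|_c$ and proving (ii)), and $|w'|_c - |w|_c \le 0$ when $c=0$ (producing a minimum-length word starting with $\tilde b$ and proving (iii)).

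The principal obstacle is verifying that the rewrite preserves merges of pawns other than $p$ inside $u$. A non-$p$ pawn may reach state $n$ at some intermediate step and merge with another pawn at state $1$ via a subsequent $\tilde b$; deleting forward moves or the merging $\tilde b$ could misalign this and either cancel such a merge or create a spurious one. I expect the verification to go through by a careful local reorganization of $u$ that aligns the trajectories of all non-$p$ pawns at every merging $\tilde b$, exploiting the cyclic structure and the fact that the only merges in $C_n^*$ occur via $\tilde b$ with one pawn at $n$ and another at $1$.
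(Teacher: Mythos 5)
Your part~(i) argument contains a genuine flaw. You claim that because $p$ at state $n$ is a resigner, after the initial $\tilde b$ sends it to state $1$ it ``must remain at state $1$ until it merges,'' and hence all subsequent letters must be $\tilde b$. But a resigner only needs to \emph{be at} state $1$ at the moment of its merge; nothing in the definition prevents it from leaving state $1$ and looping all the way around the cycle back to $1$ while its chaser catches up. You would need to prove that such looping is suboptimal, which is precisely the content of the paper's Lemma~\ref{lem:resigner}. Without that, the dichotomy ``stays at $1$ forever vs.\ contradiction'' you set up simply does not exhaust the possibilities, and the argument collapses.

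For parts~(ii) and~(iii), you candidly flag the missing piece yourself: the rewrite that deletes the initial $\tilde a$, the merging $\tilde b$, and $p$'s $n-1$ forward moves, and then ``inserts at most a bounded number of compensating letters,'' must be shown to preserve every merge among the other pawns. This is the entire technical burden of the statement, and it is exactly what the paper does not hand-wave. The paper's route is to first prove the displacement Lemmas~\ref{lem:resigner} and~\ref{lem:chaser} by induction on $|w|$: removing or displacing a resigner (resp.\ chaser) by one position reduces the required $c$-weighted length by at least $c$ (resp.\ exactly $c+1$), and the induction carries along an explicit rewritten word. Crucially, the induction handles the awkward boundary cases at states $n$ and $n-1$ (including the commutation $w_1w_2 \mapsto w_2w_1$) rather than arguing globally about $n-1$ forward moves. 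Lemma~\ref{claim:n-c} then drops out of a one-step application of these lemmas, with $S' = S\tilde a$ (resigner case) or $S' = S\tilde b a$ (chaser case). Your overall intuition — displace the pawn at $n$ and account for the weight difference — is the right one and matches the paper's, but the key idea you are missing is to make this precise via a standalone displacement lemma proved by induction, rather than a global surgery on $w$ whose correctness you have not established.
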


The intuition behind Lemma \ref{claim:n-c} is that it is optimal for both chasers and resigners to merge as quickly as possible. Loosely speaking, a chaser in state $n$ can get one step closer to its target by choosing $\tilde b$, while choosing $\tilde a$ would mean that the other pawns move as well so that the chaser makes no progress. Choosing $\tilde b$ therefore minimizes the time to merge and the $c$-weighted word length. %More precisely, the idea is that $\tilde{b}a$ instead of $\tilde{a}$ saves a round where the chaser returns to state $n$. At the last moment when the pawn is on state $n$ before merging, $\tilde{b}$ is applied, because the pawn is a chaser. Saving a round means that this $\tilde{b}$ can be skipped, so $c+1$ $c$-weighted word length can be saved. With the cost of the displacement, this adds up to saving $c$ $c$-weighted word length. So the chaser must choose $\tilde b$ for optimality if $c \ne 0$, and can choose $\tilde b$ without harm if $c = 0$.

On the other hand, one can say that a resigner in state $n$ gets one step farther from its chaser by choosing $\tilde b$, while choosing $\tilde a$ would mean that the other pawns move as well so that the resigner does not move relatively to the other pawns. Choosing $\tilde a$ therefore minimizes the time to merge and the $c$-weighted word length. %More precisely, the idea is that $\tilde{a}$ instead of $\tilde{b}a$ saves a round where the chaser returns to state $n$. At the last moment when the pawn is on state $n$ before merging, $\tilde{a}$ is applied, because the pawn is a resigner. Saving a round means that this $\tilde{a}$ can be replaced by $a$, so $c$ $c$-weighted word length can be saved. With the saving of the displacement, this adds up to saving $c+1$ $c$-weighted word length. So the resigner must choose $\tilde{a}$ for optimality.

We start with an informal setup of the proof of Lemma \ref{claim:n-c}, which we elaborate in Lemmas \ref{lem:resigner} and \ref{lem:chaser} below. If $w$ starts with $\tilde{b}$, then by Lemma \ref{clm:obs1}(\ref{claim1.3}), either $w = \tilde{b}$ or $w$ starts with $\tilde{b}a$. The effect of $\tilde{a}$ and $\tilde{b}a$ is similar except for state $n$: $i\tilde{a} = i+1 = i\tilde{b}a$ if $i \ne n$, and $n\tilde{a} = 1 \ne 2 = n\tilde{b}a$.
So $\tilde{b}a$ places the pawn at $n$ on the successor of the state where it would be placed with $\tilde{a}$, costing $(c+2) - (c+1) = 1$ extra $c$-weighted word length. On the other hand, $\tilde{a}$ places the pawn at $n$ on the predecessor of the state where it would be placed with $\tilde{b}a$, saving $(c+2) - (c+1) = 1$ word length.

The idea is that the relative displacement which is initiated by $\tilde{b}a$ instead of $\tilde{a}$, or vice versa, can be preserved by adapting the word without adapting its $c$-weighted length, namely by adapting the order of symbols to match the new positioning of the displaced pawn. 

For a chaser, at the last moment when it is on state $n$ before merging, $\tilde{b}$ is applied. With a displacement to its successor, this $\tilde{b}$ can be skipped, so $c+1$ $c$-weighted word length can be saved. With the cost of the displacement, this adds up to saving $c$ $c$-weighted word length. So the chaser must choose $\tilde b$ for optimality if $c \ne 0$, and can choose $\tilde b$ without harm if $c = 0$. 

For a resigner, at the last moment when the pawn is on state $n$ before merging, $\tilde{a}$ is applied. With a displacement to its predecessor, this $\tilde{a}$ can be replaced by $a$, so $c$ $c$-weighted word length can be saved. With the saving of the displacement, this adds up to saving $c+1$ $c$-weighted word length. So the resigner must choose $\tilde{a}$ for optimality. 

Lemma \ref{lem:resigner} below shows that the required $c$-weighted word length drops by at least $c$ if a resigner is displaced to its predecessor or if it has disappeared by merging. The lemma also shows how to adapt the word in order to preserve displacement.

\begin{lemma} \label{lem:resigner}
	Let $S \subseteq [n]$, and assume that $w \in \Gamma^{*}$ is defined on $S$. Suppose that $i \in S$ contains a resigner for $w$. If $i \ne 1$, and
	$$
	S' = S \setminus \{i\}\qquad\text{or}\qquad
	S' = (S \setminus \{i\}) \cup \{i-1\},
	$$
	then there exists a word $w' \in \Gamma^{*}$ such that $S'w' = Sw$ and $|w'|_c \le |w|_c - c$.
\end{lemma}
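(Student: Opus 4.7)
My plan is to pinpoint a single $\tilde a$ in $w$ that can be demoted to $a$, delivering the required saving of $c$ in weighted length. Let $\tau$ be the last step, prior to the merge of $p$, at which $p$ is at state $n$. Such a step exists, since $p$ starts at $i \ne 1$ and the only way into state $1$ is through state $n$. From step $\tau + 1$ up to the merge the pawn $p$ has to remain at state $1$ (otherwise it would revisit $n$, contradicting the maximality of $\tau$), which forces every intermediate symbol to be $\tilde b$. The chaser eventually reaching $n$ for the merge, however, must cross $n-1 \to n$ via an $a$ or $\tilde a$, and the only step where this fits is $\tau+1$. Since $a$ is undefined at $n$ where $p$ still stands, this forces $u_{\tau+1} = \tilde a$, and the merge occurs at step $\tau+2$.

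With $\tau$ pinned down, I define $w'$ to be $w$ with the $\tilde a$ at position $\tau+1$ replaced by $a$; the weighted length immediately drops by $(c+1) - 1 = c$, matching the claim. To see that $w'$ is defined on $S'$, note that in the original scenario only $p$ is at state $n$ at step $\tau$ (else there would have been an earlier merge), so in the removal case state $n$ is empty in $S' u_1 \cdots u_\tau$, and in the displacement case the displaced copy of $p$, trailing the original by one position, sits at $n-1$ rather than $n$. To prove $S'w' = Sw$, I would argue trajectory by trajectory: the symbols $\tilde a$ and $a$ agree on every state other than $n$, and no pawn of $S \setminus \{i\}$ is at $n$ at step $\tau$; hence $qw = qw'$ for every such pawn, giving $(S \setminus \{i\})w = (S \setminus \{i\})w'$. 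The image $iw$ of $p$ coincides with the chaser's image $q_c w$, which already belongs to $(S \setminus \{i\})w$; and in the displacement case the image $(i-1)w'$ of the displaced copy likewise belongs to $(S \setminus \{i\})w'$, either because that copy merges early with some element of $S \setminus \{i\}$, or because it otherwise reaches $1$ at step $\tau+2$ and immediately merges with the chaser there. Combining these observations yields $Sw = (S \setminus \{i\})w = S'w'$.

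The step I expect to require the most care is the displacement sub-case when $i-1 \notin S$: the freshly placed copy of $p$ at $i-1$ might collide prematurely with some $r \in S \setminus \{i\}$ whose trajectory happens to be a one-step lag of $p$'s, the chaser itself being the typical candidate. I would resolve this by remarking that any such early merge is harmless for the set equality: once the displaced copy fuses with an $r$, the merged pawn's subsequent trajectory is just $r$'s, so its final position automatically belongs to $(S \setminus \{i\})w'$, and the argument above still yields $S'w' = Sw$. The same observation rules out ever applying $a$ to a pawn at state $n$ in the modified scenario, because after any such early merge the displaced copy sits exactly where $r$ sits at step $\tau$, which by the original analysis is not state $n$.
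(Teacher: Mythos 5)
Your identification of $\tau$ with $u_{\tau+1}=\tilde a$ and the merge at step $\tau+2$ is correct, and the removal case $S'=S\setminus\{i\}$ goes through. The gap is in the displacement case $S'=(S\setminus\{i\})\cup\{i-1\}$ with $i-1\notin S$: you check that the one letter you change remains defined on $S'$ (at step $\tau+1$), but you never check that the \emph{other} letters of $w'$ are defined on $S'$. Since $S'\not\subseteq S$, ``$w$ is defined on $S$'' does not imply ``$w$ is defined on $S'$'', and the displaced pawn from $i-1$ can reach state $n$ at an earlier step where the letter is $a$. This already happens in the very scenario you call nice, where the copy trails $p$ by exactly one position: each time $p$ crosses from $n$ to $1$ via $\tilde a$, the copy lands on $n$, and if the next letter is $a$ the transition is undefined. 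Your early-merge remark does not rescue this, because the copy need not have collided with any pawn of $S\setminus\{i\}$ by that moment. Concretely, with $n=5$, $S=\{2,4\}$, $i=4$, $S'=\{2,3\}$ and $w=\tilde a\,\tilde a\,a\,\tilde b\,\tilde a\,\tilde a\,\tilde a\,\tilde a\,\tilde b$, the pawn from $4$ is a resigner for $w$ (trajectory $4,5,1,2,2,3,4,5,1,1$, merging at the last step with the pawn from $2$, whose trajectory is $2,3,4,5,1,2,3,4,5,1$), so $\tau=7$ and $u_8=\tilde a$; but the pawn from $3$ goes $3\to4\to5$ under the first two letters and then $u_3=a$ is undefined on it, so neither $w$ nor your $w'$ (which only changes $u_8$) is even defined on $\{2,3\}$.

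The paper avoids this by arguing inductively one letter at a time and, crucially, by \emph{reordering} letters: in Case 2 with the (current) resigner at state $n$, it observes that state $n$ is unoccupied after $w_1$, so one may assume $w_2\ne\tilde b$, and applying $w_2w_1$ instead of $w_1w_2$ to the displaced set keeps the displaced pawn correctly aligned while preserving the $c$-weighted length. In the example above a correct $w'$ is $\tilde a\, a\, \tilde a\, \tilde b\, \tilde a\, \tilde a\, \tilde a\, a\, \tilde b$, which both swaps the second and third letters and replaces $u_8=\tilde a$ by $a$. A single $\tilde a\to a$ replacement, with no reordering, does not in general give a word that is defined on $S'$; your direct argument needs an analogue of the paper's swap step to close this gap.
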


\begin{proof}
%	If $|w|=0$ or $|w|=1$, either there are no resigners at all or only state 1 contains a resigner. So the statement of the lemma is empty and therefore true. Assume the lemma to hold for $|w|\leq k$. Take $|w|=k+1$ and let $w_j \in \Gamma$ be the $j$\textsuperscript{th} symbol of $w$ for all $j$. We distinguish $2$ cases:
	Assume the lemma holds for $|w|<k$. Take $|w|=k$ and let $w_j \in \Gamma$ be the $j$\textsuperscript{th} symbol of $w$ for all $j$. Since $i$ contains a resigner for $w$, implicit assumptions on the length of $w$ in the arguments below are justified. We distinguish $2$ cases:
	
	\begin{itemize}
		
		\item \emph{Case 1: $i \ne 1$ and $S' = S \setminus \{i\}$.} 
		
		Suppose first that $i \ne n$. Then $iw_1\neq 1$ and for all $w_1 \in \Gamma$,
		$$
		S'w_1 = Sw_1 \setminus \{iw_1\}.
		$$
		Taking $w_1'=w_1$, the result follows by induction on $|w|$.
		
		Suppose next that $i = n$. Then either $w_1 = \tilde{a}$ or $w_1 = \tilde{b}$. If $w_1 = \tilde{a}$ then make $w'$ from $w$ by replacing $w_1$ by $a$. If $w_1 = \tilde{b}$, then make $w'$ from $w$ by removing $w_1$. In both cases, $|w'|_c \le |w|_c - c$ and
		$$
		S'w' = S'w\qquad \mbox{and} \qquad S'w \cup \{iw\} = Sw.
		$$
		Since the pawn at $i$ is merged by $w$ (it is a resigner), it follows that $S'w' = S'w = Sw$, which gives the result.
		
		\item \emph{Case 2: $i \ne 1$, $S' = (S \setminus \{i\}) \cup \{i-1\}$, and $S' \ne S \setminus \{i\}$.}
		
		Then $i-1 \notin S$. Suppose first that $i \ne n$. Then $iw_1\neq 1$ and for all $w_1 \in \Gamma$, $(i-1)w_1 = iw_1 - 1$ and
		$$
		S'w_1 =  (Sw_1 \setminus \{iw_1\}) \cup \{iw_1 - 1\}.
		$$
		Taking $w_1'=w_1$, the result follows by induction on $|w|$.
		
		Suppose next that $i = n$. Then either $w_1 = \tilde{a}$ or $w_1 = \tilde{b}$. From $n - 1 = i - 1 \notin S$, we infer that $n \notin Sw_1$ and $Sw_1\tilde{b} = Sw_1$. So we may assume that $w_2 \ne \tilde{b}$.  If $w_1 = \tilde{a}$, then $i w_1 w_2 = 2$ and
		$$
		S'w_2w_1 =  (Sw_1w_2 \setminus \{2\}) \cup \{1\}
		$$
		If $w_1 = \tilde{b}$, then $1 \notin S$ because the pawn at $i$ is a resigner, so $2 \notin S'w_2w_1$. Therefore, we obtain the same assertions as in the case $w_1 = \tilde{a}$. Taking $w_1'w_2'=w_2w_1$, the result follows by induction on $|w|$.\qed
		
	\end{itemize}
\end{proof}

If state $1$ contains a resigner, then removing it will not decrease the required $c$-weighted word length if the resigner is about to merge with its chaser, to advance as a chaser. So the condition that $i \ne 1$ in Lemma \ref{lem:resigner} is necessary. %The case $S' = S \setminus \{n\}$ and $w_1 = \tilde{b}$ can be resolved by induction as well, using that $1 \notin S$ in this case, which is because the pawn at $n = i$ is a resigner.

Lemma \ref{lem:chaser} shows that the required $c$-weighted word length drops by at least $c+1$ if a chaser is displaced to its successor or if it has disappeared by merging. The lemma also shows how to adapt the word in order to preserve displacement.

%Suppose that, contrary to intuition, $\tilde a$ was chosen for a chaser on a moment when it was in state $n$, and that the chaser is currently in state $i$. Then the choice of $\tilde a$ instead of $\tilde{b}$ can in fact be corrected now by moving the chaser in state $i$ to state $i+1$, provided $i \ne n$. Removing the chaser in state $i$ will be helpful, too. Lemma \ref{lem:chaser} below shows that the required $c$-weighted word length drops by at least $c+1$ after any of both corrections, in agreement with the above idea of saving a round.

\begin{lemma} \label{lem:chaser}
	Let $S \subseteq [n]$, and assume that $w \in \Gamma^{*}$ is defined on $S$. Suppose that $i \in S$ contains a chaser of $w$. If
	\begin{align*}
	S' = S \setminus \{i\}\qquad\text{or}\qquad i\ne n\ \text{and}\ S' = (S \setminus \{i\}) \cup \{i+1\},
	\end{align*}
%	or $i \ne n$ and 
%	$$
%	S' = (S \setminus \{i\}) \cup \{i+1\},
%	$$
	then there exists a word $w' \in \Gamma^{*}$ such that $S'w' = Sw$ and $|w'|_c = |w|_c - c - 1$.
\end{lemma}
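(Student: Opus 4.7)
The plan is to mirror the inductive structure of the proof of Lemma~\ref{lem:resigner}, with the two directional roles swapped: the chaser's decisive symbol is $\tilde{b}$ (which carries it from state $n$ to state $1$ for the merge) rather than $\tilde{a}$, and displacements are forward (to $i+1$) rather than backward. I induct on $|w|$, splitting first on whether $S' = S \setminus \{i\}$ (Case~1) or $S' = (S \setminus \{i\}) \cup \{i+1\}$ with $i+1 \notin S$ (Case~2), and within each case on whether $i \ne n$ or $i = n$. Note that Case~2 only arises when $i \ne n$ by hypothesis.

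For the routine subcase $i \ne n$, any $w_1 \in \Gamma$ defined on $S$ sends pawn $i$ to some position in $\{i, i+1\} \setminus \{1\}$, and no other pawn of $S \setminus \{i\}$ collides with it under $w_1$. In Case~2, the same holds for the displaced pawn at $i+1$ (since $i+1 \notin S$), and the unit gap between the pawn positions is preserved by $w_1$. The resulting image remains a chaser for the tail $w_2 \cdots w_k$, so setting $w_1' = w_1$ and invoking the inductive hypothesis on the shorter tail produces the required $w'$ with exact savings $c+1$.

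The heart of the argument is the subcase $i = n$ in Case~1, where $w_1 \in \{\tilde{a}, \tilde{b}\}$ since $na$ is undefined. If $w_1 = \tilde{b}$, I delete $w_1$ and take $w' = w_2 \cdots w_k$: the $c$-weighted length drops by exactly $c+1$, and $S'w' = Sw$ because the pawn at $1$ produced by $\tilde{b}$ (the chaser after its first hop) either merges immediately with a pre-existing pawn at $1$ or merges later with its intended partner along $w_2 \cdots w_k$, contributing no extra state to $Sw$. If $w_1 = \tilde{a}$, I exploit the observation preceding the lemma that $\tilde{a}$ and $\tilde{b}a$ agree on all states except $n$, where $\tilde{b}a$ places the chaser one step farther at a cost of one extra unit of $c$-weighted length. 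Substituting $\tilde{b}a$ for $\tilde{a}$ at the first position lets me reduce to the $w_1 = \tilde{b}$ argument, the extra $+1$ from $\tilde{b}a$ cancelling against the displacement savings produced by recursing on the displaced chaser, yielding exact total savings $c+1$.

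The principal obstacle is arranging the inductive bookkeeping so that the savings are exactly $c+1$ rather than merely at least $c+1$, since the lemma asserts equality (unlike Lemma~\ref{lem:resigner}, which is an inequality). This is what forces the rewrite from $\tilde{a}$ to $\tilde{b}a$ in the hardest subcase: a naive deletion would save only $c$, whereas a more aggressive induction on a rerouted trajectory might over-save. The controlled rewrite pins the savings to exactly $c+1$, matching the equality in the statement.
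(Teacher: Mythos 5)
Your proposal correctly identifies the broad inductive structure (induction on $|w|$, split on the two forms of $S'$, then on whether $i=n$), but it contains two genuine gaps.

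First, the ``routine subcase $i \ne n$'' is not uniform, and in particular it fails for $i = n-1$ in Case~2. There you assert that the displaced pawn at $i+1$ tracks the original pawn at $i$ with a preserved unit gap under any $w_1$, but when $i = n-1$ the displaced pawn sits at $n$, where $a$ is \emph{undefined}, and where $\tilde{a}$ and $\tilde{b}$ both wrap around to state $1$ rather than advancing to $n+1$. Concretely, $w_1$ may be the symbol $a$, which is defined on $S$ (since $n = i+1 \notin S$) but not on $S'$ (which contains $n$), so you cannot even apply $w_1$ to $S'$. The paper's proof isolates this subcase and handles it by noting that $S\tilde{b} = S$ (so one may assume $w_1 \ne \tilde{b}$, which forces $w_2 \in \{\tilde{a},\tilde{b}\}$), and then \emph{swaps} $w_1$ with $w_2$ to get $S'w_2w_1 = (Sw_1w_2 \setminus \{1\}) \cup \{2\}$ before invoking the induction. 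Your argument has nothing to say here and cannot be patched by the generic reasoning.

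Second, your treatment of $w_1 = \tilde{a}$ when $i = n$ in Case~1 is incorrect. You propose to rewrite $\tilde{a}$ to $\tilde{b}a$, but this rewrite has no effect on $S'$ (since $n \notin S'$, both words agree on $S'$), and on $S$ it changes $Sw$, which the lemma requires to be preserved exactly. The intended bookkeeping --- an extra $+1$ ``cancelling against displacement savings produced by recursing on the displaced chaser'' --- does not correspond to anything in this subcase: there is no second pawn being displaced. The correct argument is direct and needs no rewriting: since $i\tilde{a} = 1$, one has $S'\tilde{a} = S\tilde{a} \setminus \{1\}$, and the pawn now at $1$ in $S\tilde{a}$ remains a chaser for the tail $w_2\cdots w_k$, so the inductive hypothesis applies with $w_1' = w_1 = \tilde{a}$, giving exactly the required savings of $c+1$. (The $\tilde{a} \mapsto \tilde{b}a$ substitution you describe is the mechanism used in the proof of Lemma~\ref{claim:n-c}, where Lemmas~\ref{lem:resigner} and~\ref{lem:chaser} are invoked as black boxes; you appear to have imported it into the wrong proof.)

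A smaller but still substantive point: in Case~1 with $i = 1$ and $w_1 = \tilde{b}$, your blanket claim that ``no other pawn of $S \setminus \{i\}$ collides with it under $w_1$'' requires the observation that $n \notin S$, which holds precisely because the pawn at $1$ is a chaser. If one omits this, the identity $S'w_1 = Sw_1 \setminus \{iw_1\}$ can fail (take $1,n \in S$ and compute $S'\tilde{b} = (S\setminus\{1,n\})\cup\{1\} \ne S\setminus\{1,n\} = S\tilde{b}\setminus\{1\}$), so the inductive step is not automatic.
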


\begin{proof}
%	If $|w|=0$, there are no chasers so the statement is true. If $|w|=1$, the only possibility is $i=n$ and $w = \tilde b$. Let $S' = S\setminus\{n\}$. Then $Sw = S'$, so we can take $w'=\varepsilon$. Assume the lemma to hold for $|w|\leq k$. Take $|w|=k+1$ and let $w_j \in \Gamma$ be the $j$\textsuperscript{th} symbol of $w$ for all $j$. We distinguish $2$ cases:
	Assume the lemma holds for $|w|< k$. Take $|w| = k$, and let $w_j \in \Gamma$ be the $j$\textsuperscript{th} symbol of $w$ for all $j$. Since $i$ contains a chaser for $w$, implicit assumptions on the length of $w$ in the arguments below are justified. We distinguish $2$ cases:
	\begin{itemize}
		
		\item \emph{Case 1: $S' = S \setminus \{i\}$}.
		
		Suppose first that $i \ne n$. If $i = 1$ and $w_1 = \tilde{b}$, then $n \notin S$, because the pawn at $i$ is a chaser. In all cases, for all $w_1 \in \Gamma$,
		$$
		S'w_1 = Sw_1 \setminus \{iw_1\}.
		$$
		Taking $w_1'=w_1$, the result follows by induction on $|w|$.
		
		Suppose next that $i = n$. Then either $w_1 = \tilde{a}$, or $w_1 = \tilde{b}$. If $w_1 = \tilde{a}$, then $i w_1 = 1$ and
		$$
		S'w_1 = Sw_1 \setminus \{1\},
		$$
		and the result follows by induction on $|w|$ by taking $w_1'=w_1$. So assume that $w_1 = \tilde{b}$. Make $w'$ from $w$ by removing $w_1$. Then $|w'|_c = |w|_c-c-1$ and
		$$
		S'w' = S'w \qquad \mbox{and} \qquad S'w \cup \{iw\} = Sw.
		$$
		Since the pawn at $i$ is merged by $w$ (it is a chaser), it follows that $S'w' = S'w = Sw$, which gives the result. Note that $w'=\varepsilon$ is possible if $1\in S$.
		
		\item \emph{Case 2: $i \ne n$, $S' = (S \setminus \{i\}) \cup \{i+1\}$, and $S' \ne S \setminus \{i\}$.}
		
		Then $i+1 \notin S$. Suppose first that $i \ne n-1$. If $i = 1$ and $w_1 = \tilde{b}$, then $n \notin S$, because the pawn at $i$ is a chaser. In all cases, for all $w_1 \in \Gamma$, $(i+1)w_1 = iw_1 + 1$ and
		$$
		S'w_1 =  (Sw_1 \setminus \{iw_1\}) \cup \{iw_1 + 1\}.
		$$
		Let $w_1'=w_1$ and note that $iw_1\neq n$. The result follows by induction on $|w|$.
		
		Suppose next that $i = n-1$. As $n = i + 1 \notin S$, $S\tilde{b} = S$. So we may assume that $w_1 \ne \tilde{b}$. Furthermore, either $w_2 = \tilde{a}$ or $w_2 = \tilde{b}$. In all cases, $i w_1 w_2 = 1$ and
		$$
		S'w_2w_1 =  (Sw_1w_2 \setminus \{1\}) \cup \{2\}.
		$$
		Taking $w_1'w_2' = w_2w_1$, the result follows by induction on $|w|$.\qed
		
	\end{itemize}
\end{proof}

We are now ready to give a formal proof of  Lemma \ref{claim:n-c}.

\ifijfcs
	\begin{proof}[Proof of Lemma \ref{claim:n-c}]
\else
	\begin{proof}[of Lemma \ref{claim:n-c}]
\fi
	Let $w_j \in \Gamma$ be the $j$\textsuperscript{th} symbol of $w$ for all $j$. Suppose first that the pawn in state $n$ is a resigner, and that $w_1 = \tilde{b}$. Then $1\not\in S$ and $w_1w_2 = \tilde{b}a$. Let $S' = S \tilde{a} = (S\tilde ba\setminus \{2\})\cup\{1\}$. From Lemma \ref{lem:resigner} with $i = 2$, it follows that $S'w' = T$ for a word $w' \in \Gamma^{*}$ of $c$-weighted length at most $|w|_c - |w_1w_2|_c - c = |w|_c - 2c - 2$. So $Sv = T$ for the word $v=\tilde a w'$ %a word $v \in \Gamma^{*}$
	of $c$-weighted length at most $|w|_c - c - 1$. Contradiction.
	
	Suppose next that the pawn in state $n$ is a chaser, and that $w_1 = \tilde{a}$. Suppose additionally that $w_1=\tilde{a}$ is inevitable if $c = 0$. Let $S' = S \tilde{b}a = (S\tilde a\setminus\{1\})\cup\{2\}$.
	From Lemma \ref{lem:chaser} with $i = 1$, it follows that $S'w' = T$ for a word $w' \in \Gamma^{*}$ of $c$-weighted length at most $|w|_c - |w_1|_c - c - 1 = |w|_c - 2c - 2$. So $Sv = T$ for the word $v = \tilde baw'$ %a word $v \in \Gamma^{*}$
	of  $c$-weighted length at most $|w|_c - c$, which starts with $\tilde{b}$. Contradiction.
	\qed\end{proof}

After this excursion to the auxiliary automaton $C_n^*$, we return to the automaton $C_n^c$. Let a pawn start in each of these states. Also in this automaton pawns only move clockwise by steps of size 1. We define chasers, resigners, the lanterne rouge and the yellow jersey, analogous to the definitions for $C_n^*$. A direct implication of Corollary \ref{cor:1.4} and Lemma \ref{claim:n-c} is the following.

\begin{corollary}\label{cor:strategy} 
	Let $c \in \N$, and suppose that either $n \in S = Q$ or $n \in S \subseteq [n-c]$. Let $T \subsetneq [n-c]$ and $w \in \Sigma^{*}$, and suppose that %either $w$ has minimum length for $|Sw| = |T|$, or 
	$w$ has minimum length for $Sw = T$. 
	\begin{enumerate}[(i)]
		
		\item If the pawn at $n$ is a resigner, then $w$ starts with $b^c a$.
		
		\item If the pawn at $n$ is a chaser and $c \ne 0$, then $w$ starts with $b^{c+1}$.
		
		\item If the pawn at $n$ is a chaser and $c=0$, then $w$ can be chosen to start with $b^{c+1}$.
		
	\end{enumerate} 
\end{corollary}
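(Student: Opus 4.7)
The plan is to combine Corollary \ref{cor:1.4} (which compresses a minimum-length word for $C_n^c$ into a minimum $c$-weighted length word for $C_{n-c}^*$) with Lemma \ref{claim:n-c} (which pins down the first symbol of such a word via the chaser/resigner status at the top state). I would first invoke Corollary \ref{cor:1.4} to obtain $w' \in \Gamma^*$ with $s_c(w') = w$ of minimum $c$-weighted length for $\eta_{n-c}(S'', w') = T$, where $S'' = [n-c] \cap S$. A short case analysis on the hypothesis shows $n - c \in S''$: either $S = Q$ and $S'' = [n-c]$, or $n \in S \subseteq [n-c]$ in which case $n \in [n-c]$ forces $c = 0$ and hence $n - c = n \in S = S''$.

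I would then apply Lemma \ref{claim:n-c} to $(C_{n-c}^*, S'', T, w')$, which dictates the first symbol of $w'$: $\tilde{a}$ if the pawn at $n-c$ is a resigner, $\tilde{b}$ if it is a chaser with $c \ne 0$, and a choice of $\tilde{b}$ if it is a chaser with $c = 0$. Translating back via $s_c(\tilde{a}) = b^c a$ and $s_c(\tilde{b}) = b^{c+1}$ then yields the three corresponding first segments of $w$. The reverse direction of the correspondence of minimum-length words also holds: any $w'$ of minimum $c$-weighted length gives $s_c(w')$ of minimum length, since a shorter synchronizing $v$ would, via Corollary \ref{cor:1.4} applied to the shortest such word, produce a $v'$ of smaller $c$-weighted length than $w'$.

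The main obstacle, and the step demanding genuine care, is justifying that the pawn at state $n$ in $C_n^c$ under $w$ has the same chaser/resigner status as the pawn at state $n-c$ in $C_{n-c}^*$ under $w'$. For $c = 0$ this is essentially tautological, since $n - c = n$ and the proof of Corollary \ref{cor:1.4} selects $\tilde{a}$ over $a$ exactly when the pawn at $n$ occupies state $n$; consequently the trajectory of that pawn is identical in $C_n^0$ under $w$ and in $C_n^*$ under $w'$, and hence so is its role at the last merge. For $c \ge 1$ only $S = Q$ arises, and Lemma \ref{clm:obs1}(i) forces $w$ to begin with $b^{c+1}$: the first $b$ sends the pawn at $n$ to state $1$, where it merges with the pawn originally at $1$, so this pawn is always a chaser. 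The analogous $\tilde{b}$ prefix of $w'$ makes the pawn at $n-c$ in $C_{n-c}^*$ a chaser as well. Under this correspondence clause (i) is vacuous for $c \ge 1$ and clause (ii) collapses into Lemma \ref{clm:obs1}(i), while the genuine content of clauses (i) and (iii) is witnessed in the $c = 0$ case through the trajectory matching just described.
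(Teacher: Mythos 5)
Your proposal is correct and follows the paper's route: the paper states the corollary as a direct implication of Corollary \ref{cor:1.4} and Lemma \ref{claim:n-c}, which is exactly the combination you use, with the extra bookkeeping (case split on whether $n \in [n-c]$, hence whether $c=0$, and the transfer of chaser/resigner status) spelled out where the paper leaves it implicit.

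One small imprecision worth noting: the phrase ``the proof of Corollary \ref{cor:1.4} selects $\tilde{a}$ over $a$ exactly when the pawn at $n$ occupies state $n$'' should read ``exactly when state $n$ lies in the current pawn set,'' since the occupant of state $n$ need not be the pawn that started there. This does not affect the argument, because the point you actually need --- that $\delta_n(q,u) = \eta_{n-c}(q,u')$ holds simultaneously for all occupied $q$ once $s_c(u')=u$ is chosen to keep $\eta_{n-c}$ defined --- gives identical trajectories for \emph{all} pawns, hence identical merge times and roles. Your observation that the reverse direction of the length correspondence is needed for clause (iii) is a genuine point and correctly handled.
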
 

On account of Lemma \ref{clm:obs1}(\ref{claim1.1}), a shortest synchronizing word is of the form $b^{c+1}w$. Since $Qb^{c+1} = [n-c-1] = [n']$, the start state subset of $w$ is $[n']$.

Suppose the yellow jersey starts in $j \in [n'].$
To simplify the further investigation, we will first look to the \emph{shortest full synchronizing word}, which is the shortest word that synchronizes all pawns into state $j$. 
Now consider an arbitrary set $S\subseteq [n']$. By Lemma \ref{clm:obs1}(\ref{claim1.3}) and Corollary \ref{cor:1.4}, a shortest full synchronizing word for $S$ starts with $a$ and can be partitioned into factors $a$, $b^ca$ and $b^{c+1}$. It does not contain $b^{c+2}$ and therefore has a prefix of the form
\begin{align} \label{iterationword}
w = aw_{n'}aw_{n'-1}a\ldots aw_3aw_2aw_1aw_n,\qquad \text{with}\qquad w_k\in\{\varepsilon,b^c,b^{c+1}\}.
\end{align}
A word of this form will be called an \emph{iteration word}. If $w$ is a prefix of a shortest full synchronizing word for $S$ and $wb$ is not, then we call $w$ an \emph{optimal iteration word}. 

\begin{lemma}\label{claim:iterationword} Let $w$ be an optimal iteration word for $S\subseteq [n']$, such that every suffix of $w$ follows the choice in Corollary \ref{cor:strategy}(iii) (if $c = 0$). Then 
	\begin{enumerate}[(i)]
		\item\label{claim3.1}
		For $k\in [n']\setminus S$, we have $w_k=\varepsilon$. For $k\in S$, we have $w_k=b^c$ or $w_k = b^{c+1}$. 
		\item\label{claim3.2} 
		For all $k\in S$,\[kw=\left\{
		\begin{array}{ll}
		k & \text{if}\ w_k =b^c,\\
		k+1 \mod n' \qquad& \text{if}\ w_k =b^{c+1}.
		\end{array}
		\right.
		\]			
		\item\label{claim3.3} If $w_{n'} = b^{c+1}$, then $w_n=b^{c+1}$. If $w_{n'}\neq b^{c+1}$, then $w_n=\varepsilon$.
	\end{enumerate}
\end{lemma}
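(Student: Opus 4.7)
The plan is to trace the trajectory of each pawn under $w = a w_{n'} a w_{n'-1} \cdots a w_1 a w_n$ and combine this with the minimality inherent in $w$ being a prefix of a shortest full synchronizing word. The key trajectory observation is that a pawn starting at $k \in [n']$ arrives at state $n-c$ precisely at the moment just before the factor $w_k$ is applied: the intermediate factors $w_{n'}, \ldots, w_{k+1}$ act as the identity on the pawn (since $qb=q$ for $q \in [n']$), while the $a$-symbols advance the pawn one step each, bringing it from $k$ to $n-c$ after exactly $n'-k+1$ applications of $a$, i.e., after $a_{n'-k+1}$, which is immediately followed by $w_{n'+1-(n'-k+1)} = w_k$.

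For (i), when $k \in S$ this trajectory forces $w_k \ne \varepsilon$, since otherwise the next $a$-symbol, $a_{n'-k+2}$, would be undefined at $n-c$; hence $w_k \in \{b^c, b^{c+1}\}$. Conversely, when $k \in [n']\setminus S$, a direct check on the trajectories shows that no pawn is at any state in $\{n-c, \ldots, n-1, n\}$ just before $w_k$: pawns from $k' \in S$ with $k' > k$ have already passed through these states and lie at position $k'-k$ or $k'-k+1$ in $[n']$, while those with $k' < k$ are still ascending and lie at position $k' + (n'-k+1) \le n'$. Thus $w_k$ acts as the identity on the current pawn configuration, and minimality forces $w_k = \varepsilon$. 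For (ii), continue the trajectory past $w_k$: if $w_k = b^c$, the pawn traces $n-c \to n \to 1 \to 2 \to \cdots$ and arrives at $k$ after $a_{n'+1}$; if $w_k = b^{c+1}$, the analogous path ends at $k+1$. Since $w_n$ acts as the identity on any pawn in $[n']$, these are the final positions. The case $k = n'$ is identical, landing at $n'$ if $w_{n'} = b^c$ and at $1 \equiv n'+1 \pmod{n'}$ if $w_{n'} = b^{c+1}$, using part~(iii) for the value of $w_n$ in the second case.

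For (iii), the easy direction follows from the trajectory analysis: if $w_{n'} \in \{\varepsilon, b^c\}$, every pawn lies in $[n']$ just before $w_n$, so $b^c$ and $b^{c+1}$ would act as identity; minimality then gives $w_n = \varepsilon$. The harder direction is that $w_{n'} = b^{c+1}$ forces $w_n = b^{c+1}$. Here the pawn from $n'$ is at state $n-c$ after $a_{n'+1}$, so $w_n \ne \varepsilon$. To exclude $w_n = b^c$, apply Lemma~\ref{claim:n-c} in the automaton $C_{n-c}^{*}$ (translated back to $C_n^c$ via Corollary~\ref{cor:1.4}) to the suffix of $w$ beginning just after $a_{n'+1}$; this suffix has minimum $c$-weighted length for its remaining task, as $w$ itself does. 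The pawn at $n-c$ is a chaser, a status inherited from the earlier choice $w_{n'} = b^{c+1}$ (via Lemma~\ref{claim:n-c}) and preserved because a direct trajectory comparison shows that the pawn from $n'$ meets no other pawn during the iteration word (for any $k' < n'$ the corresponding pawn's position never equals the position of the pawn from $n'$ at the same time). Lemma~\ref{claim:n-c}(ii)--(iii), together with the $c=0$ tiebreaker from Corollary~\ref{cor:strategy}(iii) invoked in the hypothesis, then forces the next $\Gamma$-symbol to be $\tilde{b} = b^{c+1}$, so $w_n = b^{c+1}$. The main obstacle is precisely this last step: verifying that the chaser status of the pawn from $n'$ survives a full traverse of the cycle, and that Lemma~\ref{claim:n-c} can be validly invoked on the suffix of $w$ starting just after $a_{n'+1}$.
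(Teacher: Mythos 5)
Your proof is correct and follows essentially the same route as the paper's: parts (i) and (ii) via the explicit pawn trajectory through the iteration word, and part (iii) via the chaser status of the pawn from $n'$ combined with Lemma~\ref{claim:n-c}/Corollary~\ref{cor:strategy} applied to the suffix. The one place where you work harder than necessary is the step you flag as the ``main obstacle'': to see that the pawn from $n'$ has not merged by the time $w_n$ is reached, you compare full trajectories over all intermediate times, whereas the paper simply writes $w = v w_n$ and observes that $n'v = n-c$ while $kv = kw \ne n-c$ for every other $k \in S$ (by part (ii), already established for $k<n'$); since merged pawns never separate, this single comparison of positions after $v$ already rules out any earlier merge. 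Your trajectory comparison is valid but not needed, and the step is not a gap.
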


	\begin{proof} The word $w$ has the following properties for $1\leq k \leq n'$:
		\begin{align}\label{eq:iterationword}
		kw = \left\{
		\begin{array}{lll}
		\bot & \text{if} & w_k = \varepsilon\ \text{and}\ c\neq 0\\
		k & \text{if} & w_k = b^c\\
		k+1 \qquad& \text{if} & w_k =b^{c+1}, k\neq n'.
		\end{array}
		\right.
		\end{align}
		For $k\in [n']$, $w_k$ can only affect a pawn in state $k$, so that $w_k=\varepsilon$ if $k\in [n']\setminus S$. For $k\in S$, since $kw$ has to be defined, it follows that $w_k\in\{b^c,b^{c+1}\}$ proving (i). Statement (ii) follows as well, except for the case where $k = n'$ and $w_{n'} = b^{c+1}$.
		
		To complete the proof of (ii), and to prove the first claim of (iii), suppose that $w_{n'} = b^{c+1}$. Then it follows from Corollary \ref{cor:strategy} that the pawn in state $n'$ is a chaser. Write $w=vw_n$. Then $n'v = n-c$ and $kv=kw\neq n-c$ for all $k\neq n'$. Therefore, the pawn under consideration did not merge yet and is still chasing. Using Corollary \ref{cor:strategy} again, combined with the assumption of following the choice in Corollary \ref{cor:strategy}(iii), we infer that $w_n=b^{c+1}$. This yields the first part of (\ref{claim3.3}). Therefore, $n'w = n'vw_n = (n-c)b^{c+1} = 1$. This completes the proof of (ii).
		
		If $w_{n'} = b^c$ or $w_{n'}=\varepsilon$, then $Sv\subseteq [n']$. By Lemma \ref{clm:obs1}(\ref{claim1.3}), a shortest synchronizing word for $S$ then starts with $va$ so that $w_n=\varepsilon$, completing the proof.
		\qed\end{proof}
	
\ifijfcs	
	\begin{proof}[Proof of Theorem \ref{thm:reduction}]
\else
	\begin{proof}[of Theorem \ref{thm:reduction}]
\fi
		The idea of Lemma \ref{claim:iterationword} is that an iteration word can be used to decide for every pawn if it has to move one step (at the cost of $c+1$ letters $b$ and possible more if we needed $w_n=b^{c+1}$), or to stay where it is (at the cost of $c$ letters $b$).
		The optimal choice depends on the pawn being a chaser or a resigner, where we follow the choice in Corollary \ref{cor:strategy}(iii) if $c = 0$.
		After applying an optimal iteration word, all pawns will be located on a subset of $[n']$. Consequently, every shortest full synchronizing word can be partitioned into iteration words. 
		
		As the yellow jersey starts in $j \in[n']$, the lanterne rouge starts in $j+1 \mod n'$. 
		Observe that after each iteration, the lanterne rouge (being a chaser) will have moved from $\ell$ to $\ell+1 \pmod {n'}$, while the yellow jersey is still at $j$. After $n'-1$ iterations, both the lanterne rouge and the yellow jersey and hence all initial pawns are in $j$.
		For the shortest synchronizing word, it is sufficient to have all pawns in state 1, so we can delete $a^{j-1}$ at the end of the shortest full synchronizing word. 
		
		Hence the number of letters $a$ in a shortest synchronizing word equals $(n'-1)\bcdot(n'+1)-(j-1)$. We have used $c+1$ letters $b$ in the beginning and at least $f_c(n')$ letters $b$ in all iteration words. By Lemma \ref{claim:iterationword}(\ref{claim3.3}), there is an additional cost of $c+1$ letters $b$ for each iteration word with $w_{n'}=b^{c+1}$. Now suppose the yellow jersey starts in $j=n'$. This minimizes the number of $a$'s. Furthermore, since the yellow jersey is always a resigner, $w_{n'}=b^c$ in each iteration. Consequently, there will be no additional costs for $w_n$, so that the minimal possible length as given in Theorem \ref{thm:reduction} is obtained for $j=n'$.  
		\qed\end{proof}

	\section{Recursive and Asymptotic Results}\label{sec:recursive_asymptotic}
	
	We will now turn our attention to the analysis of Problem \ref{prob:pawnproblem}. The following proposition gives a recursive formula for the solution.
	
	\begin{proposition}~\label{prop:recursion_fc}The function $f_c$ satisfies $f_c(1)=0$ and 
\begin{align*}
f_c(n)&=\min\big\{ f_c(i)+f_c(n-i)+(c+1)n-i \,\big|\, 1 \le i \le n-1 \big\}. %\\
%&=\min\big\{ f_c(i)+f_c(n-i)+(c+1)n-i \,\big|\, \tfrac n2  \le i \le n-1 \big\}.
\end{align*}
	\end{proposition}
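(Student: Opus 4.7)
The plan is to induct on $n$. The base case $f_c(1) = 0$ is immediate. For the induction I carry along the auxiliary property that some optimal strategy for $k$ pawns at $\{1, \ldots, k\}$ uses exactly $k - 1$ iterations and ends with the merged pawn at position $k$; both values are minima, as pawn $1$ must make at least $k - 1$ moves and pawn $k$ never moves below $k$. For the recursion itself I prove the two inequalities $f_c(n) \le X_i$ for every $i$ and $f_c(n) \ge X_j$ for the split $j$ at the last merge of any optimal strategy, where I write $X_i := f_c(i) + f_c(n - i) + (c + 1) n - i$.

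For the upper bound, fix $i$ and construct an explicit strategy of cost $X_i$. Using the inductively optimal strategies, merge the left group $\{1, \ldots, i\}$ in parallel with the right group $\{i+1, \ldots, n\}$, so that the left becomes a single pawn at position $i$ in $T_L = i - 1$ iterations and the right becomes a single pawn at position $n$ in $T_R = n - i - 1$ iterations. After its merge, the left pawn walks one step per iteration toward $n$ for a further $n - i$ iterations, while the right pawn stays at $n$ for its remaining $i$ iterations; the two meet and perform the final merge at iteration $T = n - 1$. The left side contributes $f_c(i) + (c + 1)(n - i)$ and the right side contributes $f_c(n - i) + c i$, which sum to $X_i$. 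This also re-establishes the auxiliary property for $n$.

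For the lower bound, take any optimal strategy and let $i$ be the split at its last merge (well-defined because the left-to-right order of pawns is preserved by the moves). Let $T_L, T_R, T$ denote the internal-merge completion iterations and the total iteration count, let $p_L \ge i$ and $p_R \ge n$ be the positions of the two merged pawns at those moments, and let $F \ge n$ be the final merge position. I decompose the cost into left and right contributions. Using the auxiliary identity $f_c(k, p) = f_c(k) + (c + 1)(p - k)$ for the minimum cost of merging $k$ pawns at $\{1, \ldots, k\}$ into a single pawn at $p \ge k$ (obtained by extending an inductively optimal $f_c(k)$-strategy by $p - k$ walk iterations), together with the constraint $T - T_L \ge F - p_L$, the left contribution is at least $f_c(i) + (c + 1)(F - i)$. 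A symmetric argument bounds the right contribution by $f_c(n - i) + c i$: one uses the lower bound $T_R \ge n - i - 1$ on the right internal merge (every excess iteration contributing at least an additional $c$ to its cost, from the lone remaining right pawn) together with the global lower bound $T \ge n - 1$ (from pawn $1$'s required displacement) to close the gap. The sum is at least $X_i$. The hardest step will be the right estimate, since a naive bound on the right pawn's post-merge stay alone yields only $+ c$; one has to combine the constraints on $T_R$, $T$, and the right internal-merge cost in tandem to extract the full $+ c i$ contribution.
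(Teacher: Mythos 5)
Your overall plan is sound and actually mirrors the paper's: you split at the last merge (the peloton/first-group split $i$), handle the upper bound by explicitly gluing two optimal sub-strategies and letting the two merged pawns meet at iteration $n-1$, and try to match that with a lower bound. The cost bookkeeping $(c+1)(n-i) + ci = (c+1)n - i$ is exactly right, and the order-preserving observation that makes the split well-defined is also correct. However, there is a genuine gap in the lower bound, and you have half-acknowledged it yourself.

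The problem is that both of your lower-bound ingredients are stated but not proved in the direction you need. For the left side you invoke the identity $f_c(k,p) = f_c(k) + (c+1)(p-k)$, but you only justify the $\le$ direction (build an $f_c(k)$-optimal merge ending at $k$ in $k-1$ iterations, then walk). What the lower bound actually requires is the $\ge$ direction: that \emph{every} strategy putting $k$ pawns at $\{1,\dots,k\}$ into one pawn at $p$ costs at least $f_c(k) + (c+1)(p-k)$, and that does not follow from your auxiliary inductive property (which only records the existence of one favourable optimal strategy). The right side is worse: you need the full $+ci$, and you correctly note that "a naive bound on the right pawn's post-merge stay alone yields only $+c$." The claim you need — that every iteration beyond $n-i-1$ in the right group's internal merge contributes at least $c$ extra cost, i.e.\ $C_R \ge f_c(n-i) + c\big(T_R - (n-i-1)\big)$ — is plausible but is exactly the kind of statement that requires a displacement argument, and you do not give one. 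Without it, the inequality $(F-p_R) + c(T-T_R) \ge ci$ can fail if $T_R$ is allowed to exceed $n-i-1$.

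The paper closes precisely this gap with the structural Claim that, in any optimal race, resigners always stay and chasers always move (for $c\ne 0$; and can be chosen to for $c=0$), proved by the pawn-displacement technique of Lemmas~\ref{lem:resigner} and \ref{lem:chaser}. From that Claim one gets immediately that $T = n-1$, $F = p_R = n$, $T_L = i-1$, $T_R = n-i-1$, and the peloton's internal cost is exactly $f_c(i)$, so the decomposition is forced rather than merely one possibility among many. Your approach avoids quoting that Claim, which is attractive, but the two estimates you left as "the hardest step" are not softer than the Claim — they essentially \emph{are} the Claim, in different clothing. To complete your proof you would need to supply a displacement lemma of that strength (e.g.\ "shifting a resigner back one position, or deleting it, reduces the optimal cost by at least $c$; shifting a chaser forward, or deleting it, by at least $c+1$"), after which both the $\ge$ direction of your auxiliary identity and the right-side $+ci$ follow cleanly.
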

	
	\begin{proof}
		In Problem \ref{prob:pawnproblem}, we define chasers and resigners as before. Since we now work on $\mathbb{Z}$, the pawn at $1$ is the lanterne rouge and the pawn at $n$ is the yellow jersey. In total we will need $n-1$ iterations (or we can assume so if $c=0$) by the following simple analog of a special case of Lemma~\ref{claim:n-c}.

		\par\noindent\emph{Claim.} %\begin{claim}\label{claim:n-c_on_Z}
			Let $c \in \N$ and $S=[n] \subseteq \Ns$ be the set of pawn positions. Suppose that the pawns merge to one pawn by an optimal set of iterations.
			Then the following holds for each pawn in every iteration.
			\begin{enumerate}[(i)]
				\item If the pawn is a resigner, then it will stay.
				\item If the pawn is a chaser and $c \ne 0$, then it will move.
				\item If the pawn is a chaser and $c = 0$, then we can choose it to move.
			\end{enumerate} 
\par%\end{claim}

		This claim can be proved with the same ideas (pawn displacement) Lemma~\ref{claim:n-c} has been proved.
		
		Let $\sigma_j(k)$ be the position after $j$ iterations of the pawn that starts in $k$.
		After $n-2$ iterations, all pawns are merged into the lanterne rouge at $n-1$ and the yellow jersey at $n$.
		Let $I=\sigma_{n-2}^{-1}(n-1)=\{1,2,\ldots,i\}$ (being the peloton) and $J=\sigma_{n-2}^{-1}(n)=\{i+1,\ldots, n\}$ (being the first group). See also Figure ~\ref{fig:optimalraces}.
		
		Now note that the pawn at $i$ is a resigner until the full peloton has merged into one pawn in position $i$. The minimal cost for this is equal to $f_c(i)$. In each of the remaining $n-i$ iterations, this pawn will be a chaser at cost $c+1$.   
		Similarly, the pawn starting in $i+1$ is a chaser until the first group has merged into one pawn in position $n$. This takes $n-i-1$ iterations and the minimal cost to merge the first group is $f_c(n-i)$. In the remaining $i$ iterations, the pawn at $n$ is a resigner at cost $c$. 
		
		So the minimum cost is indeed 
		$f_c(i)+f_c(n-i)+(c+1)n-i$, where we have to minimize over all possible $1\le i \le n-1$.
		%Since $f_c(i)+f_c(n-i)$ is a symmetric function around $\frac n2$ and $(c+1)n-i$ is decreasing in $i$, we also know $\frac n2  \le i \le n-1$. Hence we have determined the recursion for the function $f_c$.
	\qed\end{proof}
	
	In Figure~\ref{fig:optimalraces} we have presented three ways in which the minimum cost can be attained when $c=1$ and $n=7$ (the pawn at place $3$ having two choices in the right part).  
Here resigners are drawn amber (light) and chasers red (dark). 
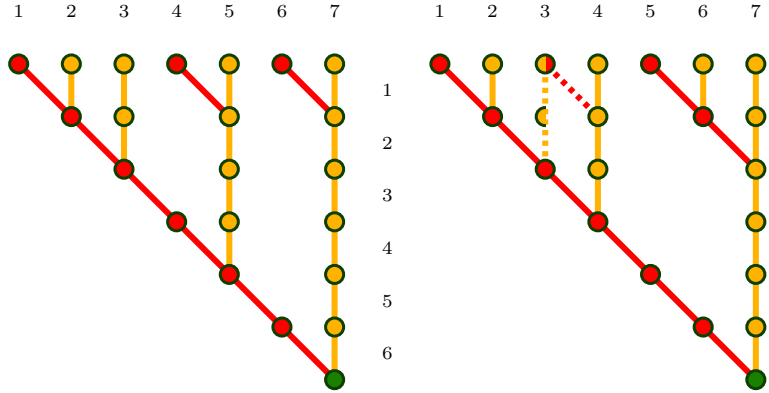
\begin{figure}[ht]
%\vspace{-3pt}
\centering{
\begin{tikzpicture}[x=7mm,y=-7mm]
% draw integers
\foreach \i in {1,...,7} {
  \node at (\i,0) {\scriptsize\i};
  \node at (\i+8,0) {\scriptsize\i};
}
\foreach \s in {1,...,6} {
  \node at (8,\s+0.5) {\scriptsize\s};
}
% draw lines manually
\draw[myline,Resigner,mydash] (11,1.05) -- (11,3);
\draw[myline,Chaser,mydash] (11.036,1.036) -- (12,2);
% draw lines automatically
\foreach \x/\s in { 1/6, 4/1, 6/1, 9/6, 13/2 } {
   \draw[myline,Chaser] (\x,1) -- (\x+\s,\s+1);
}
\foreach \x/\s in { 2/1, 3/2, 5/4, 7/6, 10/1, 12/3, 14/1, 15/6 } {
   \draw[myline,Resigner] (\x,1) -- (\x,\s+1);
}
% draw circles manually
\fill[mycirc,Chaser] (11,1) circle[];
\begin{scope}
  \clip (10,0) rectangle (11,3);
  \fill[mycirc,Resigner] (11,1) circle[];
  \draw[mycirc,fill=Resigner] (11,2) circle[];
\end{scope}
\draw[mycirc] (11,1) circle[]; 
\draw[mycirc,fill=lastPawn] (7,7) circle[] (15,7) circle[] (15,7);
% draw circles automatically
\foreach \x/\s in { 1/6, 4/1, 6/1, 9/6, 13/2 } {
   \draw[mycirc,fill=Chaser] foreach \y in {1,...,\s} {
      (\x+\y-1,\y) circle[]
   };
}
\foreach \x/\s in { 2/1, 3/2, 5/4, 7/6, 10/1, 12/3, 14/1, 15/6 } {
   \draw[mycirc,fill=Resigner] foreach \y in {1,...,\s} {
      (\x,\y) circle[]
   };   
}
\end{tikzpicture}
}
\caption{The three optimal races for $n=7$ and $c=1$. The positions are indicated above and the iterations are numbered in the middle. The peloton has size 5 in the left race and size 4 in the other two races.}
\label{fig:optimalraces}
%\vspace{-5pt}
\end{figure}	
In the optimal races in this example, there are either 8 chasers and 13 resigners or 9 chasers and 11 resigners (counted with multiplicity). The total cost $f_1(7)$ therefore is
	\[
	f_1(7) = 8\cdot 2+13\cdot 1 = 9\cdot 2+11\cdot 1= 29.
	\]
	Computing $f_1(1),\ldots,f_1(5)$ by the recursion gives 0, 3, 7, 12, 17 which can be used to alternatively express the total cost $f_1(7)$ by
	\[
	f_1(7) = f_1(5)+f_1(2)+2\cdot 7-5 = f_1(4)+f_1(3)+2\cdot 7-4 = 29.
	\]
In Section \ref{sec:solve_fc}, we will express $f_c(n)$ in terms of recurrent sequences. But this is not necessary to determine the order of growth of the function $f_c(n)$. 
	
	\begin{proposition}\label{prop:bounds_fc}
		For all $c\geq 0$ and $n\geq 1$, we have
		\[
		c n \log_2(n) \le f_c(n)\le (c+\tfrac12) n \big\lceil \log_2(n) \big\rceil.
		\]
	\end{proposition}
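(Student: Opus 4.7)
The plan is strong induction on $n$, applied to both inequalities in parallel, fed by the recursion in Proposition~\ref{prop:recursion_fc}. The base case $n = 1$ is immediate: $f_c(1) = 0$ and $\log_2 1 = \lceil \log_2 1 \rceil = 0$.

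For the upper bound at $n \ge 2$, I would pick the balanced split $i = \lceil n/2 \rceil$, which makes the linear cost satisfy $(c+1)n - \lceil n/2 \rceil \le (c+\tfrac{1}{2})n$. A routine check (namely: if $2^{k-1} < n \le 2^k$ then $\lceil n/2 \rceil \le 2^{k-1}$) shows that both $\lceil n/2 \rceil$ and $\lfloor n/2 \rfloor$ satisfy $\lceil \log_2 \,\cdot\, \rceil \le \lceil \log_2 n \rceil - 1$, where we interpret $\lceil \log_2 1 \rceil$ as $0$. Feeding the induction hypothesis into the recursion and using $\lceil n/2 \rceil + \lfloor n/2 \rfloor = n$, the log contribution comes to at most $(c+\tfrac{1}{2}) n (\lceil \log_2 n \rceil - 1)$, which combined with the $(c+\tfrac{1}{2})n$ linear term delivers exactly the claimed upper bound.

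For the lower bound, I would consider an arbitrary split $1 \le i \le n-1$ and write $i = \alpha n$ with $\alpha \in (0, 1)$. Applying the inductive hypothesis and the standard entropy identity
\[
i \log_2 i + (n-i) \log_2 (n-i) \;=\; n \log_2 n \;-\; n H(\alpha),
\]
where $H(\alpha) = -\alpha \log_2 \alpha - (1-\alpha)\log_2(1-\alpha)$, gives
\[
f_c(i) + f_c(n-i) + (c+1)n - i \;\ge\; cn \log_2 n \;+\; n\bigl[(c+1) - \alpha - c H(\alpha)\bigr].
\]
Since $H(\alpha) \le 1$ on $[0,1]$, the bracket is at least $1 - \alpha > 0$, so every candidate value in the recursion strictly exceeds $cn \log_2 n$. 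Taking the minimum over $i$ preserves this inequality, completing the induction.

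The only place that demands a little care is the ceiling-log inequality for $\lceil n/2 \rceil$; the lower-bound argument carries generous slack (the bracket contributes at least $n - i \ge 1$), so the real conceptual move is spotting the entropy structure and noticing that the crude bound $H \le 1$ absorbs the $cH(\alpha)$ term without further work. No subtle obstacle is expected.
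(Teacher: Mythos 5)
Your proof is correct and follows essentially the same approach as the paper: strong induction fed by the recursion of Proposition~\ref{prop:recursion_fc}, with the balanced split $i = \lceil n/2 \rceil$ and the ceiling-log inequality for the upper bound, and convexity of $i \mapsto i\log_2 i + (n-i)\log_2(n-i)$ for the lower bound. Your entropy identity together with $H(\alpha) \le 1$ is just a repackaging of the paper's observation (via the derivative $\log_2 i - \log_2(n-i)$) that this sum is minimized at $i = n/2$, so the two arguments are substantively identical.
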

	
	\begin{proof}
		Both bounds can be proved by induction, the base case for $n=1$ being true.
		For the lower bound, fix $n$ and assume that $ci \log_2(i)\le f_c(i)$ for $i<n$. 	This implies that for every  $1\le i \le n-1$ we have that  
		\begin{align*}
		f_c(i)+f_c(n-i)+ (c+1)n-i
		&\ge c \big( i \log_2 (i) + (n-i) \log_2 (n-i) + n \big).
		\end{align*}
		Note that $i \log_2 (i) + (n-i) \log_2 (n-i) $ is minimized when $i=\frac n2$ since its derivative (as a function of $i$ for $n$ fixed) is $\log_2(i) - \log_2 (n-i).$ Plugging in $i=\frac n 2$ gives $cn \log_2(n)$ on the right hand side. So, we have $cn \log_2(n)\le f_c(n)$ and we conclude by mathematical induction.
		
		For the upper bound, assuming it is true for values strictly smaller than $n$ (where $n>1$), we have 
		\begin{align*}
		f_c(n) &\le f_c\left( \left \lfloor \frac n2 \right \rfloor \right)+f_c\left(\left  \lceil \frac n2 \right \rceil  \right)+(c+1)n - \left \lceil \frac n2 \right \rceil \\
		&\le \big(c+\tfrac12\big)n \left \lceil \log_2\left( \left \lceil \frac n2 \right\rceil\right) \right \rceil + \big(c+\tfrac12\big)n \\
		&= \big(c+\tfrac12\big) n \big\lceil \log_2(n) \big\rceil.
		\end{align*}
		So again by mathematical induction the bound does hold for every $n$.
	\qed\end{proof}

	As a corollary of Proposition \ref{prop:bounds_fc}, we determine the asymptotic growth of maximal reset thresholds in the \v{C}ern\'y family. 

	\begin{theorem}\label{thm:asymptotics} Denoting the reset threshold of $C_n^c$ by $r(C_n^c)$, we have
		\[
		\max_c r(C_n^c) \sim \tfrac1{4} n^2\log_2(n).
		\]
	
	\end{theorem}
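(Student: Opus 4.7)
The plan is to combine Theorem~\ref{thm:reduction}, which gives $r(C_n^c) = n'(n'-1) + c + 1 + f_c(n')$ with $n' = n-c-1$, together with the two-sided bound on $f_c(n')$ from Proposition~\ref{prop:bounds_fc}. Since $n'(n'-1) + c + 1 = O(n^2)$ uniformly in $c \in \{0, 1, \ldots, n-2\}$, whereas the extremal values of $f_c(n')$ will turn out to be of order $n^2 \log_2 n$, the purely quadratic contribution is negligible for the asymptotic. So the task reduces to estimating $\max_{c} f_c(n-c-1)$ up to a $(1+o(1))$ factor.

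For the upper bound, I will use $f_c(n') \le (c + \tfrac12) n' \lceil \log_2 n' \rceil$ from Proposition~\ref{prop:bounds_fc}. Substituting $n' = n-c-1$, the prefactor $(c+\tfrac12)(n-c-1)$ is a concave quadratic in $c$, maximized near $c = n/2$ with value at most $(n/2 - 1/4)^2 \le n^2/4$. Since $\lceil \log_2 n' \rceil \le \log_2 n + 1$, this gives $f_c(n') \le \tfrac14 n^2 \log_2 n \cdot (1 + o(1))$ uniformly in $c$, and thus $\max_c r(C_n^c) \le \tfrac14 n^2 \log_2 n \cdot (1 + o(1))$.

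For the matching lower bound I will take $c^* = \lceil n/2 \rceil - 1$, so that $n' = \lfloor n/2 \rfloor \sim n/2$ (which satisfies $n \ge c^* + 2$ for $n \ge 4$). The lower estimate of Proposition~\ref{prop:bounds_fc} then yields
\[
f_{c^*}(n') \ge c^* n' \log_2 n' \sim \tfrac{n}{2} \cdot \tfrac{n}{2} \cdot \log_2\!\tfrac{n}{2} \sim \tfrac14 n^2 \log_2 n,
\]
and hence $r(C_n^{c^*}) \ge f_{c^*}(n') \sim \tfrac14 n^2 \log_2 n$. Combining the two estimates yields the claimed asymptotic $\max_c r(C_n^c) \sim \tfrac14 n^2 \log_2 n$. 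There is no real obstacle: the argument is a routine optimization over $c$ layered on top of the two preceding propositions, with the only point worth checking being that the quadratic tail $n'(n'-1) + c + 1$ is genuinely dominated by the logarithmic main term as $n \to \infty$.
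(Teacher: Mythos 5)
Your proof is correct and follows essentially the same route as the paper: reduce via Theorem~\ref{thm:reduction} after noting the quadratic term is $o(n^2\log_2 n)$, apply the upper bound of Proposition~\ref{prop:bounds_fc} and maximize the concave quadratic $(c+\tfrac12)(n-c-1)$ to get $\tfrac14 n^2\log_2 n\,(1+o(1))$, then take $c \approx n/2$ (your $\lceil n/2\rceil - 1$ coincides with the paper's $\lfloor (n-1)/2\rfloor$) and invoke the lower bound of the same proposition. No issues.
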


	\begin{proof}
		By Theorem~\ref{thm:reduction} and the fact that $n'(n'-1) + c + 1 \le n^2=o\big(n^2 \log_2(n)\big)$ for $n'= n - c - 1$, it is sufficient to prove that $\max_c f_c(n-c-1) \sim \tfrac1{4} n^2\log_2(n)$. By Proposition~\ref{prop:bounds_fc}, we have for the upper bound that 
		\begin{align*}
f_c(n-c-1) &\le \big(c+\tfrac12\big)(n-c-1) \big\lceil \log_2(n-c-1) \big\rceil \\
&\le \frac{n^2}{4} \big\lceil \log_2(n) \big\rceil = \big(1+o(1)\big)\frac{n^2}{4}\log_2(n).
\end{align*}

		For the lower bound, we choose $c=\left \lfloor \frac {n-1}2 \right \rfloor$, and from the lower bound in Proposition~\ref{prop:bounds_fc} we get that 
		\begin{align*}
		f_c(n-c-1)&\ge c(n-c-1)\log_2(n-c-1)\\ &\ge \frac{(n-1)^2-1}{4}\log_2\Big(\frac n2 - 1\Big) = \big(1-o(1)\big)\frac{n^2}{4}\log_2(n). 
		\end{align*}
		The two bounds together imply the result.
	\qed\end{proof}

One can show that the optimal choice $c'$ for $c$ satisfies $\big|\frac n2 - c'\big| = o(n)$. In Proposition \ref{prop:optimalc} in Section \ref{sec:est_fc}, we will prove that 
$$
0 < \tfrac n2 - c' = \Theta\big(n/{\log(n)}\big).
$$

Another corollary of Proposition \ref{prop:bounds_fc} is the following.

\begin{corollary} \label{betterthancerny}
$r(C^1_n) > r(C^0_n)$ for all $n \ge 6$. So if the \v{C}ern\'y conjecture holds, then $p(n,2)>d(n)$ for all $n\geq 6$.
\end{corollary}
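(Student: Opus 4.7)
The plan is to apply Theorem~\ref{thm:reduction} with $c=0$ and with $c=1$ in order to translate $r(C_n^1)>r(C_n^0)$ into a statement about the pawn-race function $f_1$. From the recursion in Proposition~\ref{prop:recursion_fc}, taking $i=n-1$ at every level, one immediately obtains $f_0(k)=k-1$, so Theorem~\ref{thm:reduction} recovers $r(C_n^0)=(n-1)(n-2)+1+(n-2)=(n-1)^2$. For $c=1$ it gives $r(C_n^1)=(n-2)(n-3)+2+f_1(n-2)$, and after subtracting and setting $m=n-2$, the desired inequality $r(C_n^1)>r(C_n^0)$ becomes
\[
f_1(m)\;>\;3m-1\qquad\text{for all } m\geq 4.
\]

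The next step is to split into two ranges. For $m\geq 8$, Proposition~\ref{prop:bounds_fc} supplies
\[
f_1(m)\;\geq\; m\log_2(m)\;\geq\; 3m\;>\;3m-1,
\]
since $\log_2 m\geq 3$; this handles the asymptotic regime in one line. For the short range $4\leq m\leq 7$ the logarithmic lower bound is too weak (for instance, it only yields $f_1(4)\geq 8$, short of the required value $12$), so I would simply evaluate $f_1(m)$ for $m=1,\ldots,7$ directly from the recursion of Proposition~\ref{prop:recursion_fc} and check the four base inequalities $f_1(4)>11$, $f_1(5)>14$, $f_1(6)>17$, $f_1(7)>20$ by hand.

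Having established $r(C_n^1)>r(C_n^0)$ for every $n\geq 6$, the second assertion is a one-line consequence: $p(n,2)\geq r(C_n^1)>r(C_n^0)=(n-1)^2$, while Conjecture~\ref{cernyconjecture} combined with the \v{C}ern\'y lower bound forces $d(n)=(n-1)^2$, so $p(n,2)>d(n)$. The only real obstacle is that the small range $4\leq m\leq 7$ cannot be absorbed into the asymptotic estimate of Proposition~\ref{prop:bounds_fc} and must be verified by direct computation from the recursion; everything else is bookkeeping built on Theorem~\ref{thm:reduction}.
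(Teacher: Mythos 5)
Your proposal is correct and follows essentially the same route as the paper: both invoke Theorem~\ref{thm:reduction} together with the lower bound of Proposition~\ref{prop:bounds_fc} (requiring $\log_2(n-2)\geq 3$, i.e.\ $n\geq 10$) for the asymptotic range, and handle $6\leq n\leq 9$ by explicit evaluation. Your reformulation via $f_1(m)>3m-1$ with $m=n-2$ is only a cosmetic rearrangement of the paper's estimate $r(C^1_n)\geq (n-2)(n-3)+2+3(n-2)=(n-1)^2+1$.
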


\begin{proof}
If $6 \le n \le 9$, then $r(C^1_n) > r(C^0_n)$ follows from explicit computations. So assume that $n \ge 10$. Then Theorem \ref{thm:reduction} and the lower bound of Proposition \ref{prop:bounds_fc} yields
\begin{align*}
r(C^1_n) &= (n-2)(n-3) + 1 + 1 + f_1(n-2) \\
&\ge (n-2)(n-3) + 1 + 1 + (n-2)\log_2(n-2) \\
&\ge (n-2)(n-3) + 2 + (n-2)3 \\
&= (n-1)^2 + 1 
\end{align*}
Although it is known that $r(C^0_n) = (n-1)^2$, we provide a proof of that in the next section. So $r(C^1_n) > r(C^0_n)$ holds for all $n \ge 6$.
\qed\end{proof}
	
	\section{Explicit Solution of the Pawn Race Problem}\label{sec:solve_fc}
	
	In this section, we determine the solution of Problem \ref{prob:pawnproblem}, from which  by Theorem~\ref{thm:reduction} the exact expressions for the reset thresholds of $C_n^c$ for all $n$ follow as well.
	
	When $c=0$, we see that the lanterne rouge has to move $n-1$ times. If the other pawns do not move, we get the minimum cost of $n-1$, i.e. $f_0(n)=n-1.$ This result can also immediately be derived from Proposition~\ref{prop:recursion_fc}.
	Theorem~\ref{thm:reduction} now gives that the reset threshold of $C_n^0$ is indeed equal to $(n-1)(n-2)+1+(n-2)=(n-1)^2$, which yields an alternative proof for the well-known reset thresholds of the \v{C}ern\'y sequence. 
	
	When $c\ge1$, our approach is based on solving the recursion given in Proposition~\ref{prop:recursion_fc}. In the conference version of this paper \cite{BCD21}, we saw that the Fibonacci numbers play an important role in the solution for $c = 1$, and that the Padovan numbers play a similar role in the solution for $c = 2$. We will see that these numbers enter the picture when determining the set of values of $i$ for which $f_c(i)+f_c(n-i)+(c+1)n-i$ is minimal. 

Let $p : \Ns \rightarrow \Ns$ be increasing and not bounded. Define $m : \Ns \rightarrow \Ns$ by 
$$
m(i) = \min\{j \mid i < p(j)\}.
$$
We call $m$ the \emph{twinverse} (twisted inverse) of $p$.
\begin{figure}[h]
\centering{
\begin{tikzpicture}[x=5mm,y=5mm]
% fill below graphs
\fill[lightTwins] (0,0) rectangle (1,1);
\fill[lightMichiel] (1,0) rectangle (13,13);
\fill[lightPepijn] (0,1) -- (1,1) foreach \p in { (1,5), (2,5), (2,7), (3,7), (3,8), (4,8), (4,9), (5,9), (5,10), (7,10), (7,11), (9,11), (9,12), (12,12), (12,13) } { -- \p } -- (0,13) -- cycle;
% draw grid and coordinates
\draw[very thin,black] foreach \x in {1,2,...,13} { (\x,-0.5) -- (\x,13.5) }; 
\draw[very thin] foreach \x in {1,2,...,13} { (\x,-0.5) node[anchor=north] {\scriptsize\x} }; 
\draw[very thin,black] foreach \y in {1,2,...,13} { (-0.5,\y) -- (13.5,\y) };
\draw[very thin] foreach \y in {1,2,...,13} { (-0.5,\y) node[anchor=east] {\scriptsize\y} };
\draw[very thick] (0,13.5) -- (0,0) -- (13.5,0);
% draw boundary of fills
\draw[very thick] (0,1) -- (1,1) -- (1,0) (0,13) -- (13,13) -- (13,0);
\draw[myline] (1,1) foreach \p in { (1,5), (2,5), (2,7), (3,7), (3,8), (4,8), (4,9), (5,9), (5,10), (7,10), (7,11), (9,11), (9,12), (12,12), (12,13) } { -- \p} -- (13.5,13);
% draw circles
\draw[mycirc,fill=Pepijn] (1,1) circle[] ++ (0,1) circle[] ++ (0,1) circle[] ++ (0,1) circle[] ++ (1,1) circle[] ++ (0,1) circle[] ++ (1,1) circle[] ++ (1,1) circle[] ++ (1,1) circle[] ++ (2,1) circle[] ++ (2,1) circle[] ++ (3,1) circle[]; 
\draw[mycirc,fill=Michiel] (1,5) circle[] ++ (1,2) circle[] ++ (1,1) circle[] ++ (1,1) circle[] ++ (1,1) circle[] ++ (1,0) circle[] ++ (1,1) circle[] ++ (1,0) circle[] ++ (1,1) circle[] ++ (1,0) circle[] ++ (1,0) circle[] ++ (1,1) circle[] ++ (1,0) circle[];
% draw legend
\draw[mycirc,fill=Pepijn] (15,8) circle[];
\draw[mycirc,fill=Michiel] (15,5) circle[];
\node[anchor=west,inner sep=0pt] at (15.5,8) {$\big\{\big(p(i),i\big) \mid i \in \Ns\big\}$};
\node[anchor=west,inner sep=0pt] at (15.5,5) {$\big\{\big(i,m(i)\big) \mid i \in \Ns\big\}$};
\end{tikzpicture}
}
\caption{Illustration of twinverses.} \label{fig:twins}
\end{figure}
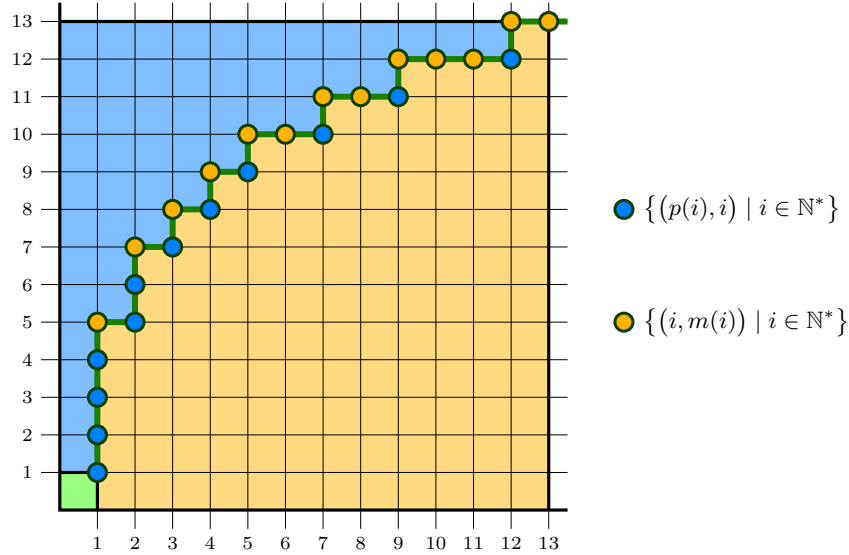

\begin{proposition} \label{prop:twin}
Let $p : \Ns \rightarrow \Ns$ be increasing and not bounded. Then the twinverse of $p$ is increasing and not bounded as well. Furthermore, the twinverse of the twinverse of $p$ is $p$.
\end{proposition}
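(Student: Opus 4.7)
The plan is to reduce everything to a single Galois-style equivalence:
\[
m(i) \le j \iff i < p(j) \qquad (\star)
\]
for all $i,j \in \Ns$. The forward direction uses that $p$ is non-decreasing: if $m(i) = k \le j$, then $i < p(k) \le p(j)$. The reverse direction is immediate from the definition of a minimum, since $j$ then lies in the set whose minimum defines $m(i)$. Before stating $(\star)$ I would first note that $m$ is well-defined: because $p$ is unbounded, the set $\{j \mid i < p(j)\}$ is a nonempty subset of $\Ns$ and hence has a minimum.

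From $(\star)$ the first two conclusions are quick. Monotonicity of $m$ holds because $i \le i'$ only shrinks the set $\{j \mid i < p(j)\}$, so its minimum can only grow. Unboundedness follows by instantiating $(\star)$ at $j = N$ and $i = p(N)$: the right-hand side fails since $p(N) \not< p(N)$, so $m(p(N)) > N$, and as $p(N) \to \infty$ this forces $m$ to be unbounded as well.

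For the involution claim, let $p'$ denote the twinverse of $m$. Applying $(\star)$ to the pair $(m, p')$ (which is legitimate since $m$ is non-decreasing and unbounded) gives $p'(j) \le i \iff j < m(i)$, while the contrapositive of $(\star)$ for the pair $(p, m)$ gives $j < m(i) \iff p(j) \le i$. Chaining these yields $p'(j) \le i \iff p(j) \le i$ for every $i \in \Ns$; since both sides describe the same upper set in $\Ns$, we conclude $p'(j) = p(j)$, as required.

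I do not expect any real obstacle here; the proposition is essentially a Galois adjunction between a non-decreasing unbounded map and its twinverse, dressed in the language of minima. The one subtlety worth flagging is the asymmetric pattern in $(\star)$, with a strict inequality on one side of the equivalence and a non-strict one on the other: this asymmetry is precisely what makes the operation an involution, and one must apply $(\star)$ together with its contrapositive to recover $p$ itself.
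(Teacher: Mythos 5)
Your proof is correct and rests on the same underlying calculations as the paper's: both arguments reduce to observing that $m(i)\le j$ and $i<p(j)$ are equivalent, and that this equivalence can be run in both directions to recover $p$ from $m$. Packaging this as an explicit adjunction $(\star)$ is a slightly cleaner organization, but the paper's direct verification (that $p(i)$ is the minimum of $\{j\mid i<m(j)\}$) is the same argument with the lemma inlined.
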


\begin{proof}
Let $m$ be the twinverse of $p$, and $1 \le i \le i'$. Take $j = m(i)$ and $j' = m(i')$. Then $i \le i' < p(j')$. Since $i < p(j')$, we have $j \le j'$ by definition of $m(i)$. So $m$ is increasing. If $i \ge p(j)$, then $m(i) > j$, so $m$ is not bounded.

Let $p'$ be the twinverse of $m$. Then
$$
p'(i) = \min\{j \mid i < m(j)\}.
$$
So we must show that $i < m(p(i))$, and that $q < p(i)$ implies $i \ge m(q)$.
Indeed
$$
m(p(i)) = \min\{j \mid p(i) < p(j)\} > i,
$$ 
and if $q < p(i)$, then
$$
m(q) = \min\{j \mid q < p(j)\} \le i.
$$
so $p' = p$.
\qed\end{proof}

Lemma \ref{lem:rect} below is illustrated in Figure \ref{fig:twins}, for $m = m_2$, $p = p_2$, and $n = 13$. The definition of $m_2$ and $p_2$ follows later. The total area in Figure \ref{fig:twins} is $n\cdot m(n)=13\cdot 13$ and is the sum of the areas of the three colors.

\begin{lemma} \label{lem:rect}
Let $m$ be the twinverse of $p$. Then
$$
\sum_{i=1}^{m(n)-1} p(i) = n \cdot m(n) - \sum_{j=1}^{n-1} m(j) - 1.
$$
\end{lemma}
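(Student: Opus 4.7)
The plan is a double-counting argument that matches the three-region decomposition of the rectangle in Figure \ref{fig:twins}. Writing $M := m(n)$, I will count the set
\[
T = \{(x,y) \in \Ns^2 : 1 \le x \le n,\ 1 \le y \le M,\ p(y) \le x\}
\]
in two ways and compare.

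The key reformulation comes directly from the definition of the twinverse: since $m(x) = \min\{j : x < p(j)\}$, we have $p(y) \le x$ if and only if $y < m(x)$. In particular $p(y) \le n$ iff $y \le M-1$, and for every $1 \le x \le n$ one has $m(x) \le m(n) = M$; these two observations control the endpoint behavior at the top and right edges of the rectangle.

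Counting $T$ by first fixing $y$: for $y = M$ the condition $p(y) \le n$ fails, so no $x$ works; for $1 \le y \le M-1$ the valid $x$ are $p(y), p(y)+1, \ldots, n$, contributing $n - p(y) + 1$ each. This gives
\[
|T| = (M-1)(n+1) - \sum_{y=1}^{M-1} p(y).
\]
Counting instead by fixing $x \in \{1, \ldots, n\}$: the valid $y$ are $1, 2, \ldots, m(x)-1$, all of which lie in $\{1, \ldots, M\}$, so
\[
|T| = \sum_{x=1}^{n} (m(x)-1) = \sum_{x=1}^{n-1} m(x) + M - n.
\]
Equating the two expressions and simplifying $(M-1)(n+1) - (M-n) = nM - 1$ isolates $\sum_{y=1}^{M-1} p(y)$ and yields the stated identity.

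There is no real obstacle once the equivalence $p(y) \le x \Leftrightarrow y < m(x)$ is on the table; the only care needed is the minor bookkeeping at the boundary column $y = M$ and row $x = n$, both of which are handled transparently by this equivalence.
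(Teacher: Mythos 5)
Your proof is correct, and it takes a genuinely different route from the paper's. The paper proves the identity by induction on $n$: the base case reduces to $\sum_{i=1}^{m(1)-1} p(i) = m(1)-1$, and the inductive step verifies $\sum_{i=m(n)}^{m(n+1)-1} p(i) = (n+1)\bigl(m(n+1)-m(n)\bigr)$, both following from the observation that $p(i)=k$ precisely on the block $m(k-1) \le i < m(k)$. Your argument instead double-counts the lattice-point set $T$, partitioning the $n \times m(n)$ rectangle by rows and by columns, which is exactly the area decomposition the paper only gestures at informally in the remark preceding the lemma (``the total area\ldots is the sum of the areas of the three colors'' in Figure~\ref{fig:twins}). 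So you have supplied the rigorous counterpart of the picture, whereas the paper chose induction; the two are of comparable length, but your version makes the geometric content explicit and avoids the telescoping bookkeeping. One small remark: when you assert $m(x) \le m(n)$ for $x \le n$ you are using monotonicity of $m$, which the paper records as Proposition~\ref{prop:twin}; it also follows in one line from your equivalence $p(y) \le x \Leftrightarrow y < m(x)$, but it would be cleaner to say so or cite the proposition.
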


\begin{proof}
We show this equality by induction on $n$. To prove the base case $n = 1$, we must show that
$$
\sum_{i=1}^{m(1)-1} p(i) = m(1) - 1.
$$
If $i < m(1)$, then $p(i) = \min\{j \mid i < m(j)\} = 1$. Hence the base case follows.

To prove the induction step, we must show that
$$
\sum_{i=m(n)}^{m(n+1)-1} p(i) = (n+1) \cdot \big(m(n+1) - m(n)\big).
$$
If $m(n) \le i < m(n+1)$, then $p(i) = \min\{j \mid i < m(j)\} = n + 1$. Hence the induction step follows.
\qed\end{proof}

Lemma \ref{lem:imin} below is the key to determining the values of $i$ for which the minimum value in the recursive formula of Proposition \ref{prop:recursion_fc} is reached. This is the most crucial part of the solution of the pawn race problem.

\begin{lemma}\label{lem:imin}
	Let $m$ be the twinverse of $p$. Suppose that 
	$$
	p(k-1)+p(k)\leq n\leq p(k)+p(k+1).
	$$
	Then
	$$
	S(i) := \sum_{j=1}^{n-i-1} m(j) + \sum_{j=1}^{i-1} m(j) - i,\qquad i\in \{1,2,\ldots,n-1\}
	$$
	is minimal at $i$, if and only if
	\begin{equation} \label{ieq}
	p(k-1) \le n-i \le p(k) \le i \le p(k+1).
	\end{equation}
\end{lemma}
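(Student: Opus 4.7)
The plan is to reduce the minimization of $S$ to a local first-difference condition on $m$, and then to dualize that condition through the $p$--$m$ correspondence to recover~(\ref{ieq}).

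First, I would form the successive differences
\[
\Delta(i) := S(i+1) - S(i) = m(i) - m(n-i-1) - 1, \qquad 1 \le i \le n-2.
\]
Since $m$ is non-decreasing by Proposition~\ref{prop:twin}, the quantity $\Delta(i)$ is itself non-decreasing in $i$, so $S$ is discretely convex on $\{1,\ldots,n-1\}$. Hence its set of minimizers is the interval of indices at which $\Delta$ changes sign, namely those $i$ with $\Delta(i-1) \le 0 \le \Delta(i)$ (the corresponding condition being waived at the endpoints $i = 1$ and $i = n-1$). Equivalently, $i$ is a minimizer iff
\[
m(i-1) \le m(n-i) + 1 \qquad \text{and} \qquad m(i) \ge m(n-i-1) + 1.
\]

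Next, from $m(j) = \min\{k' \mid j < p(k')\}$ one reads off the two one-sided equivalences $m(j) \ge k' \iff j \ge p(k'-1)$ and $m(j) \le k' \iff j \le p(k') - 1$. These translate condition~(\ref{ieq}) into the system
\[
m(n-i) \ge k, \qquad m(n-i-1) \le k, \qquad m(i) \ge k+1, \qquad m(i-1) \le k+1.
\]
The forward implication (\ref{ieq}) $\Rightarrow$ minimality at $i$ is then immediate: the first and fourth inequalities yield $m(i-1) \le k+1 \le m(n-i)+1$, while the second and third yield $m(i) \ge k+1 \ge m(n-i-1)+1$.

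The converse is the main content. Assuming $i$ is a minimizer, I would prove each of the four inequalities in~(\ref{ieq}) by contradiction, combining the minimality condition with the hypothesis $p(k-1) + p(k) \le n \le p(k) + p(k+1)$. For example, if $i \le p(k) - 1$, then $m(i) \le k$, so minimality forces $m(n-i-1) \le k - 1$, i.e.\ $n - i \le p(k-1)$, and then $n \ge p(k-1) + p(k)$ gives $i \ge p(k)$ -- a contradiction. The other three bounds follow by entirely symmetric arguments, with the two upper bounds on $i$ and $n-i$ using $n \le p(k) + p(k+1)$ instead. I expect the main obstacle to be the meticulous bookkeeping: switching fluently between the two forms of the $m$--$p$ correspondence, and verifying that the endpoint cases $i = 1$ and $i = n-1$, where only one of the two minimality inequalities is available, do not disturb the argument.
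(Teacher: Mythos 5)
Your proposal is correct and rests on the same essential observation as the paper, namely that $\Delta S(i)=m(i)-m(n-i-1)-1$ and that the sign of $\Delta S$ controls minimality via the $p$--$m$ correspondence. The paper organizes the argument as a direct three-way case split on the sign of $\Delta S(i)$ rather than your convexity-plus-contradiction phrasing, but the content is the same.
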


\begin{proof} 
	Notice that 
	$$
	\Delta S(i) := S(i+1) - S(i) = m(i) - m(n-1-i) - 1.
	$$
We distinguish three cases for $i$, in order of increasing $i$. 
\begin{itemize}
	
	\item $i < p(k)$ or $p(k) < n - i$.
	\begin{itemize}
		\item[$*$] If $i < p(k)$, then $p(k-1) \le n - p(k) < n - i$, so $m(i) \le k \le m(n-i-1)$.
		\item[$*$] If $p(k) < n - i$, then $i < n - p(k) \le p(k+1)$, so $m(i) \le k + 1 \le m(n-i-1)$.
	\end{itemize}
	In both cases, $\Delta S(i) < 0$, so there is no minimum at $i$. 
	
	\item $p(k-1) < n-i \le p(k) \le i < p(k+1)$.\\
	Then $m(i) = k+1$ and $m(n-i-1) = k $. So $\Delta S(i) = 0$.
	
	\item $n-i \le p(k-1) \le p(k) \le i$ or $n-i \le p(k) \le p(k+1) \le i$.
	\begin{itemize}
		\item[$*$] If $n-i \le p(k-1) \le p(k) \le i$, then $m(i) \ge k+1$ and $m(n-i-1) \le k - 1$.
		\item[$*$] If $n-i \le p(k) \le p(k+1) \le i$, then $m(i) \ge k+2$ and $m(n-i-1) \le k$.
	\end{itemize}
	In both cases, $\Delta S(i) > 0$, so there is no minimum at $i+1$. 	
\end{itemize}
We conclude that the function $S(i)$ is first decreasing, then possibly constant and then increasing. The function is minimal if and only if (\ref{ieq}) is satisfied.
\end{proof}
	
Notice that $p(k) \in \{n-i,i\}$ for the smallest $i$ which satisfies \eqref{ieq}. So
$i = \max\{p(k),n-p(k)\}$ is the smallest solution of \eqref{ieq}.

Let $c\in\Ns$. Our next result will express the solution $f_c(n)$ of Problem \ref{prob:pawnproblem} in terms of the recursive sequence $p_c(k), k\in\Ns$, defined by
\[
p_c(k) = \begin{cases} 1 &\text{if}\ 1\leq k\leq 2c,\\p_c(k-c-1)+p_c(k-c) \qquad & \text{if}\ k\geq 2c+1.
\end{cases}
\]
Let $m_c$ be the twinverse of $p_c$ for all $c$. The sequence $p_1$ are the Fibonacci numbers. With the exception of $p_2(1)$ (which does not follow the recurrence formula), the sequence $p_2$ is a shift of the Padovan numbers. Define the sequence $q_c(k)$ for $k\geq 1$ by
\begin{equation} \label{qeqc}
q_c(k) = 1+\sum_{i=1}^{k-1} p_c(i).
\end{equation}
Then $q_c(k) = k$ for $k\leq 2c$. For $k\geq 2c+1$, the definition of $p_c$ gives the recursion
\begin{align*}
q_c(k) &= 1+2c+\sum_{i=2c+1}^{k-1}\big(p_c(i-c-1)+p_c(i-c)\big)\\
&= 1+2c+\sum_{i=c}^{k-c-2}p_c(i)+\sum_{i=c+1}^{k-c-1}p_c(i) = q_c(k-c-1)+q_c(k-c).
\end{align*}
From this recursion, we infer that
$$
q_c(k) = q_c(k+c+1) - q_c(k+1) = \sum_{i=k+1}^{k+c} p_c(i) \qquad (k \ge c).
$$
In particular,
\begin{alignat}{2}
q_1(k) &= p_1(k+1) \qquad && (k \ge 1) \label{qeqc1} \\
q_2(k) &= p_2(k+1) + p_2(k+2) = p_2(k+4) \qquad && (k \ge 2). \label{qeqc2}
\end{alignat}
We are now ready to formulate the main result of this section.

%Define $p_c(k)$ and $q_c(k)$ for all $c,k \in \Ns$ by
%\begin{alignat*}{2}
%p_c(k) &= 1 && (k \le 2c) \\
%p_c(k+c+1) &= p_c(k) + p_c(k+1) \qquad && (k \ge c) \\
%q_c(k) &= k && (k \le 2c) \\
%q_c(k+c+1) &= q_c(k) + q_c(k+1) \qquad && (k \ge c).
%\end{alignat*}

%Let $m_c$ be the twinverse of $p_c$ for all $c$.
%%Notice that the recurrence relations for $p_c$ and $q_c$ do not hold at values below $c$. 
%The sequence $p_1$ are the Fibonacci numbers. With the exception of $p_2(1)$ (which does not follow the recurrence formula), the sequence $p_2$ is a shift of the Padovan numbers.
%
%From $q_c(2c+1) = q_c(c) + q_c(c+1) = 2c+1$, it follows that $q_c(k) = k$ for all $k \le 2c+1$, so $p_c(k) = q_c(k+1) - q_c(k)$ for all $k \le 2c$. Using mathematical induction, one can show that
%\begin{equation} \label{qeq}
%p_c(k) = q_c(k+1) - q_c(k) \qquad (k \ge 1)
%\end{equation}
%and
%\begin{equation} \label{qeqc}
%q_c(k) = \sum_{i=1}^{k-1} p_c(i) + 1 \qquad (k \ge 1).
%\end{equation}
%From \eqref{qeqc}, we infer that
%$$
%q_c(k) = q_c(k+c+1) - q_c(k+1) = \sum_{i=k+1}^{k+c} p_c(i) \qquad (k \ge c).
%$$
%In particular,
%\begin{alignat}{2}
%q_1(k) &= p_1(k+1) \qquad && (k \ge 1) \label{qeqc1} \\
%q_2(k) &= p_2(k+1) + p_2(k+2) = p_2(k+4) \qquad && (k \ge 2). \label{qeqc2}
%\end{alignat}
%We are now ready to formulate the main result of this section.

\begin{theorem} \label{thm:pawnrace}
Suppose that $c \in \Ns$. Then
\begin{equation} \label{fsol}
f_c(n) = \sum_{j=1}^{n-1} m_c(j) = n \cdot m_c(n) - q_c\big(m_c(n)\big).
\end{equation}
\end{theorem}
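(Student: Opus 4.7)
The second equality follows directly from Lemma~\ref{lem:rect} (applied with $p = p_c$, whose twinverse is $m_c$) combined with the definition~\eqref{qeqc} of $q_c$: Lemma~\ref{lem:rect} gives $\sum_{j=1}^{n-1} m_c(j) = n \cdot m_c(n) - 1 - \sum_{i=1}^{m_c(n)-1} p_c(i)$, and $1 + \sum_{i=1}^{m_c(n)-1} p_c(i) = q_c(m_c(n))$ by definition.

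For the first equality $f_c(n) = F_c(n) := \sum_{j=1}^{n-1} m_c(j)$, I would argue by strong induction on $n$, with the trivial base case $n = 1$ where both sides vanish. In the inductive step, substituting the induction hypothesis into the recursion of Proposition~\ref{prop:recursion_fc} yields
\[
f_c(n) = \min_{1 \le i \le n-1}\bigl[S(i) + (c+1)n\bigr],
\]
where $S(i)$ is exactly the function analyzed in Lemma~\ref{lem:imin}. Because $p_c$ is non-decreasing and unbounded, there exists a $k$ with $p_c(k-1) + p_c(k) \le n \le p_c(k) + p_c(k+1)$, and Lemma~\ref{lem:imin} identifies the optimal $i^*$. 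Writing $P = p_c(k)$ and $\ell = n - P$, one may choose $i^* = P$ if $n \le 2P$ and $i^* = \ell$ if $n \ge 2P$; in either case $\{i^*, n-i^*\} = \{P, \ell\}$.

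It then remains to verify $F_c(n) = S(i^*) + (c+1)n$ for this $i^*$, which unpacks to the single identity
\[
F_c(P+\ell) - F_c(P) - F_c(\ell) \;=\; c\,\max(P,\ell) + (c+1)\,\min(P,\ell),
\]
valid for $p_c(k-1) \le \ell \le p_c(k+1)$ and $P = p_c(k)$. This is the heart of the proof. I would establish it through the dual representation $m_c(j) = 1 + |\{k' \ge 1 : p_c(k') \le j\}|$, which converts the left-hand side into a sum over indices $k'$ whose contributions depend only on where $p_c(k')$ sits relative to $\min(P,\ell)$, $\max(P,\ell)$, and $P+\ell-1$. The Fibonacci-like recurrence $p_c(k+c+1) = p_c(k) + p_c(k+1)$ ensures that these thresholds align so that the per-$k'$ contributions telescope to the claimed right-hand side.

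The main obstacle I anticipate is the bookkeeping around the small initial indices of $p_c$: the sequence starts with $2c$ ones and may further stagnate before becoming strictly increasing, so $m_c$ is not easy to evaluate near these boundaries and the choice of $k$ is not always unique. I would absorb this into an enlarged base case by verifying the first equality directly for all $n$ below an explicit threshold, after which the stable (eventually strictly increasing) behavior of $p_c$ makes the general inductive argument uniform.
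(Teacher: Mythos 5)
Your handling of the second equality matches the paper exactly. For the first equality, you and the paper agree up to the point of applying Lemma~\ref{lem:imin} to identify the optimal split, but you then diverge: you propose to verify the combination identity
\[
F_c(P+\ell) - F_c(P) - F_c(\ell) \;=\; c\,\max(P,\ell) + (c+1)\,\min(P,\ell)
\]
directly, whereas the paper avoids ever evaluating such a three-term combination. Instead it computes the \emph{first difference} $f_c(n) - f_c(n-1)$: writing both $f_c(n)$ and $f_c(n-1)$ via the recursion with optimal indices $i$ and (depending on a case split) $i-1$ or $i$, almost everything cancels under the induction hypothesis, and the difference collapses to a single value $m_c(i-1) + c$ or $m_c(n-i-1) + c + 1$, each of which is immediately seen to equal $m_c(n-1) = k+c+1$. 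Since $F_c(n)-F_c(n-1)=m_c(n-1)$ holds by definition of $F_c$, the induction closes with no further work. This is why the paper needs only the base cases $n=1,2$ and no special treatment of the flat initial stretch of $p_c$.

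Your reduction to the combination identity is correct (I spot-checked it in several regimes, including $c=2$ where $p_c$ has a long flat prefix), but as written it is a claim rather than a proof. The proposed route via $m_c(j) = 1 + |\{k' : p_c(k')\le j\}|$ gives, after a three-way case split on whether $p_c(k')$ lies in $[1,\min]$, $(\min,\max]$, or $(\max,n)$, three nontrivial sums of the form $\sum p_c(k')$, $\min\cdot(m_c(\max)-m_c(\min))$, and $\sum(n-p_c(k'))$; these do not telescope in any natural sense, and tying their total back to $cP + (c+1)\ell$ still requires the relations $p_c(k+c)=p_c(k-1)+p_c(k)$, $p_c(k+c+1)=p_c(k)+p_c(k+1)$ and knowledge of $m_c(n-1)$, $m_c(P)$, $m_c(\ell)$ at the boundary indices. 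That is essentially where the bookkeeping you worry about lives, and an ``enlarged base case'' will not make it disappear, because the flatness of $p_c$ is not confined to small $n$ (for example $p_c$ takes the value $2^t$ exactly $c+1-t$ consecutive times). So the approach is viable but the hard step is left open; the paper's first-difference trick is both simpler and already complete, and I'd encourage you to look at it as a way to finish your argument rather than trying to push the dual-representation sum through.
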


We illustrate Theorem \ref{thm:pawnrace} for $n=7$ and $c=1$. Since $p_1(5) = 5$ and $p_1(6)=8$, we find $m_1(7)=6$. Furthermore, $q_1(6) = p_1(7) = 13$ on account of \eqref{qeqc1}, and therefore $f_1(7) = 7\cdot 6-13 = 29$, in agreement with the example in the previous section.
	
\ifijfcs	
	\begin{proof}[Proof of Theorem \ref{thm:pawnrace}]
\else
	\begin{proof}[of Theorem \ref{thm:pawnrace}]
\fi	
	The second equality of \eqref{fsol} follows from Lemma \ref{lem:rect} and \eqref{qeqc}. So it remains to prove the first equality to obtain \eqref{fsol}.
	
	From $p_c(2c) = 1$ and $p_c(2c+1) = 2$, we infer that $m_c(1) = 2c+1$. This yields the cases $n = 1$ and $n = 2$. So assume that $n \ge 3$. We prove \eqref{fsol} by induction on $n$, so we assume that 
	\begin{align}\label{eq:inductionhyp}
	f_c(i) = \sum_{j=1}^{i-1} m_c(j) \qquad (i < n).
	\end{align}
	Choose $k$ such that 
	\begin{align}\label{eq:k_choice}
	p_c(k+c) < n \le p_c(k+c+1).
	\end{align}
	Then $m_c(n-1) = k+c+1$. From $p_c(2c+1) = 2 < n$, we infer that $k + c + 1 > 2c + 1$, so 
	$k - 1 \ge c$. Hence 
	\begin{align}\label{eq:k_choice2}
	p_c(k+c) &= p_c(k-1) + p_c(k), & p_c(k+c+1) &= p_c(k) + p_c(k+1).
	\end{align}
	By (\ref{eq:k_choice}) and (\ref{eq:k_choice2}), the condition of Lemma \ref{lem:imin} is satisfied, so we can choose $i$ such that
	%Using the proof of Lemma \ref{lem:imin}, the choice $i = \max\{p_c(k),n-p_c(k)\}$ satisfies
	%Taking $i = \max\{p_c(k),n-p_c(k)\}$, by (\ref{eq:k_choice}) and (\ref{eq:k_choice2}) we obtain 
	\[
	p_c(k-1) \leq n-i\leq p_c(k) \leq i \leq p_c(k+1).
	\]
	By Proposition \ref{prop:recursion_fc}, the induction hypothesis (\ref{eq:inductionhyp}) and Lemma \ref{lem:imin}, we find
	\begin{align}\label{eq:fcn}
	f_c(n) &= f_c(n-i) + f_c(i) + (c+1)n - i.
	\end{align}
	Below, we will find similar formulas for $f_c(n-1)$ instead of $f_c(n)$. As $n > p_c(k+c)$, at least one of two cases applies:
	\begin{itemize}
		
		\item \emph{Case $i > p_c(k)$.} In this case
		$$
		p_c(k-1) \le n - i \le p_c(k) \le i - 1 < p_c(k+1).
		$$
		Hence $f_c(n-1) = f_c(n-i) + f_c(i-1) + (c+1)(n-1) - (i - 1)$. So
		$$
		f_c(n) - f_c(n-1) = m_c(i-1) + (c+1) - 1 = (k + 1) + c = m_c(n-1).
		$$
		
		\item \emph{Case $n - i > p_c(k-1)$.} In this case 
		$$
		p_c(k-1) \le n - i - 1 < p_c(k) \le i \le p_c(k+1).
		$$
		Hence $f_c(n-1) = f_c(n-i-1) + f_c(i) + (c+1)(n-1) - i$. So
		$$
		f_c(n) - f_c(n-1) = m_c(n-i-1) + (c+1) = k + (1 + c) = m_c(n-1).
		$$
		
	\end{itemize}
	This completes the proof of the first equality in \eqref{fsol}. 
	\end{proof}

The next result characterizes the number of optimal solutions of the pawn race. A unique optimal solution exists if and only if $n$ is an element of the sequence $p_c$. 

\begin{theorem}\label{thm:pawnrace2}
	Let $c\geq 1$ and denote the number of optimal solutions of the pawn race by $o_c(n)$. Let $k = m_c(n)-c-1$ and define
	\[
	I_n = \{i \,|\, p_c(k-1)\leq n-i\leq p_c(k)\leq i\leq p_c(k+1)\}.
	\]
	Then $o_c(1)=o_c(2) = 1$ and for $n\geq 3$, 
	\[
	o_c(n) = \sum_{i\in I_n} o_c(n-i)o_c(i).
	\]
	Furthermore, 
	$
	o_c(n) = 1 \Longleftrightarrow n = p_c(k')$ for some $k' \Longleftrightarrow n = p_c\big(m_c(n) - 1\big).
	$
	%Let $o_c(n)$ denote the number of optimal solutions of the pawn race. Then
	%\begin{alignat*}{2}
	%o_c(n) &= 1 \qquad && (n \le 2) \\
	%o_c(n) &= \sum\Big\{o_c(n-i)o_c(i) \,\Big|\, \substack{p_c(k-1) \le n-i \le p_c(k) \\ p_c(k) \le i \le p_c(k+1)}\Big\} \qquad && (n \ge 3),
	%\end{alignat*}
	%where $k = m_c(n) - c - 1$. Furthermore, $o_c(n) = 1$, if and only if $n = p_c(k')$ for some $k'$, if and only if $n = p_c\big(m_c(n) - 1\big)$.
\end{theorem}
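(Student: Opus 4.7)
The plan is to first establish the recursive formula via Proposition~\ref{prop:recursion_fc} and Lemma~\ref{lem:imin}, and then derive the characterization of $o_c(n)=1$ by strong induction on $n$ using the recursion. The base cases $o_c(1)=o_c(2)=1$ are immediate. For the recursion, Proposition~\ref{prop:recursion_fc} shows that every optimal race splits uniquely into a peloton sub-race of size $i$ and a first-group sub-race of size $n-i$, where $i$ minimizes $f_c(i)+f_c(n-i)+(c+1)n-i$. As established in the proof of Theorem~\ref{thm:pawnrace}, for $k=m_c(n)-c-1$ one has $p_c(k-1)+p_c(k)\le n<p_c(k)+p_c(k+1)$, so Lemma~\ref{lem:imin} identifies the minimizing $i$'s as exactly the set $I_n$. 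Since the two sub-races evolve independently, the number of optimal races with a peloton of size $i$ equals $o_c(i)\cdot o_c(n-i)$; summing over $i\in I_n$ yields the stated recursion.

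The largest $k'$ with $p_c(k')\le n$ equals $m_c(n)-1$, so ``$n=p_c(k')$ for some $k'$'' is equivalent to ``$n=p_c(m_c(n)-1)$''. I then prove the main equivalence $o_c(n)=1\iff n=p_c(k')$ for some $k'$ by strong induction on $n$. The forward direction is straightforward: if $n=p_c(k')$ with $k'=m_c(n)-1>2c$, setting $k=k'-c$ gives $n=p_c(k-1)+p_c(k)$, and the constraints $n-i\ge p_c(k-1)$ and $i\ge p_c(k)$, which already sum to $n$, force $i=p_c(k)$ and $n-i=p_c(k-1)$. Hence $I_n=\{p_c(k)\}$, and the induction hypothesis gives $o_c(n)=o_c(p_c(k-1))\cdot o_c(p_c(k))=1$.

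For the converse, suppose $n\ne p_c(k')$ for all $k'$. I aim to show $|I_n|\ge 2$, which by the recursion gives $o_c(n)\ge|I_n|\ge 2$. Writing $I_n=[a,b]\cap\mathbb{Z}$ with $a=\max\{p_c(k),n-p_c(k)\}$ and $b=\min\{p_c(k+1),n-p_c(k-1)\}$, the assumption that $n$ is not in the range of $p_c$ strengthens the already-available inequalities to $p_c(k-1)+p_c(k)<n<p_c(k)+p_c(k+1)$ (both strict). A four-way case analysis on which terms realize the $\max$ and $\min$ shows that $b-a$ is one of $d_c(k)$, $n-p_c(k-1)-p_c(k)$, $p_c(k)+p_c(k+1)-n$, or $d_c(k-1)$, where $d_c(j):=p_c(j+1)-p_c(j)$. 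The two middle quantities are strictly positive by the strict inequalities above. In the remaining two subcases, the subcase conditions further force $d_c(k-1)\ge 2$ whenever $d_c(k)=0$, and $d_c(k)\ge 2$ whenever $d_c(k-1)=0$.

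The main obstacle is to prove the auxiliary fact that the sequence $d_c$ never contains a zero adjacent to a value $\ge 2$. I would establish this by strong induction on the index, using the inherited recurrence $d_c(j)=d_c(j-c-1)+d_c(j-c)$ valid for $j\ge 2c+1$ together with the base values $d_c(j)\in\{0,1\}$ for $j\le 2c$. Explicitly, $d_c(j)=0$ forces $d_c(j-c-1)=d_c(j-c)=0$, hence $d_c(j-1)=d_c(j-c-2)$, which is $\le 1$ by the induction hypothesis applied to the smaller index $j-c-1$; the symmetric implication $d_c(j-1)=0\Rightarrow d_c(j)\le 1$ follows in a parallel way. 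This contradicts the forced inequalities from the subcase analysis, so $|I_n|\ge 2$ in all subcases, completing the proof.
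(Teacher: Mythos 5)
Your proof is correct and follows essentially the same route as the paper: the recursion comes from Proposition~\ref{prop:recursion_fc} together with Lemma~\ref{lem:imin}, the singleton case $n=p_c(k')$ forces $I_n=\{p_c(k)\}$, and the non-singleton case reduces to the auxiliary fact that consecutive gaps $d_c(k-1)=p_c(k)-p_c(k-1)$ and $d_c(k)=p_c(k+1)-p_c(k)$ cannot pair a zero with a value $\ge 2$, which is precisely the paper's inductive claim ``$p_c(k+1)\le p_c(k-1)+1$ if $p_c(k-1)=p_c(k)$ or $p_c(k)=p_c(k+1)$''. The only cosmetic difference is that you establish $|I_n|\ge 2$ by computing the interval endpoints $a,b$ and case-analyzing $b-a$, whereas the paper takes a boundary solution $i$ and exhibits $i-1\in I_n$; you also spell out the inherited-recurrence induction for the auxiliary fact, which the paper leaves as ``one can show''.
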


We illustrate the result for $n=7$ and $c=1$. Since $7$ is not a Fibonacci number, there are multiple solutions, see Figure \ref{fig:optimalraces}. These are precisely the optimal races. Notice that $k = m_1(7) - 1 - 1 = 4$, so $p_1(k-1) = 2$, $p_1(k) = 3$ and $p_1(k+1) = 5$. Hence $2 \le 7 - i \le 3 \le i \le 5$. This is exactly the case if either $i = 4$ or $i = 5$, so 
$$
o_c(7) = o_c(2)o_c(5) + o_c(3)o_c(4).
$$
Since both $2$ and $5$ are Fibonacci numbers, $o_c(2)o_c(5) = 1\cdot1 = 1$. The race which corresponds to the term $o_c(2)o_c(5) = 1$ is illustrated on the left hand side of Figure \ref{fig:optimalraces}. The size of the first group and the peloton are $2$ and $5$ respectively, corresponding to the arguments of $o_c$. For $o_c(3)$ and $o_c(4)$, we have $o_c(3) = o_c(1)o_c(2) = 1$ and 
$$
o_c(4) = o_c(1)o_c(3) + o_c(2)o_c(2) = 2,
$$
so $o_c(3)o_c(4) = 1\cdot2 = 2$. The races which correspond to the term $o_c(3)o_c(4) = 2$ are illustrated on the right hand side of Figure \ref{fig:optimalraces}. The dashes indicate the two distinct optimal solutions for the pawn race of the peloton of size $4$.

\ifijfcs
	\begin{proof}[Proof of Theorem \ref{thm:pawnrace2}]
\else
	\begin{proof}[of Theorem \ref{thm:pawnrace2}]
\fi
	For $n\leq 2$, all three statements in the last line of the theorem are true. For $n\geq 3$, we distinguish two cases.
	\begin{itemize}
		
		\item \emph{Case $n = p_c(k')$ for some $k'$.}
		
		Take $k'$ maximal as such. Then $k' = m_c(n)-1$, which proves the last equivalence. Furthermore, $k' = k+c$, so $n = p_c(k)+p_c(k-1)$. Hence Lemma \ref{lem:imin} applies, and $I_n = \{p_c(k)\}$ follows.		
		
		\item \emph{Case $n \ne p_c(k')$ for any $k'$.}
		
		Again, let $k' = m_c(n)-1$. Then $p_c(k') < n < p_c(k'+1)$ and $k' = k+c$, so $p_c(k-1)+p_c(k) < n <p_c(k)+p_c(k+1)$. In particular, Lemma \ref{lem:imin} applies and $p_c(k+1) > p_c(k-1) + 1$. But by means of the recurrence relation of $p_c$, one can show by induction that $p_c(k+1) \le p_c(k-1) + 1$ if either $p_c(k-1) = p_c(k)$ or $p_c(k) = p_c(k+1)$. Consequently,
$$
p_c(k-1) < p_c(k) < n - p_c(k-1) \qquad \mbox{and} \qquad n - p_c(k+1) < p_c(k) < p_c(k+1).
$$	
		%By induction, $p_c$ has the following two properties. If $p_c(k-1) = p_c(k)$, then $p_c(k+1)\leq p_c(k)+1$ and if $p_c(k) = p_c(k+1)$, then $p_c(k-1)\geq p_c(k)-1$. Let $k = m_c(n)-c-1$. Then $p_c(k-1)+p_c(k) <n<p_c(k)+p_c(k+1)$. The two properties of $p_c$ now imply $p_c(k-1)<p_c(k)<p_c(k+1)$.
		
		To prove that there are at least two solutions of \eqref{ieq}, suppose that $i$ is a solution and $i+1$ is not. Then either $n-i=p_c(k-1)$ or $i = p_c(k+1)$. In both cases, $n-i<p_c(k)<i$, so $i-1$ is another solution and $\{i-1,i\} \subseteq I_n$.
		\end{itemize}
In both cases, the inductive formula of $o_c(n)$ holds. That the solution is unique if and only if $n = p_c(k')$ for some $k'$ follows by induction as well.\qed
\end{proof}

The function $f_c$ is affinely linear between consecutive values of $p_c$, i.e.\@
\begin{equation}
f_c\big(\lambda p_c(k) + (1 - \lambda) p_c(k+1)\big) = \lambda f_c\big(p_c(k)\big) + (1 - \lambda) f_c\big(p_c(k+1)\big)
\end{equation}
for all $\lambda \in [0,1]$ for which the left hand side makes sense. This is because $m_c(n) = k+1$ for all $n$ for which $p_c(k) \le n < p_c(k+1)$.
Since $p_c(k+1)-p_c(k)$ can be arbitrary large, $f_c$ cannot be represented by a polynomial.

Combining the results in the current section with Theorem \ref{thm:reduction} gives expressions for the reset thresholds of all automata in the \v{C}ern\'y family. Asymptotic estimates are given in the next section.

\begin{corollary}
	Suppose that $1\leq c \leq n-2$ and denote $n' = n - c - 1$. 
	If $w$ is a shortest synchronizing word for $C_n^c$, then
	\begin{align*}
	|w| = r(C^c_n)
	&= n'(n'-1) + c+1 + n' m_c(n') - q_c\big(m_c(n')\big).
	\end{align*}
	Furthermore, $w$ is unique, if and only if $n' = p_c(k')$ for some $k'$.
\end{corollary}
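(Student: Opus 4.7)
The plan is to assemble the statement from three already-established results: Theorem \ref{thm:reduction}, Theorem \ref{thm:pawnrace}, and Theorem \ref{thm:pawnrace2}. For the formula for $|w|$, I would first apply Theorem \ref{thm:reduction} to rewrite $r(C_n^c) = n'(n'-1) + c + 1 + f_c(n')$, where $n' = n-c-1$, and the hypothesis $c \le n-2$ ensures $n' \ge 1$ so that $f_c$ is indeed defined at $n'$. Then I would substitute the closed form of Theorem \ref{thm:pawnrace}, namely $f_c(n') = n' \cdot m_c(n') - q_c\bigl(m_c(n')\bigr)$, to obtain the displayed expression.

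For the uniqueness part, I would trace through the reduction argument in Section \ref{sec:race}. By Lemma \ref{clm:obs1}(\ref{claim1.1}), every shortest synchronizing word begins with $b^{c+1}$, sending $Q$ to $[n']$. By Corollary \ref{cor:1.4} the remainder decomposes into iteration words in the sense of \eqref{iterationword}, and in the proof of Theorem \ref{thm:reduction} it is shown that optimality forces the yellow jersey to start at $j = n'$, fixing the number of letters $a$ at $(n'-1)(n'+1)-(j-1)$. Because $c \ge 1$ here, Corollary \ref{cor:strategy} removes the ambiguity of the $c=0$ case: within each iteration word the factor $w_k$ of a chaser must be $b^{c+1}$ and that of a resigner must be $b^{c}$, with no freedom in the choice.

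Consequently, a shortest synchronizing word for $C_n^c$ is determined bijectively by an optimal sequence of iterations of the pawn race on $n'$ pawns: each pawn's behaviour (move or stay) in a given iteration corresponds to exactly one of the forced factors $w_k \in \{b^{c+1},b^c\}$, and the total weight of the resulting word equals the pawn race cost plus the initial $b^{c+1}$ and the constant $a$-count. Therefore the number of shortest synchronizing words for $C_n^c$ equals the number $o_c(n')$ of optimal pawn races on $n'$ pawns. Applying the last sentence of Theorem \ref{thm:pawnrace2}, uniqueness $o_c(n')=1$ is equivalent to $n' = p_c(k')$ for some $k'$, which yields the stated condition.

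The most delicate step will be the claim that the bijection between shortest synchronizing words and optimal pawn races is strict, i.e.\ that no two distinct optimal pawn races produce the same word and no shortest synchronizing word is unaccounted for. I expect this to follow cleanly because the forced factor assignment in each iteration is an injective decoding of the chaser/resigner pattern, and because Corollary \ref{cor:strategy}(i)--(ii) leaves no alternative starts once $c \ge 1$; still, this is the one point where care is required to avoid collapsing distinct races to equal words or vice versa.
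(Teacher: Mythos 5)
Your proof is correct and follows the same route the paper intends: the corollary is stated without proof as a direct consequence of Theorem \ref{thm:reduction}, Theorem \ref{thm:pawnrace}, and Theorem \ref{thm:pawnrace2}, exactly the three results you assemble. The extra care you take on the uniqueness bijection---using $c\ge1$ together with Corollary \ref{cor:strategy}(i)--(ii) to eliminate the $c=0$ freedom, and observing that $j=n'$ is forced because any smaller $j$ strictly increases the $a$-count while the $b$-count stays bounded below by $c+1+f_c(n')$---is precisely the implicit content needed to identify the number of shortest synchronizing words with $o_c(n')$.
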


\section{Estimates of the Pawn Race Problem} \label{sec:est_fc}

To estimate $f_c$ asymptotically for $c \ne 0$, we need to take a look at the characteristic polynomial $\chi_c(x) =  x^{c+1} - x - 1$ of the recurrence relation of $p_c$ and $q_c$. Since $\chi_c(1) = -1 < 0<\chi_c(2) $ and $\chi_c$ is strictly increasing for $x\ge1$, we deduce that $\chi_c(x) = 0$ for a unique real number $x > 1$. Let $\phi_c$ be this number. Note that $\phi_c$ decreases and approaches 1 as $c$ increases. In Proposition \ref{prop:bounds_fc} we already determined the order of growth of $f_c(n)$. For fixed $c$, Proposition \ref{prop:fest} below gives a more precise result and yields an asymptotic estimate for $n \rightarrow \infty$. %But \eqref{festc} is just less accurate than Proposition \ref{prop:bounds_fc}. This is because our estimates for $p_c$ and $q_c$ are poor.

\begin{proposition}\label{prop:fest}
For all $c\geq 1$ and $n\geq 1$, we have
$$%\begin{alignat}{3}
\frac{n\log(n)}{\log(\phi_c)} - 3cn < f_c(n) < \frac{n\log(n)}{\log(\phi_c)} + (c+1)n. %\qquad && (n \ge 1) \label{fest} %\\
%cn\big({\log_2(n)} - 3\big) &< f_c(n) &&< (c+\tfrac12)n\big({\log_2(n)}+\tfrac43\big) \qquad && (n \ge 1). \label{festc}
$$%\end{alignat}
\end{proposition}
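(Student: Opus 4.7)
My plan would be to use the exact formula
$$ f_c(n) \;=\; n\, m_c(n) \;-\; q_c\bigl(m_c(n)\bigr) $$
provided by Theorem~\ref{thm:pawnrace}, together with the identity $q_c(k) = \sum_{i=k+1}^{k+c} p_c(i)$ for $k \ge c$ that is derived just after~\eqref{qeqc}, and control each of $m_c(n)$ and $q_c(m_c(n))$ via two-sided exponential bounds on $p_c$.

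First I would show by induction on $k$, using the recurrence $p_c(k) = p_c(k-c-1) + p_c(k-c)$ together with the defining identity $\phi_c^{c+1} = 1 + \phi_c$, that
$$ \phi_c^{k-2c} \;\le\; p_c(k) \;\le\; \phi_c^{k-c} \qquad (k \ge c); $$
the initial ranges where $p_c(k) = 1$ satisfy both inequalities trivially since $\phi_c > 1$, and the inductive step collapses cleanly thanks to $\phi_c^{c+1}=1+\phi_c$. I will also need the monotonicity $p_c(k) \le p_c(k+1)$, which is a routine induction. Writing $L := \log(n)/\log(\phi_c)$, the defining inequalities $p_c(m_c(n)-1) \le n < p_c(m_c(n))$ then translate into the strict bounds
$$ L + c \;<\; m_c(n) \;\le\; L + 2c + 1, $$
the left coming from the upper bound on $p_c(m_c(n))$ and the right from the lower bound on $p_c(m_c(n)-1)$.

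Next I would bound $q_c(m_c(n))$. Applying the recurrence once with monotonicity gives $p_c(m_c(n)) \le 2\, p_c(m_c(n)-1) \le 2n$, and the shifted form $p_c(j+c) \le 2\, p_c(j)$ (an immediate consequence of $p_c(k+c+1) = p_c(k)+p_c(k+1) \le 2p_c(k+1)$) gives $p_c(m_c(n)+c) \le 2\, p_c(m_c(n)) \le 4n$. Summing the $c$ terms of $q_c(m_c(n)) = \sum_{i=m_c(n)+1}^{m_c(n)+c} p_c(i)$ and using monotonicity on both sides then yields
$$ cn \;<\; c\cdot p_c\bigl(m_c(n)\bigr) \;\le\; q_c\bigl(m_c(n)\bigr) \;\le\; c\cdot p_c\bigl(m_c(n)+c\bigr) \;\le\; 4cn. $$

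Substituting these into $f_c(n) = n\,m_c(n) - q_c(m_c(n))$, the upper bound on $m_c(n)$ paired with the strict lower bound on $q_c(m_c(n))$ gives $f_c(n) < n(L + 2c+1) - cn = nL + (c+1)n$, and the strict lower bound on $m_c(n)$ paired with the upper bound on $q_c(m_c(n))$ gives $f_c(n) > n(L+c) - 4cn = nL - 3cn$. The main technical point to watch is the upper bound on $q_c(m_c(n))$: the naive estimate $q_c(k) \le \phi_c^{k-c}/(\phi_c-1)$ produced by the exponential bounds alone degenerates as $c$ grows (since $\phi_c \to 1$), so the crux is the elementary recurrence-based bound $p_c(k+c) \le 2\,p_c(k)$, which keeps the multiplicative loss uniformly bounded by $4$ independently of $c$.
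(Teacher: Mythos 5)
Your proof is correct, and it follows the same overall architecture as the paper: start from $f_c(n) = n\,m_c(n) - q_c(m_c(n))$, bound $p_c$ between $\phi_c^{k-2c}$ and $\phi_c^{k-c}$ by induction, deduce $\frac{\log n}{\log\phi_c}+c < m_c(n) \le \frac{\log n}{\log\phi_c}+2c+1$ from the definition of $m_c$, bound $q_c(m_c(n))$ between $cn$ and $4cn$, and combine. The one place you diverge from the paper is the $q_c$ estimate. The paper proves an exponential two-sided bound $c\,\phi_c^{k-c} \le q_c(k) \le 2c\,\phi_c^{k-c-1}$ for $q_c$ by a separate induction, and its proof quietly relies on the elementary but crucial inequality $1 < \phi_c^c < 2$ (derived from $\phi_c^{c+1} = 1+\phi_c$) to keep the coefficient bounded uniformly in $c$. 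You instead exploit the representation $q_c(k) = \sum_{i=k+1}^{k+c} p_c(i)$ together with monotonicity of $p_c$ and the discrete doubling inequality $p_c(j+c) \le 2p_c(j)$, which is the combinatorial counterpart of $\phi_c^c < 2$. The two are morally the same fact, but your route avoids a second induction and makes explicit why a naive geometric-sum bound (with factor $1/(\phi_c-1)$, which blows up as $c\to\infty$) would not suffice; that observation about the crux of the argument is accurate and worth stating. Net effect: same proposition, same constants, slightly more elementary handling of $q_c$, with the coefficient-uniformity issue made transparent rather than hidden in the induction.
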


\begin{proof} By mathematical induction we obtain that
\begin{equation}
\phi_c^{k-2c} \le p_c(k) \le \phi_c^{k-c} \qquad (k \ge c). \label{pest}
\end{equation}
Since $m_c(n) - 1 \ge 2c$ and $p_c\big(m_c(n)-1\big) \le n < p_c\big(m_c(n)\big)$, we infer from \eqref{pest} that
\begin{equation} \label{nest}
\phi_c^{m_c(n)-2c-1} \le n < \phi_c^{m_c(n)-c} \qquad (n \ge 1),
\end{equation}
so
\begin{equation}
\frac{\log(n)}{\log(\phi_c)} + c < m_c(n) \le \frac{\log(n)}{\log(\phi_c)} + 2c + 1 \qquad (n \ge 1). \label{mest}
\end{equation}
If $x > 1$, then $2x > x+1 > 2\sqrt{x}$, so $2\phi_c > \phi_c^{c+1} > 2\sqrt{\phi_c}$ and $\phi_c^c < 2 < \phi_c^{c+1/2}$. %Hence $\phi_c^c < (1 + \tfrac1{2c})^{2c} < (1 - \tfrac1{2c})^{-2c}$, so
%\begin{align*}
%1 - \tfrac1{2c} &< \phi^{-1/2} & \phi^{1/2} &< 1 + \tfrac1{2c}.
%\end{align*}
%If we combine this with $\phi_c^c < 2 < \phi_c^{c+1/2}$, then we obtain
%\begin{equation}
%2 - \tfrac1c < \phi_c^c < 2 < \phi_c^{c+1/2} < 2 + \tfrac1c. \label{phicest}
%\end{equation}
%Consequently, $\tfrac1c > \log_2 (\phi_c) > \tfrac1{c+1/2}$, which combines with \eqref{mest} to
%\begin{equation}
%c\big({\log_2(n)} + 1\big) < m_c(n) < (c+\tfrac12)\big({\log_2(n)}+2\big) \qquad (n \ge 1). \label{mestc}
%\end{equation}
Since $1 < \phi_c^c < 2$, one can prove by mathematical induction that
\begin{equation} 
c\,\phi_c^{k-c} \le q_c(k) \le 2c\,\phi_c^{k-c-1} \qquad (k \ge c). \label{qest}
\end{equation}
If we combine this with \eqref{nest}, then we obtain $cn < q_c\big(m_c(n)\big) \le 2cn\,\phi_c^c$, so
\begin{equation}
cn < q_c\big(m_c(n)\big) < 4cn \qquad (n \ge 1). \label{qmest}
\end{equation}
Applying the bounds in \eqref{mest} and \eqref{qmest} to Theorem \ref{thm:pawnrace} completes the proof.
\qed\end{proof}

As a consequence, we now obtain asymptotic estimates for the \v{C}ern\'y family.

\begin{corollary}
	For all $c_n$ such that $1 \leq c_n\leq n-2$,
	\begin{align*}
	r(C^{c_n}_n) &= n^2 + \frac{(n-c_n)\log(n)}{\log(\phi_{c_n})}-c_n\cdot O(n).
	\end{align*}
\end{corollary}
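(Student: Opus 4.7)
The plan is to combine Theorem~\ref{thm:reduction} with Proposition~\ref{prop:fest} and clean up two sources of error: the substitution $n' := n-c_n-1$ in place of $n$, and the factor $1/\log(\phi_{c_n})$, which can itself be as large as $\Theta(c_n)$.

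First, Theorem~\ref{thm:reduction} gives $r(C_n^{c_n}) = n'(n'-1)+c_n+1+f_{c_n}(n')$, and I would expand
$$n'(n'-1)+c_n+1 = (n-c_n)^2 - 3(n-c_n) + c_n + 3 = n^2 - 2c_n n + c_n^2 - 3n + 4c_n + 3,$$
noting that $c_n^2 \le c_n n$ and $n \le c_n n$ whenever $c_n \ge 1$, so every term except $n^2$ is $O(c_n n)$. Hence the polynomial contribution already has the form $n^2 - c_n\cdot O(n)$.

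Next, Proposition~\ref{prop:fest} gives $f_{c_n}(n') = n'\log(n')/\log(\phi_{c_n}) + E_n$ with $|E_n| \le 3 c_n n' \le 3 c_n n = c_n \cdot O(n)$. To pass from $n'\log(n')$ to $(n-c_n)\log(n)$, I would write
$$(n-c_n)\log(n) - n'\log(n') = \log(n) + n'\log(n/n'),$$
and apply $\log(1+t)\le t$ with $t=(c_n+1)/n'$ to get the uniform bound $n'\log(n/n')\le c_n+1$, which remains valid even at $n' = 1$ since $\log n\le n-1$. So the replacement introduces an error of at most $\log(n)+c_n+1$ in the numerator. Dividing by $\log(\phi_{c_n})$ and using the bound $1/\log(\phi_c)\le (c+1/2)/\log 2$, which is immediate from the inequality $\phi_c^{c+1/2}>2$ recorded in the proof of Proposition~\ref{prop:fest}, multiplies this by an $O(c_n)$ factor. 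The resulting error is $c_n\cdot O(\log n+c_n) = c_n\cdot O(n)$.

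Collecting these three contributions yields the claimed formula. The subtle regime is $c_n$ close to $n$, where $n'$ is small and $\log(n')$ is not well approximated by $\log(n)$ minus a small correction; the uniform inequality $n'\log(n/n')\le c_n+1$ sidesteps this issue in all regimes at once, and the blow-up of $1/\log(\phi_{c_n})$ at large $c_n$ is then exactly cancelled against the budget $c_n \cdot O(n)$.
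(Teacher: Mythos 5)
Your proposal is correct and follows essentially the same route as the paper: combine Theorem~\ref{thm:reduction} with Proposition~\ref{prop:fest}, use the bound $1/\log(\phi_c)=\Theta(c)$ extracted from the proof of Proposition~\ref{prop:fest}, and show the replacement of $n'\log(n')$ by $(n-c_n)\log(n)$ costs only $c_n\cdot O(n)$. The paper states this replacement as a one-line consequence; you supply the explicit bound $n'\log(n/n')\le c_n+1$ via $\log(1+t)\le t$ and handle the edge case $n'=1$, which is a clean way to make the asserted estimate airtight.
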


\begin{proof}
In the proof of Proposition \ref{prop:fest}, we saw that $c<\log(2)/\log(\phi_c)<c+\frac12$. Consequently,
$$
 \frac{(n-c_n-1)\log(n-c_n-1)}{\log(\phi_{c_n})} = \frac{(n-c_n)\log(n)}{\log(\phi_{c_n})}
- c_n \cdot O(n).
$$
Now it is easy to obtain the result from Theorem \ref{thm:reduction} and Proposition \ref{prop:fest}.
\qed\end{proof}

The reset thresholds $r(C_n^c)$ provide lower bounds for $p(n,2)$. To get the best lower bounds, one should maximize over $c$. Just as from the proof of Theorem \ref{thm:asymptotics}, one can infer from the above that $\big|\frac{n}2 - c'\big| = o(n)$ for optimal values $c'$ of $c$. We 
\ifijfcs
end
\else
continue 
\fi
this section with proving that $0 < \frac{n}2 - c' = \Theta\big(n/{\log(n)}\big)$, where we only assume that $c'$ is a local maximum.
%From the proof of Proposition \ref{prop:fest}, it follows that $c<\log(2)/\log(\phi_c)<c+\frac12$. Consequently, when maximizing, the dominant term in $r(C^c_n)$ is $c(n-c)\log_2(n)$ and the optimal $c$ is close to $n/2$. We end this section by proving this rigorously and with providing estimates of $\frac n2 - c'$ for local maxima $c = c'$ of $r(C^c_n)$.
 
 First we need a small lemma to compare solutions of the pawn race problem.

\begin{lemma}\label{lem:comparec}
	If $n' < c$, then $f_{c-1}(n'+1) > f_{c}(n')$.
\end{lemma}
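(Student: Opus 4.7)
The plan is to proceed by induction on $n'$. The base case $n'=1$ is immediate from the recursion of Proposition~\ref{prop:recursion_fc}: $f_{c-1}(2) = 2c-1 > 0 = f_c(1)$.

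For the inductive step, fix $n' \ge 2$ with $n' < c$, and let $i^* \in \{1,\ldots,n'\}$ be an optimal split for $f_{c-1}(n'+1)$, so that
\[
f_{c-1}(n'+1) = f_{c-1}(i^*) + f_{c-1}(n'+1-i^*) + c(n'+1) - i^*.
\]
In the boundary cases $i^* \in \{1,n'\}$, the inductive hypothesis applied to the single non-trivial subproblem, combined with the upper bound $f_c(n') \le f_c(n'-1) + cn' + 1$ coming from the recursion at $i = n'-1$, reduces the required inequality to $c \ge 2$ or $c > n'$, both of which hold. The interior case $2 \le i^* \le n'-1$ is the main obstacle: applying the inductive hypothesis to both subproblems yields
\[
f_{c-1}(n'+1) > f_c(i^*-1) + f_c(n'-i^*) + c(n'+1) - i^*,
\]
and comparing with $f_c(n')$ via the standard recursion leaves a deficit, since one would need $m_c(j) \le c - n'$ for some $j \ge 1$, contradicting $m_c(1) = 2c+1$.

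To close this gap I would exploit the \emph{block structure} of the sequence $p_c$: for every $j \ge 1$ and every $c \ge 2$, the value $m_c(j)$ depends affinely on $c$, i.e.\ $m_c(j) = \alpha_j c + \beta_j$ with non-negative integers $\alpha_j, \beta_j$ independent of $c$, where $(\alpha_j)_{j \ge 1}$ is non-decreasing and $\alpha_j \ge 2$. This would be shown by induction on $j$, using the recurrence $p_c(k) = p_c(k-c-1) + p_c(k-c)$ to track how the endpoints of the blocks $\{k : p_c(k) = v\}$ scale linearly with $c$. Granting this, Theorem~\ref{thm:pawnrace} turns the claim into the algebraic identity
\[
f_{c-1}(n'+1) - f_c(n') = \alpha_{n'} c - \sum_{j=1}^{n'} \alpha_j + \beta_{n'}.
\]
Using $\sum_{j=1}^{n'-1} \alpha_j \le (n'-1)\alpha_{n'}$ from monotonicity, the right-hand side is at least $\alpha_{n'}(c-n') + \beta_{n'}$, which is strictly positive for $c > n'$ since $\alpha_{n'} \ge 2$. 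The main work is therefore in establishing the affine structure of $m_c(j)$, which I expect to be the technical heart of the proof.
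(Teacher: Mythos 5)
Your proposal reduces the lemma to an unproven structural claim about the twinverse function: that for every $j \ge 1$ and $c \ge 2$ one has $m_c(j) = \alpha_j c + \beta_j$ with $\alpha_j, \beta_j$ independent of $c$, $(\alpha_j)_{j\ge 1}$ non-decreasing, and $\alpha_j \ge 2$. Your algebra downstream of this claim is correct: substituting into Theorem~\ref{thm:pawnrace} and using $\sum_{j=1}^{n'-1} \alpha_j \le (n'-1)\alpha_{n'}$ does give $f_{c-1}(n'+1) - f_c(n') \ge \alpha_{n'}(c-n') + \beta_{n'} > 0$ whenever $c > n'$, and small-case computations suggest the affine claim is indeed true. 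But that claim is the entire content of the argument; tracking how the run-lengths of $p_c$ shift linearly in $c$ would require its own careful two-variable induction, and it is not evidently a smaller problem than the lemma you are proving. The direct induction on $n'$ that you sketch first does not close either, as you note yourself. So, as written, there is a genuine gap.

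The paper closes the lemma with none of this machinery. It observes that replacing $c$ by $c+1$ multiplies the cost of a move by $1 + 1/(c+1)$ and the cost of a stay by $1 + 1/c$, so $1 + \tfrac{1}{c+1} < f_{c+1}(n')/f_c(n') < 1 + \tfrac1c$ for $n' \ge 2$; and from Theorem~\ref{thm:pawnrace} it extracts $f_c(n'+1)/f_c(n') = 1 + \tfrac1{n'}\bigl(1 + q_c(m_c(n'))/f_c(n')\bigr) > 1 + \tfrac1{n'}$. Chaining these with the hypothesis $n' \le c-1$ gives
\[
\frac{f_{c-1}(n'+1)}{f_{c-1}(n')} > 1 + \frac1{n'} \ge 1 + \frac1{c-1} > \frac{f_c(n')}{f_{c-1}(n')},
\]
which is exactly the statement (the case $n'=1$ is trivial since $f_c(1)=0$). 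This route is shorter, needs nothing about the fine structure of $m_c$, and works uniformly in $c$. If you want to keep your approach, the affine dependence of $m_c(j)$ on $c$ is an interesting fact in its own right, but it must actually be proved rather than asserted.
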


\begin{proof}
If we replace $c$ by $c+1$ in Problem \ref{prob:pawnproblem}, then the price of moving becomes $1 + 1/(c+1)$ times larger, and the price of staying becomes $1 + 1/c$ times larger. Consequently,
\begin{equation}
1 + \frac1{c+1} < \frac{f_{c+1}(n')}{f_{c}(n')} < 1 + \frac1c \qquad (n' \ge 2). \label{fc}
\end{equation}
On account of Theorem \ref{thm:pawnrace},
$$
\frac{n'\cdot m_c(n')}{f_c(n')} = \frac{f_c(n')+q_c\big(m_c(n')\big)}{f_c(n')} = 
1 + \frac{q_c\big(m_c(n')\big)}{f_c(n')} \qquad (n' \ge 2),
$$
and
\begin{equation} \label{fn}
\frac{f_c(n'+1)}{f_c(n')} = 1 + \frac1{n'} \bigg(1 + \frac{q_c\big(m_c(n')\big)}{f_c(n')}\bigg)\qquad (n' \ge 2).
\end{equation}
Since $f_{c-1}(2) > 0 = f_{c}(1)$, the case $n' = 1$ follows. The case $n' \ge 2$ follows from
$$
\frac{f_{c-1}(n'+1)}{f_{c-1}(n')}  > 1 + \frac1{n'} \ge 1 + \frac1{c-1} > \frac{f_{c}(n')}{f_{c-1}(n')}
$$
which is an application of \eqref{fn} and \eqref{fc}.
\qed\end{proof}

\begin{proposition} \label{prop:optimalc}	
	If $c = c'$ is a local maximum of $r(C^c_n)$, then $c' < \frac n2$ and $c' = \frac n2 - \Theta\big(n/{\log(n)}\big)$.
\end{proposition}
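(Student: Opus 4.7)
If $c \ge n/2$ then $n' := n-c-1 < c$, so Lemma \ref{lem:comparec} gives $f_{c-1}(n'+1) > f_c(n')$, and Theorem \ref{thm:reduction} yields
\[
r(C^{c-1}_n) - r(C^c_n) \;=\; 2n'-1 + f_{c-1}(n'+1) - f_c(n') \;>\; 0.
\]
Hence no $c \ge n/2$ can be a local maximum, proving the strict inequality.

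\noindent\textbf{Part 2: $n/2 - c' = \Theta(n/\log n)$.} Set $\alpha := n/2 - c'$ and $n' = n/2 + \alpha - 1$. The two local-maximum inequalities $r(C^{c'}_n) \ge r(C^{c'\pm 1}_n)$, rewritten via Theorem \ref{thm:reduction}, become
\[
f_{c'}(n') - f_{c'-1}(n'+1) \ge 2n' - 1 \quad\text{and}\quad f_{c'+1}(n'-1) - f_{c'}(n') \le 2n' - 3.
\]
Heuristically, Proposition \ref{prop:fest} combined with the sharper $\log\phi_c = \log 2/(c+\tfrac12) + O(c^{-2})$ (obtained by Taylor expanding $(c+1)\log\phi_c = \log(\phi_c+1)$) gives $f_c(n) \sim (c+\tfrac12)n\log n/\log 2$, so
\[
r(C^c_n) \;\approx\; (n-c-1)^2 + \frac{(c+\tfrac12)(n-c-1)\log(n-c-1)}{\log 2}.
\]
Treating $c$ as continuous, setting the derivative to zero and substituting $c = n/2 - \alpha$ produces the balance $2\alpha\log(n/2) \sim n(\tfrac12 + \log 2)$, yielding $\alpha = \Theta(n/\log n)$.

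To make this rigorous I would work from the exact identity $f_c(n) = n\,m_c(n) - q_c(m_c(n))$ of Theorem \ref{thm:pawnrace} and sharpen Proposition \ref{prop:fest}'s error from $O(cn)$ to $O(n)$. The sharpening comes from analyzing the recurrence $p_c(k) = p_c(k-c-1)+p_c(k-c)$ together with $p_c(2c) = 1$: writing $p_c(k) = A_c\phi_c^k + (\text{lower-root contributions})$, the asymptotic $\phi_c^{2c}\to 4$ (a consequence of $(c+1)\log\phi_c\to\log 2$) combined with $p_c(2c)=1$ forces $A_c \to 1/4$, which in turn pins down the offset $C_c := m_c(n) - \log n/\log\phi_c$ to $C_c = 2c + O(1)$ and gives $q_c(m_c(n))/n \to 1/(\phi_c-1) = (c+\tfrac12)/\log 2 + O(1)$. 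Substituting these expansions into the two displayed local-maximum inequalities, the $\Theta(n\log n)$ leading terms cancel and both reduce to $2\alpha\log(n/2) = (1+o(1))\,n(\tfrac12 + \log 2)$, delivering matching upper and lower bounds.

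\noindent\textbf{Main obstacle.} The central difficulty is that Proposition \ref{prop:fest}'s absolute error $O(cn)$ is $O(n^2)$ at $c\approx n/2$, far exceeding the $O(n\log n)$ scale of the differences we compare. Overcoming this requires the $O(1)$ (rather than merely $O(c)$) precision for the offset $C_c$ in the expansion of $m_c(n)$: once $C_c - C_{c-1} = O(1)$ is known, the telescoping $\sum_j (m_c(j) - m_{c-1}(j))$ accumulates only $O(n)$ error, exactly what is needed to isolate the $\Theta(n/\log n)$ scale of $\alpha$.
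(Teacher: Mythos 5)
\textbf{Part 1} is correct and is essentially the paper's argument: the PFA-level difference $r(C^{c-1}_n)-r(C^c_n)=2n'-1+f_{c-1}(n'+1)-f_c(n')$ is positive once Lemma~\ref{lem:comparec} applies, so no $c\ge n/2$ is a local maximum.

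\textbf{Part 2} is not a proof but a heuristic plus a sketch of a rigorization, and the sketch contains a concrete error. You claim $p_c(2c)=1$ together with $\phi_c^{2c}\to 4$ forces $A_c\to 1/4$, reasoning as if the lower-root contributions in the Binet expansion are negligible at $k=2c$. They are not: $k=2c$ is the very last index where $p_c(k)=1$, before the recurrence has had any chance to let $\phi_c^k$ dominate. The correct leading Binet coefficient is $A_c=\phi_c^2\big/\big((\phi_c+1)(c\phi_c+c+1)(\phi_c-1)\big)$; using $\phi_c-1\sim\ln 2/(c+\tfrac12)$ and $c\phi_c+c+1\sim 2c$ one gets $A_c\to 1/(4\ln 2)\approx 0.36$, not $1/4$ (and one may check $A_1=1/\sqrt 5$). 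Consequently your claimed offset $C_c=m_c(n)-\log n/\log\phi_c=2c+O(1)$ is wrong; the linear coefficient is $\ln(4\ln 2)/\ln 2\approx 1.47$, not $2$. The error would not kill the final $\Theta$ order, but it would give wrong constants and the sketch is never carried through to the point where the discrepancy would surface.

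Beyond the numerical slip, your proposed rigorization takes a genuinely different and heavier route than the paper. You identify the right obstacle --- the $O(cn)$ error in Proposition~\ref{prop:fest} is $O(n^2)$ at $c\approx n/2$, which swamps the $O(n\log n)$-scale differences --- and propose to fix it by pushing the Binet asymptotics to $O(n)$ precision, uniformly in $c$. The paper avoids this altogether: it never needs any asymptotic expansion of $p_c$ or $q_c$. Instead, it uses the exact ratio identity $f_c(n'+1)/f_c(n')=1+\tfrac{1}{n'}\big(1+q_c(m_c(n'))/f_c(n')\big)$, the two-sided ratio bound $1+\tfrac{1}{c+1}<f_{c+1}(n)/f_c(n)<1+\tfrac1c$, and the crude bound $cn<q_c(m_c(n))<4cn$. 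The decisive move is to multiply the local-max inequality by the normalizing factor $(c'+1)(n'-1)/f_{c'}(n'-1)$; after that, the quadratic and $n\log n$ terms reduce to $O(1)$ quantities and the inequality $n'-c'\le 2+O(n/\log n)$ drops out from the crude bounds alone, with the matching $\Omega$ bound by symmetry. This is considerably leaner than the Binet-sharpening program you outline.
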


\begin{proof}	
	To prove the first part, suppose that $c' \ge \frac n2$, and take $n' = n - c '- 1$. Then $c' > n'$, so by Lemma \ref{lem:comparec}, $f_{c'-1}(n'+1) > f_{c'}(n')$. Hence $c = c'$ is not a local maximum of $f_c(n-c-1)$. The sum of the other terms of $r(C^c_n)$ only makes things worse, because
	$$
	c' + (n'+1)n' = c' - 2n' + n'(n'-1) > c' + 1 + n'(n'-1).
	$$
	So $c=c'$ is not a local maximum of $r(C^c_n)$.
	
	So it remains to prove the second part.  Suppose that $c = c'$ a local maximum of $r(C^c_n)$, and $n \ge 6$. Then $0 < c' < \frac n2 \le n - 3$ on account of Corollary \ref{betterthancerny}, so $1 \le c' \le n-4$. Since $c = c'$ a local maximum of $r(C^c_n)$, it follows that
	$$
	r(C^{c'-1}_n) \le r(C^{c'}_n) \ge r(C^{c'+1}_n)
	$$
	Let $n' = n - c' - 1$. Using \eqref{fn} and \eqref{fc}, we infer from $r(C^{c'}_n) \ge r(C^{c'+1}_n)$ that
	\begin{align*}
	0 &\le \frac{(c'+1)(n'-1)}{f_{c'}(n'-1)}\big(r(C^{c'}_n) - r(C^{c'+1}_n)\big) \\
	&= \frac{(c'+1)(n'-1)}{f_{c'}(n'-1)}\big(f_{c'}(n') - f_{c'+1}(n'-1) - 1 + 2(n'-1)\big) \\
	&\le (c'+1) - (n'-1) + \frac{(c'+1)q_{c'}\big(m_{c'}(n'-1)\big) + \Theta\big(c'(n')^2\big)}{f_{c'}(n'-1)}
	\end{align*}
	On account of Proposition \ref{prop:bounds_fc} and \eqref{qmest},
	$$
	n' - c' = 2 + \frac{O\big(c'n'n\big)}{\Theta\big(c'n'{\log(n')}\big)} = O\big(n/{\log(n)}\big)
	$$
	because $n' > \frac n2 - 1$. So we may assume that $c' \ge 2$. Just like $n' - c' = O\big(n/{\log(n)}\big)$ was obtained from $r(C^{c'}_n) \ge r(C^{c'+1}_n)$, $n' - c' = \Omega\big(n/{\log(n)}\big)$ can be obtained from $r(C^{c'-1}_n) \le r(C^{c'}_n)$. So $n' - c'=\Theta\big(n/{\log(n)}\big)$ and $c' = \frac n2 - \Theta\big(n/{\log(n)}\big)$.
	\qed\end{proof}

\ifijfcs
It is easier to compute $f_c(n)$ with the explicit solution of Theorem \ref{thm:pawnrace} than with the recursive formula of Proposition \ref{prop:recursion_fc}. We did that for all $n,c \le 10000$, to compute $r(C^c_n)$ for all $n \le 10000$ and $c \le n-2$. It appeared that the optimal choice $c=c'$ does not always increase regularly along with $n$. Most of the times, $c'$ stays the same or increases $1$ if $n$ increases $1$. But sometimes, $c'$ drops significantly if $n$ increases $1$. This odd behaviour is discussed in some more detail in the arXiv-version of this paper \cite{arXiv}.

\else

We finally give a more accurate estimate of $f_c(n)$, namely with an error of $o(n)$. This estimate has not been included in the journal paper. We provide Binet's formulas for $p_c$ and $q_c$ to obtain the new estimate of $f_c(n)$. But first, we need a proposition about the characteristic polynomial.

\begin{proposition}
Let $c \ge 1$. The characteristic polynomial $\chi_c(x) = x^{c+1} - x - 1$ has $c + 1$ distinct roots. Furthermore, all roots $\alpha$ except $\phi_c$ satisfy $|\alpha| < \phi_c$.
\end{proposition}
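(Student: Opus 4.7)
My plan splits into two essentially independent claims. For the count of distinct roots I will show $\chi_c$ is squarefree; for the modulus bound I will use the triangle inequality on the defining relation $\alpha^{c+1}=\alpha+1$ together with the fact that $\phi_c$ is the unique positive real root.

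For distinctness, I would rule out any common root of $\chi_c$ and its derivative $\chi_c'(x)=(c+1)x^c-1$, so that $\gcd(\chi_c,\chi_c')=1$ and $\chi_c$ has $c+1$ distinct complex roots. If $\alpha$ were such a common root, then $\chi_c'(\alpha)=0$ gives $\alpha^c=1/(c+1)$, so $\alpha^{c+1}=\alpha\cdot\alpha^c=\alpha/(c+1)$; combining with $\alpha^{c+1}=\alpha+1$ pins down $\alpha=-(c+1)/c$. Plugging this back into $\alpha^c=1/(c+1)$ and taking absolute values yields $(1+1/c)^c>1>1/(c+1)$, a contradiction. This is the easier half.

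For the bound on moduli, take absolute values in $\alpha^{c+1}=\alpha+1$ to obtain
\[
|\alpha|^{c+1}=|\alpha+1|\le|\alpha|+1,\qquad\text{so}\qquad\chi_c(|\alpha|)\le 0.
\]
Since $\chi_c(0)=-1$, the leading coefficient is positive, and $\phi_c$ is the unique positive real root, the polynomial $\chi_c$ is strictly positive on $(\phi_c,\infty)$; this immediately yields $|\alpha|\le\phi_c$. The main obstacle is excluding the equality $|\alpha|=\phi_c$ when $\alpha\neq\phi_c$. In that case the estimate $|\alpha+1|\le|\alpha|+1$ must be saturated, and the equality case of the triangle inequality forces $\alpha$ and $1$ to be non-negative real multiples of each other, hence $\alpha\in[0,\infty)$. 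Uniqueness of the positive real root of $\chi_c$ then gives $\alpha=\phi_c$, contradicting $\alpha\neq\phi_c$, which completes the argument.
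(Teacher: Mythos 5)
Your proof is correct and follows essentially the same two-part strategy as the paper: rule out a common root of $\chi_c$ and $\chi_c'$ to get squarefreeness, then use the equality case of the triangle inequality to force any root of maximal modulus to be real and positive, hence equal to $\phi_c$. The only stylistic difference is that the paper applies the triangle inequality to $1+\alpha^{-1}$ rather than to $\alpha+1$ (which is slightly less direct), and in the distinctness half the paper derives $|\alpha|\le 1$ before pinning $\alpha=-(c+1)/c$ rather than after; these are cosmetic reorderings of the same idea.
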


\begin{proof}
Suppose that $\alpha$ is a double root of $\chi_c(x)$. Then $\alpha$ is also a root of $\chi_c'(x) = (c+1) x^c - 1$, so $|\alpha| \le 1$ and $\alpha$ is a root of $x \chi_c'(x)  - (c+1) \chi_c(x) = c x + (c+1)$. Hence $\alpha = -(c+1)/c$. This contradicts $|\alpha| \le 1$.

Suppose that $\alpha$ is a root of $\chi_c(x)$ such that $|\alpha| \ge \phi_c$. Then $|1 + \alpha^{-1}| \le 1 + \phi_c^{-1}$, and for equality $\alpha$ needs to be $\phi_c$. Furthermore
$$
|1 + \alpha^{-1}| = |\alpha^c| \ge \phi_c^c = 1 + \phi_c^{-1},
$$
so $\alpha = \phi_c$ is the only possibility. \qed
\end{proof}

The following theorem and its proof was inspired by \cite{lee}.

\begin{theorem}
Let $\lambda_1, \lambda_2, \ldots, \lambda_{c+1}$ be the distinct roots of $\chi_c(x) = x^{c+1} - x - 1$. Then 
\begin{align*}
p_c(k) &= \sum_{j=1}^{c+1}\frac{\lambda_j^2}{(\lambda_j+1)(c\,\lambda_j+c+1)(\lambda_j-1)} \lambda_j^k \qquad (k \ge c), \\
q_c(k) &= \sum_{j=1}^{c+1}\frac{\lambda_j^2}{(\lambda_j+1)(c\,\lambda_j+c+1)(\lambda_j-1)^2} \lambda_j^k \qquad (k \ge c). 
\end{align*}
Furthermore, the summands with $\lambda_j = \phi_c$ of the indexed sums are asymptotics of $p_c(k)$ and $q_c(k)$ respectively.
\end{theorem}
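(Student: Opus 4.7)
The plan is to extract Binet's formulas via a generating function (Z-transform) together with partial fractions, then read off the asymptotics from the dominance of $\phi_c$. Two ingredients from the preceding proposition are used at once: the $c+1$ roots $\lambda_j$ of $\chi_c$ are distinct (so partial fractions give only simple poles, each contributing a pure geometric sequence), and $|\lambda_j|<\phi_c$ for all $\lambda_j\ne\phi_c$ (so the $\phi_c$-summand dominates asymptotically).

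First I would form the generating function $\hat p_c(z) := \sum_{k\ge c} p_c(k)\, z^{-k-1}$. Splitting the sum at $k=2c$ and applying the recurrence $p_c(k)=p_c(k-c-1)+p_c(k-c)$ for $k\ge 2c+1$, together with the initial values $p_c(c)=p_c(c+1)=\cdots=p_c(2c)=1$, a routine manipulation (substituting $m=k-c-1$ and $m=k-c$ in the two shifted sums) gives
\[
\hat p_c(z)\,(1 - z^{-c-1} - z^{-c}) \;=\; z^{-c-1}\bigl(1 + z^{-1} + \cdots + z^{-(c-1)}\bigr).
\]
Multiplying by $z^{c+1}$ and noting $z^{c+1}(1-z^{-c-1}-z^{-c})=\chi_c(z)$ yields the closed form
\[
\hat p_c(z)\,\chi_c(z) \;=\; 1 + z^{-1} + z^{-2} + \cdots + z^{-(c-1)}.
\]
Since the $\lambda_j$ are simple roots of $\chi_c$, partial fractions give $\hat p_c(z) = \sum_j A_j/(z-\lambda_j)$ with $A_j = \bigl(\sum_{i=0}^{c-1}\lambda_j^{-i}\bigr)/\chi_c'(\lambda_j)$.

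The key algebraic step is to simplify both the numerator and the denominator of $A_j$ using the characteristic identity $\lambda_j^{c+1}=\lambda_j+1$, which rewrites as $\lambda_j^c-1=1/\lambda_j$. This gives
\[
\sum_{i=0}^{c-1}\lambda_j^{-i} \;=\; \frac{1-\lambda_j^{-c}}{1-\lambda_j^{-1}} \;=\; \frac{\lambda_j(\lambda_j^c-1)}{\lambda_j^c(\lambda_j-1)} \;=\; \frac{\lambda_j}{\lambda_j^2-1},
\]
and $\chi_c'(\lambda_j) = (c+1)\lambda_j^c - 1 = (c\lambda_j + c + 1)/\lambda_j$. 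Combining, $A_j$ collapses to exactly the asserted coefficient $\alpha_j := \lambda_j^2/[(\lambda_j+1)(c\lambda_j+c+1)(\lambda_j-1)]$. Expanding $1/(z-\lambda_j) = \sum_{k\ge 0}\lambda_j^k z^{-k-1}$ and matching coefficients of $z^{-k-1}$ for $k\ge c$ produces the formula for $p_c(k)$; as a by-product, the same comparison for $0\le k<c$ gives $\sum_j\alpha_j\lambda_j^k=0$.

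For $q_c$, I would reuse the identity $q_c(k) = \sum_{i=k+1}^{k+c} p_c(i)$ established earlier in the section (valid for $k\ge c$), substitute the Binet formula for $p_c$, and apply a geometric sum; once more using $\lambda_j^c-1=1/\lambda_j$,
\[
\sum_{i=k+1}^{k+c}\lambda_j^i \;=\; \lambda_j^{k+1}\,\frac{\lambda_j^c-1}{\lambda_j-1} \;=\; \frac{\lambda_j^k}{\lambda_j-1},
\]
so that $q_c(k) = \sum_j \alpha_j\lambda_j^k/(\lambda_j-1)$, matching the stated coefficient. Finally, since $|\lambda_j|<\phi_c$ for all $\lambda_j\ne\phi_c$ and $\alpha_{\phi_c}$ (and $\alpha_{\phi_c}/(\phi_c-1)$) is a strictly positive real, the non-dominant summands contribute $o(\phi_c^k)$ while the $\phi_c$-summand is $\Theta(\phi_c^k)$, giving the claimed asymptotics. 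The main obstacle is purely algebraic: the twofold ``collapse'' of $\sum_i\lambda_j^{-i}$ and of $\chi_c'(\lambda_j)$ under the characteristic relation that produces the clean form of $\alpha_j$; once these two identities are in hand, the rest of the argument is routine generating-function bookkeeping.
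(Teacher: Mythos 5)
Your generating-function route is correct in its conclusions and is genuinely different from the paper's argument. The paper proves the formula for $p_c$ by reducing to $c+1$ consecutive initial values: it observes that both sides satisfy the recurrence, so it suffices to verify $\sum_j y_j\lambda_j^k = 1$ for $0\le k\le c$ (where $y_j=\lambda_j/\big((c\lambda_j+c+1)(\lambda_j-1)\big)$), and then solves this Vandermonde linear system with Cramer's rule and the Lagrange-interpolation identity $\prod_{i\ne j}\frac{1-\lambda_i}{\lambda_j-\lambda_i}=\frac{\chi(1)/(1-\lambda_j)}{\chi'(\lambda_j)}$. You instead set up the $z$-transform $\hat p_c(z)$, use the recurrence to get $\hat p_c(z)\chi_c(z)=1+z^{-1}+\cdots+z^{-(c-1)}$, and read the coefficients off a partial-fraction expansion; your algebraic simplification of the residues via $\lambda_j^c-1=1/\lambda_j$ is a clean alternative to the paper's use of $\chi(1)=-1$. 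The two are really the same linear algebra under the hood (partial fractions at simple poles and Lagrange interpolation both invert a Vandermonde matrix), so neither buys a dramatic saving, but the generating-function version avoids explicitly invoking the recurrence a second time and is arguably more self-contained. The $q_c$ step and the asymptotics are handled identically to the paper.

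One inaccuracy to fix. For $c\ge2$ the rational function $\hat p_c(z) = \frac{1+z^{-1}+\cdots+z^{-(c-1)}}{\chi_c(z)} = \frac{z^{c-1}+\cdots+1}{z^{c-1}\chi_c(z)}$ has an additional pole at $z=0$ of order $c-1$, so the decomposition $\hat p_c(z)=\sum_j A_j/(z-\lambda_j)$ as you wrote it is not an identity; the correct statement is $\hat p_c(z)=\sum_j A_j/(z-\lambda_j) + B(z^{-1})$ with $B$ a polynomial of degree at most $c-1$ with $B(0)=0$. The residues $A_j$ you compute are nevertheless correct, and since the $B$-term only contributes to $z^{-1},\ldots,z^{-(c-1)}$, the coefficient comparison for $k\ge c$ (indeed $k\ge c-1$) is unaffected and the Binet formula follows as claimed. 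However, the alleged ``by-product'' $\sum_j\alpha_j\lambda_j^k=0$ for all $0\le k<c$ is false for $c\ge2$ and $k<c-1$, precisely because of the $z=0$ pole (e.g.\ $c=2$, $k=0$ gives $\sum_j\alpha_j=1$, not $0$). Since you don't use that by-product, the theorem's proof stands, but the partial-fraction sentence should be corrected.
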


\begin{proof}
The last claim follows from the fact that $\phi_c$ is larger than the absolute value of any other root of $\chi_c(x)$.

Notice that 
$$
\sum_{i=k+1}^{k+c} \lambda_j^i = \bigg( \sum_{i=1}^{c} \lambda_j^i \bigg) \lambda_j^k = \frac{\lambda_j^{c+1}-\lambda_j}{\lambda_j-1} \lambda_j^k
= \frac{1}{\lambda_j-1} \lambda_j^k.
$$
So the equality for $q_c(k)$ follows from that for $p_c(k)$ by way of the displayed equality which precedes \eqref{qeqc1} and \eqref{qeqc2}. To prove the equality for $p_c(k)$, we must show that
$$
p_c(k+c) = \sum_{j=1}^{c+1}\frac{\lambda_j^{c+2}}{(\lambda_j+1)(c\,\lambda_j+c+1)(\lambda_j-1)} \lambda_j^k \qquad (k \ge 0).
$$
It suffices to show this for $0 \le k \le c$ only, since the recurrence formula gives the equality for larger $k$. As $p_c(k+c) = 1$ for $0 \le k \le c$ and $\lambda_j^{c+2} = \lambda_j(\lambda_j+1)$, we must show that
$$
\sum_{j=1}^{c+1}\frac{\lambda_j}{(c\,\lambda_j+c+1)(\lambda_j-1)} \lambda_j^k = 1 \qquad (0 \le k \le c).
$$

If we solve
$$
\left( \begin{array}{cccc}
1 & 1 & \cdots & 1 \\
\lambda_1 & \lambda_2 & \cdots & \lambda_{c+1} \\
\vdots & \vdots & \ddots & \vdots \\
\lambda_1^c & \lambda_2^c & \cdots & \lambda_{c+1}^c
\end{array} \right) \left( \begin{array}{c}
y_1 \\ y_2 \\ \vdots \\ y_{c+1}
\end{array} \right) = \left( \begin{array}{c}
1 \\ 1 \\ \vdots \\ 1
\end{array} \right)
$$
with Cramer's rule, then we obtain $y_j = \det(\Lambda_j)/\det(\Lambda)$, where $\Lambda$ is the Vandermonde matrix on the left hand side, and $\Lambda_j$ is obtained from $V$ by replacing the $j$-th column by the right hand side. Comparing the Vandermonde determinants $\det(\Lambda_j)$ and $\det(\Lambda)$ yields 
\begin{align*}
\frac{\det(\Lambda_j)}{\det(\Lambda)} = \prod_{i \ne j}\frac{1-\lambda_i}{\lambda_j-\lambda_i} = \frac{\frac{\chi(1)}{1-\lambda_j}}{\chi'(\lambda_j)} &= \frac{1}{\big((c+1)\lambda_j^c-1\big)(\lambda_j-1)} \\
&= \frac{\lambda_j}{(c\,\lambda_j+c+1)(\lambda_j-1)},
\end{align*}
which gives the required equality. \qed
\end{proof}

\begin{corollary} \label{corbinet}
For $n$ of the form $n = p_c(k)$,
$$
f_c(n) = \bigg(\frac{\ln(n)}{\ln (\phi_c)} + \frac{\ln \big((\phi_c + 1)(c\,\phi_c + c + 1)(\phi_c - 1)\big)}{\ln (\phi_c)} - 2 - \frac{1}{\phi_c-1}\bigg) n \pm o(n).
$$
%For $n$ between $p_c(k)$ and $p_c(k+1)$, linear interpolation can be used to obtain asymptotic estimates.
\end{corollary}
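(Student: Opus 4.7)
The plan is to combine the closed-form expression $f_c(n) = n\,m_c(n) - q_c\bigl(m_c(n)\bigr)$ from Theorem~\ref{thm:pawnrace} with the Binet-type formulas from the theorem immediately preceding the corollary. For $n$ of the form $n = p_c(k)$ (with $k \ge c$, to stay inside the regime where the recurrence applies and where $p_c$ is strictly increasing), the definition of the twinverse gives $m_c(n) = k+1$, so the task reduces to estimating $n(k+1) - q_c(k+1)$ accurately in terms of $n$.

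First I would extract the dominant asymptotics. Writing $A = \frac{\phi_c^2}{(\phi_c+1)(c\phi_c+c+1)(\phi_c-1)}$ and $B = \frac{\phi_c^2}{(\phi_c+1)(c\phi_c+c+1)(\phi_c-1)^2}$, the preceding theorem states $p_c(k) = A\phi_c^k(1+o(1))$ and $q_c(k) = B\phi_c^k(1+o(1))$, the $o(1)$ errors being exponentially small since every other root of $\chi_c$ has modulus strictly less than $\phi_c$. In particular, $B/A = 1/(\phi_c-1)$, a key algebraic simplification.

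Next I would invert the relation $n = p_c(k)$ asymptotically. Taking logarithms yields
\begin{equation*}
k = \frac{\ln n - \ln A}{\ln \phi_c} + o(1),
\end{equation*}
and substituting into the $q_c$ asymptotic gives
\begin{equation*}
q_c(k+1) = B\,\phi_c\cdot\phi_c^k\,(1+o(1)) = \frac{B\phi_c}{A}\,n + o(n) = \frac{\phi_c}{\phi_c-1}\,n + o(n) = n + \frac{n}{\phi_c-1} + o(n).
\end{equation*}
Similarly,
\begin{equation*}
n(k+1) = \frac{n\ln n}{\ln\phi_c} + n - \frac{\ln A}{\ln\phi_c}\,n + o(n).
\end{equation*}
Subtracting and using $\ln A = 2\ln\phi_c - \ln\bigl((\phi_c+1)(c\phi_c+c+1)(\phi_c-1)\bigr)$ cancels the isolated $n$ and produces exactly the claimed formula.

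The main subtlety is controlling the error terms: the $o(1)$ in $k$ must feed through the subsequent products with $n$ without inflating beyond $o(n)$. This is fine because the error in $\phi_c^k$ from replacing $k$ by the closed-form expression is a multiplicative $1+o(1)$, and $n(k+1)$ only picks up an additive $o(n)$ since $k$ itself is determined up to $o(1)$. Apart from this bookkeeping, the calculation is a straightforward substitution once the ratio $B/A = 1/(\phi_c-1)$ is recognized.
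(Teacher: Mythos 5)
Your proposal is correct and follows essentially the same route as the paper: both invert $n = p_c(k)$ via the dominant Binet term to express $m_c(n) = k+1$ with an $o(1)$ error, use the ratio $B/A = 1/(\phi_c - 1)$ (the paper phrases this as $q_c(k) \sim p_c(k)/(\phi_c - 1)$) to estimate $q_c(m_c(n))$, and then substitute into $f_c(n) = n\, m_c(n) - q_c\bigl(m_c(n)\bigr)$ from Theorem~\ref{thm:pawnrace}. The $A$, $B$ notation is yours, but the algebra is the same.
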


\begin{proof}
Suppose that $n = p_c(k)$. Notice that 
$$
n = p_c(k) \sim \frac{\phi_c^{k+2}}{(\phi_c+1)(c\,\phi_c+c+1)(\phi_c-1)}
$$
so
$$
k + 2 = \frac{\ln(n) + \ln \big((\phi_c + 1)(c\,\phi_c + c + 1)(\phi_c - 1)\big)}{\ln (\phi_c)} \pm o(1).
$$
From $n = p_c(k)$, it follows that $m_c(n) = k + 1$. So
\begin{align*}
m_c(n) &= \frac{\ln(n)}{\ln (\phi_c)} + \frac{\ln \big((\phi_c + 1)(c\,\phi_c + c + 1)(\phi_c - 1)\big)}{\ln (\phi_c)} - 1 \pm o(1). \\
\intertext{Furthermore,}
q_c(m_c(n)) &= q_c(k+1) \sim \phi_c \cdot q_c(k) \sim \frac{\phi_c}{\phi_c-1} \cdot p_c(k) = \bigg(\frac{1}{\phi_c-1} + 1\bigg) \cdot n.
\end{align*}
Now the result follows from Theorem \ref{thm:pawnrace}.
\end{proof}

For $n$ between $p_c(k)$ and $p_c(k+1)$, the value of $f_c(n)$ can be obtained by way of linear interpolation, because $f_c$ is linear between $p_c(k)$ and $p_c(k+1)$. This linear interpolation can also be applied to the asymptotic formula for $f_c(n)$ in Corollary \ref{corbinet}, to extend Corollary \ref{corbinet} to all $n$. This linear interpolation gives larger values than the formula itself, because the formula is a convex function. The reader may show that the values are $\Theta(cn)$ larger on average. So there does not seem to be a nice asymptotic formula for $f_c(n)$ with error $o(n)$.
%The reader may show that the linear interpolation in Corollary \ref{corbinet} gives values which are $\Theta(cn)$ larger on average.

\addtocounter{section}{-1}
\renewcommand{\thesection}{\arabic{section}A}
\section{Drops in the optimal value of $c$} \label{sec:drops}
\renewcommand{\thesection}{\arabic{section}}

It is easier to compute $f_c(n)$ with the explicit solution of Theorem \ref{thm:pawnrace} than with the recursive formula of Proposition \ref{prop:recursion_fc}. We did that for all $n,c \le 10000$, to compute $r(C^c_n)$ for all $n \le 10000$ and $c \le n-2$. It appeared that the optimal choice $c=c'$ does not always increase regularly along with $n$. Most of the times, $c'$ stays the same or increases $1$ if $n$ increases $1$. But sometimes, $c'$ drops significantly if $n$ increases $1$.

The reason for the drops of $c'$ is as follows. Since $p_c$ has the value $1$ $2c$ times in succession, one can prove by induction that $p_c$ has the value $2^t$ $c + 1 - t$ times in succession if $1 \le t \le c$. Consequently
$$
m_c(2^t) - m_c(2^t-1) = c + 1 - t \qquad (1 \le t \le c)
$$
For small values of $t$, this can be a large leap. The values of $\ldots, m_c(2^t - 2), m_c(2^t - 1)$ are relatively small, and the values of $m_c(2^t), m_c(2^t + 1), \cdots$ are relatively large. To obtain a large value of $f_c(n) = \sum_{j=1}^{n-1} m_c(j)$, it is better for the last few summands to be relatively large than to be relatively small. The drop of $c'$ is a transition from avoiding relatively small last summands to adopting relatively large last summands. The drops occur when $n' = n - c' - 1$ is close to a power of $2$.

The first drop of $c'$ is between $n = 47$ and $n = 48$: $c'$ drops from $15$ to $14$. The next drop of $c'$ is at $n = 99$: $c'$ drops from $35$ to $33$. For $n = 99$, $c'$ has $2$ optimal values that are more than $1$ apart. This does not occur for other $n \le 10000$. $c'$ indeed drops from $35$ to $33$ at $n = 99$, because the values of $c'$ at $n = 98$ and $n = 100$ are $35$ and $33$ respectively.

So the value $c'=35$ at $n = 99$ can be seen as the continuation of the value $c'=35$ at $n = 98$. The value $c'=33$ at $n = 99$ does not come entirely out of the blue, because $c = 32$ is a local optimal choice for $r(C^c_n)$ at $n = 98$. The value $c'=33$ at $n = 99$ continues as the value $c'=33$ at $n = 100$. $c = 36$ is a local optimal choice for $r(C^c_n)$ at $n = 100$, so it can be seen as the continuation of the value $c'=35$ at $n = 99$.

\begin{figure}[ht]
	\colorlet{goptc}{Twins}
	\colorlet{loptc}{Michiel}
	\centering{
		\begin{tikzpicture}[x=0.7mm,y=0.7mm,radius=0.4mm,very thick]
		\begin{scope}[shift={(0,675)},very thin]
		\draw (1656,0) -- (1813,0);
		\draw (1709,-1) -- (1709,1)  (1738,-1) -- (1738,1) (1737,-1) -- (1737,1) (1768,-1) -- (1768,1);
		\draw[anchor=north,inner sep=0pt]  (1709,-2) node {1709};
		\draw[anchor=north west,inner sep=0pt]  (1738,-2) node {1738};
		\draw[anchor=north east,inner sep=0pt]  (1737,-2) node {1737};
		\draw[anchor=north,inner sep=0pt]  (1768,-2) node {1768};
		\draw(1800,-4) node {$n$};
		\end{scope}
		\begin{scope}[shift={(1656,0)},very thin]
		\draw (0,675) -- (0,747);
		\draw (-1,683) -- (1,683)  (-1,696) -- (1,696)  (-1,730) -- (1,730)  (-1,744) -- (1,744);
		\draw[anchor=east,inner sep=0pt] (-2,683) node {683};
		\draw[anchor=east,inner sep=0pt] (-2,696) node {696};
		\draw[anchor=east,inner sep=0pt] (-2,730) node {730};
		\draw[anchor=east,inner sep=0pt] (-2,744) node {744};
		\draw(-4,713) node {$c$};
		\end{scope}
		\draw[loptc] (1768,744) -- (1767,743) -- (1766,743) -- (1765,742) -- (1764,742) -- (1763,741) -- (1762,741) -- (1761,740) -- (1760,740) -- (1759,740) -- (1758,739) -- (1757,739) -- (1756,738) -- (1755,738) -- (1754,737) -- (1753,737) -- (1752,736) -- (1751,736) -- (1750,736) -- (1749,735) -- (1748,735) -- (1747,734) -- (1746,734) -- (1745,733) -- (1744,733) -- (1743,732) -- (1742,732) -- (1741,732) -- (1741,731) -- (1740,731) -- (1739,731) -- (1738,730) -- (1737,730);
		\fill[loptc] (1768,744) circle[];
		\draw[goptc] (1737,730) -- (1736,729) -- (1735,729) -- (1734,728) -- (1733,728) -- (1732,727) -- (1731,727) -- (1730,727) -- (1729,726) -- (1728,726) -- (1727,725) -- (1726,725) -- (1725,724) -- (1724,724) -- (1723,723) -- (1722,723) -- (1721,723) -- (1721,722) -- (1720,722) -- (1719,722) -- (1718,721) -- (1717,721) -- (1716,720) -- (1715,720) -- (1714,719) -- (1713,719) -- (1712,718) -- (1711,718) -- (1710,718) -- (1709,717) -- (1708,717) -- (1707,716) -- (1706,716) -- (1705,715) -- (1704,715) -- (1703,714) -- (1702,714) -- (1701,714) -- (1701,713) -- (1700,713) -- (1699,713) -- (1698,712) -- (1697,712) -- (1696,711) -- (1695,711) -- (1694,710) -- (1693,710) -- (1692,709) -- (1691,709) -- (1690,709) -- (1689,708) -- (1688,708) -- (1687,707) -- (1686,707) -- (1685,706) -- (1684,706) -- (1683,705) -- (1682,705) -- (1681,705) -- (1680,704) -- (1679,704) -- (1678,703) -- (1677,703) -- (1676,702) -- (1675,702) -- (1674,701) -- (1673,701) -- (1672,701) -- (1672,700) -- (1671,700) -- (1670,700) -- (1669,699) -- (1668,699) -- (1667,698) -- (1666,698) -- (1665,697) -- (1664,697);
		\fill[goptc] (1737,730) circle[];
		\draw[loptc] (1709,683) -- (1710,684) -- (1711,684) -- (1712,685) -- (1713,685) -- (1714,686) -- (1715,686) -- (1716,686) -- (1717,687) -- (1718,687) -- (1719,688) -- (1720,688) -- (1721,689) -- (1722,689) -- (1723,690) -- (1724,690) -- (1725,691) -- (1726,691) -- (1727,691) -- (1728,692) -- (1729,692) -- (1730,693) -- (1731,693) -- (1732,694) -- (1733,694) -- (1734,695) -- (1735,695) -- (1736,695) -- (1736,696) -- (1737,696) -- (1738,696);
		\fill[loptc] (1709,683) circle[];
		\draw[goptc] (1738,696) -- (1739,697) -- (1740,697) -- (1741,698) -- (1742,698) -- (1743,699) -- (1744,699) -- (1745,700) -- (1746,700) -- (1747,700) -- (1747,701) -- (1748,701) -- (1749,701) -- (1750,702) -- (1751,702) -- (1752,703) -- (1753,703) -- (1754,704) -- (1755,704) -- (1756,705) -- (1757,705) -- (1758,705) -- (1758,706) -- (1759,706) -- (1760,706) -- (1761,707) -- (1762,707) -- (1763,708) -- (1764,708) -- (1765,709) -- (1766,709) -- (1767,710) -- (1768,710) -- (1769,710) -- (1769,711) -- (1770,711) -- (1771,711) -- (1772,712) -- (1773,712) -- (1774,713) -- (1775,713) -- (1776,714) -- (1777,714) -- (1778,715) -- (1779,715) -- (1780,715) -- (1780,716) -- (1781,716) -- (1782,716) -- (1783,717) -- (1784,717) -- (1785,718) -- (1786,718) -- (1787,719) -- (1788,719) -- (1789,720) -- (1790,720) -- (1791,720) -- (1791,721) -- (1792,721) -- (1793,721) -- (1794,722) -- (1795,722) -- (1796,723) -- (1797,723) -- (1798,724) -- (1799,724) -- (1800,725) -- (1801,725) -- (1802,725) -- (1802,726) -- (1803,726) -- (1804,726) -- (1805,727) -- (1806,727) -- (1807,728) -- (1808,728) -- (1809,729) -- (1810,729) -- (1811,730) -- (1812,730) -- (1813,730);
		\fill[goptc] (1738,696) circle[];
		\end{tikzpicture}
	}
	\caption{For $n = 1664,1665,\ldots,1813$, the optimal values of $c$ are drawn in dark green. Other local optimal values of $c$ are drawn in amber. The graph displays $2$ tracks of local optimal values of $c$. The high-valued track ends at $n = 1768$. The low-valued track begins at $n = 1709$. The optimal value of $c$ for $n=1738$ is $34$ lower than the optimal value of $c$ for $n=1737$.} \label{fig:loptc}
\end{figure}
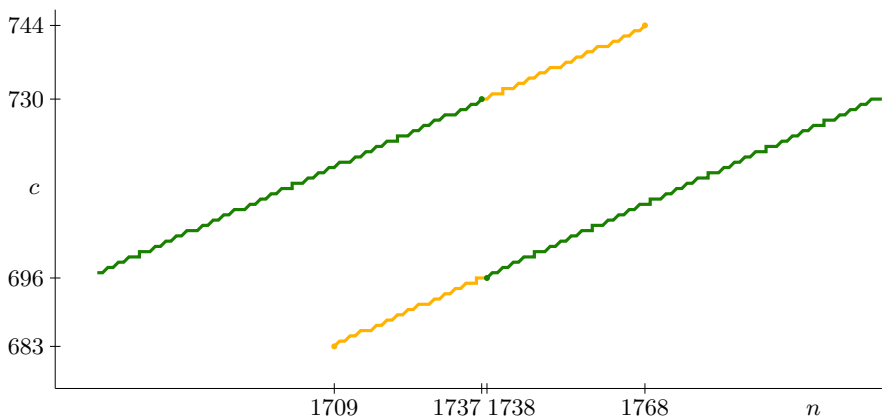
For larger $n$, there are intervals $[n_1,n_2]$ in the range of values of $n$, on which there are $2$ tracks of local optimal values of $c$. For $n = n_1$, the track with the highest value of $c$ gives the optimal value. This track is also the continuation of the optimal values of $c$ for $n < n_1$: the other track appears out of the blue. For $n = n_2$, the track with the lowest value of $c$ gives the optimal value. This track also continues with optimal values of $c$ for $n > n_2$: the other track disappears into thin air.  Figure \ref{fig:loptc} illustrates this phenomenon for $n_1 = 1709$ and $n_2 = 1768$. Somewhere between $n_1$ and $n_2$, the optimal value of $c$ switches from the high-valued track to the low-valued track and drops. 

The table on the next page indicates these drops. The table indicates the tracks of local optimal values of $c$ as follows. The first track starts at $n = 2$ and ends at $n = 47$, and is optimal all the time. The $(i+1)$\textsuperscript{th} track starts at $n=n_1$ in the $i$\textsuperscript{th} row and last but one column, and becomes optimal at the value of $n$ in the $i$\textsuperscript{th} row and last column.  The $(i+1)$\textsuperscript{th} track is optimal for the last time at  the value of $n$ in the $(i+1)$\textsuperscript{th} row and first column, and the track ends at $n=n_2$ in the $(i+1)$\textsuperscript{th} row and second column.
\begin{center}
	\begin{tikzpicture}[x=4.5mm,y=-5mm]
	\fill[tableIndex] (-11,0) rectangle (-9,8) (9,0) rectangle (11,8);
	\fill[tableBold] (-3,0) rectangle (3,8);
	\fill[tableFrame] (-11,-1.1) rectangle (11,0);
	\fill[tableVoid] (-1,-1.1) rectangle (1,0);
	\node at (0,-0.5) {\bf drop\strut};
	\begin{scope}[white,shift={(0,-0.5)}]
	\node at (-2,0) {{\mathversion{bold}$c'\mathstrut$}};
	\node at (2,0) {{\mathversion{bold}$c'\mathstrut$}};
	\node at (-5,-0.05) {{\mathversion{bold}$r(C^{c'}_n)$}};
	\node at (5,-0.05) {{\mathversion{bold}$r(C^{c'}_n)$}};
	\node at (-8,0) {{\mathversion{bold}$n_2\mathstrut$}};
	\node at (8,0) {{\mathversion{bold}$n_1\mathstrut$}};
	\node at (-10,0) {{\mathversion{bold}$n\mathstrut$}};
	\node at (10,0) {{\mathversion{bold}$n\mathstrut$}};
	\end{scope}
	\draw[white] \foreach \x in {-9,-7,-3,3,7,9} {(\x,-1.05) -- (\x,0)};
	\draw[tableFrame,very thick] \foreach \x in {-9,-7,-3,3,7,9} {(\x,0) -- (\x,8)};
	\draw[tableFrame,very thick] (-1,-1.1) -- (-1,8) (1,-1.1) -- (1,8);
	\foreach[count=\y] \na/\ba/\ca/\ra/\nb/\ab/\cb/\rb/\g in {47/47/15/3331/48/48/14/3490/1, 99/100/35/17323/99/98/33/17323/2, 204/207/78/84024/205/202/73/84936/5, 418/426/166/396403/419/412/157/398437/9, 854/869/350/1836388/855/840/333/1841006/17, 1737/1768/730/8347386/1738/1709/696/8357520/34, 3524/3583/1508/37445730/3525/3468/1444/37468248/64, 7132/7246/3097/166023725/7133/7024/2977/166072093/120} {
		\draw[tableFrame,very thick] (-11,\y-1) -- (11,\y-1);
		\node at (0,\y-0.44) {\bf\g};
		\node at (-2,\y-0.44) {\ca};
		\node at (+2,\y-0.44) {\cb};
		\node at (-5,\y-0.44) {\ra};
		\node at (+5,\y-0.44) {\rb};
		\node at (-8,\y-0.44) {\ba};
		\node at (+8,\y-0.44) {\ab};
		\node at (-10,\y-0.44) {\na};
		\node at (+10,\y-0.44) {\nb};
	}
	\draw[tableFrame,very thick] (-11,-1.1) rectangle (11,8);
	\end{tikzpicture}
\end{center}

Observe that (local) optimal values of $c$ can be double: $c + 1$ can also be a (local) optimal value. This is the case for $n = 13$, where both $c =2$ and $c = 3$ are optimal (see also Figure~\ref{fig:C13_2&3}). We did not find triple (local) optimal  values, though. For $n = 3512$, both the optimal value of $c$ and the other local optimal value of $c$ are double: 
\begin{align*}
r(C^{1438}_{3512}) &= 37170635 = r(C^{1439}_{3512}) &
r(C^{1502}_{3512}) &= 37180596 = r(C^{1503}_{3512}).
\end{align*}
This is the only value of $n \le 10000$ for which this occurs.
\fi

\section{An improvement of Martyugin's prime number construction of binary PFAs} \label{sec:Primes}

Although the \v{C}ern\'y family contains extremal binary PFAs for all $n\leq 10$, it only gives polynomial reset thresholds for large $n$. In this section and the next section, we show that for $n\geq 41$, the \v{C}ern\'y family is not extremal anymore. We do this by presenting a construction based on prime numbers.  

Our construction is an improvement of the binary prime number construction by Pavel Martyugin in \cite{Mar08}, see also \S 6 of \cite{B18}. Martyugin's construction uses $1+2\sum_{i=1}^r p_i$ states and has reset threshold $2\prod_{i=1}^r p_i$, where $p_1,\ldots,p_r$ are the first $r$ primes, which is improved to $2+2\prod_{i=1}^r p_i$ in \cite{B18}. Martyugin has a ternary construction as well, with $1+\sum_{i=1}^r p_i$ states and reset threshold $1+\prod_{i=1}^r p_i$.

Let the binary PFA with $1+2\sum_{i=1}^r p_i$ states and reset threshold $2+2\prod_{i=1}^r p_i$ be called $M^r$. For instance, taking $r=5$ gives $p_1,\ldots,p_5 = 2,3,5,7,11$ and yields a PFA $M^5$ with $n=57$ states and reset threshold 4622. This is not yet sufficient to overtake the \v{Cern\'y} family, since $r(C_{57}^{18}) = 5152 > 4622$. 

Martyugin's construction can be generalized in a very easy manner, because we do not need to restrict ourselves to the use of the first $r$ primes. One can use any list $\mathbf{p}= (p_1,p_2,\ldots,p_r)$ of relatively prime numbers, and the construction still works. Denote the corresponding PFA as $M^{\mathbf{p}}$ and let $q:= q(\mathbf{p}) = \prod_{i=1}^rp_i$. Compared to $M^r$, $M^{\mathbf{p}}$ offers more flexibility in the choice of numbers, leading to constructions for more state sets. But larger reset thresholds are possible as well. For instance, if $\mathbf{p} = (2,3,5,7,11,13,17,19,23)$, then $M^9 = M^{\mathbf{p}}$ has $201$ states, and reset threshold $2+2q = 446185742$. But if $\mathbf{p} = (5,7,9,11,13,16,17,19)$, then $M^{\mathbf{p}}$ only has $195$ states, and reset threshold $2+2q = 465585122$.

We present a further improvement, based on a list $\mathbf{p}= (p_1,p_2,\ldots,p_r)$ of relatively prime numbers as well. The construction of our binary PFA $P^{\bf p}$ is illustrated in Figure \ref{prime41} for $r = 4$ and $p_1 = 5$, $p_2 = 7$, $p_3 = 8$ and $p_4 = 9$.
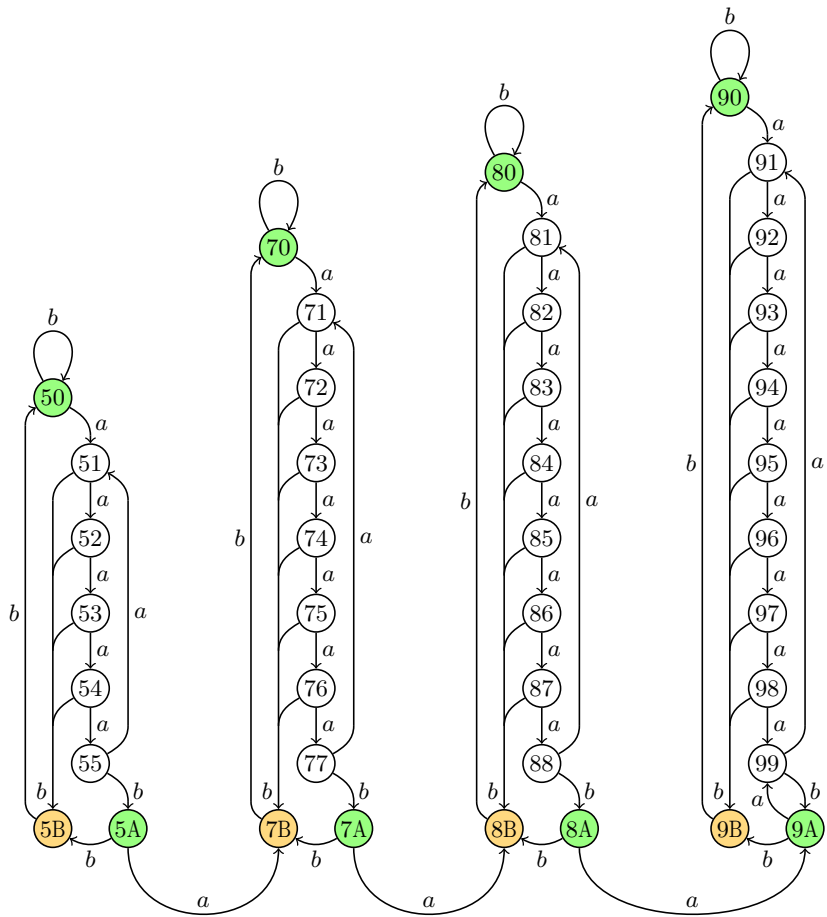
\begin{figure}[!h!t]
\centering{
\begin{tikzpicture}[semithick]
\pgfmathsetmacro{\groupdist}{2}
\foreach \p [count=\n] in {5, 7, 8, 9} {
\begin{scope}[shift={(\n+\groupdist*\n,0)}]
% draw cycle states
\foreach \i in {1,...,\p} {
	\node[mynode,draw,inner sep=0pt] (\p\i) at (0,\p-\i) {\small\p\i};
}
% draw other states
\node[mynode,draw,fill=CernyColC] (\p T) at (-60:1) {\small\p\scalebox{0.8}[1]{A}};
\node[mynode,draw,fill=CernyColB] (\p L) at (-120:1) {\small\p\scalebox{0.8}[1]{B}};
\node[mynode,draw,fill=CernyColC,shift={(120:1)}] (\p S) at (\p 1) {\small\p{0}};
% draw arrows from cycle states except last
\pgfmathsetmacro{\pmm}{\p-1} 
\foreach \i in {1,...,\pmm} {
	\node[mynode,shift={(0,-1)}] (tipp) at (\p\i) {};
      \draw[->] (\p\i) -- (tipp);
      \node[anchor=west,shift={(0,-0.5)},inner sep=2pt] at (\p\i) {$a$};
	\draw (\p\i) edge[out=-150,in=90] ++(-0.5,-0.5);
}
\draw[->] (\p 1) + (-0.5,-0.5) -- (\p L);
\node[anchor=east,shift={(0,0.5)},inner sep=2pt] at (\p L) {$b$};
% draw arrows from last cycle state
\draw (\p\p) edge[out=30,in=-90] ++(0.5,0.5) ++(0.5,0.5) edge ++(0,\p-2);
\draw (\p 1) + (0.5,-0.5) edge[->,out=90,in=-30] (\p 1);
\node[anchor=west,inner sep=2pt] at (0.5,0.5*\p-0.5) {$a$};
\draw (\p\p) edge[out=-30,in=90] ++(0.5,-0.5);
\draw[->] (\p\p) + (0.5,-0.5) -- (\p T);
\node[anchor=west,shift={(0,0.5)},inner sep=1.5pt] at (\p T) {$b$};
% draw b arrows from transfer state and lock state
\draw (\p T) edge[->,out=-150,in=-30] node[below,inner sep=2pt] {$b$} (\p L);
\draw (\p L) edge[out=150,in=-90] ++(-0.366,0.366) ++(-0.366,0.366) edge ++(0,\p);
\draw (\p S) + (-0.366,-0.366) edge[->,out=90,in=-150] (\p S);
\node[anchor=east,inner sep=2pt] at (-0.866,0.5*\p-0.5) {$b$};
% draw arrows from start state
\draw (\p S) edge[out=-30,in=90] ++(0.5,-0.5);
\draw[->] (\p S) + (0.5,-0.5) -- (\p 1);
\node[anchor=west,shift={(0,0.5)},inner sep=1.5pt] at (\p 1) {$a$};
\draw (\p S) edge[->,out=120,in=60,looseness=10] node[above,inner sep=2pt] {$b$} (\p S);
\end{scope}
}
% draw a arrows from transfer states
\pgfmathsetmacro{\lons}{(\groupdist+1)/\groupdist}
\draw (5T) edge [->,out=-90,in=-90,looseness=\lons] node[above,inner sep=2.5pt] {$a$} (7L);
\draw (7T) edge [->,out=-90,in=-90,looseness=\lons] node[above,inner sep=2.5pt] {$a$}(8L);
\draw (8T) edge [->,out=-90,in=-90,looseness=1] node[above,inner sep=2.5pt] {$a$} (9T);
\draw (9T) edge[out=150,in=-90] ++(-0.5,0.5);
\draw[->] (9T) + (-0.5,0.5) -- (99);
\node[anchor=east,shift={(0,-0.5)},inner sep=1pt] at (99) {$a$};
\end{tikzpicture}
}
\caption{A binary PFA with $41$ states which takes $3114$ steps to synchronize, which is more than any PFA of the \v{C}ern\'y family with $41$ states.} \label{prime41}
\end{figure}

The state set $Q$ of the PFA $P^{\mathbf p}$ consists of $r$ groups. The $i$\textsuperscript{th} group contains $p_i + 3$ states, namely
$$
p_i 0,p_i 1,p_i 2,\ldots,p_i p_i,p_i\mathrm{A},p_i\mathrm{B}
$$
The transitions are indicated in Figure \ref{prime41}. Groups $i$ and $i+1$ are connected by a transition of symbol $a$ for each $i$. Notice that the connection for $i = r-1$ differs from those for $i < r-1$. This makes the reset threshold a little larger.

Symbol $a$ is undefined on state $p_i\mathrm{B}$ for each $i$. For that reason, any synchronizing word starts with $b^3$. The word $b^3$ resets group $i$ to state $p_i 0$ for each $i$, so $Qb^3 = \{p_1 0,p_2 0, \ldots, p_r 0\}$. Furthermore, $Q b^3 a^q = \{p_1 p_1,p_2 p_2, \ldots, p_r p_r\}$ and $Q b^3 a^q b = \{p_1\mathrm{A},p_2\mathrm{A}, \ldots, p_r\mathrm{A}\}$. We infer that $Q b^3 a^q b a$ is defined and that it does not contain states of group $1$. By induction on $r$, it follows that the PFA $P^{\mathbf{p}}$ is synchronizing.

Suppose that $w$ is a shortest synchronizing word which starts with $b^3 a^j b$ for some $j < q$. If $j = 0$, then $Qb^3 a^j b = Qb^3$, which contradicts the minimality of $|w|$. So $j > 0$. Since $0 < j < q$, it follows that there exists an $i$ such that $(p_i 0) a^j \ne (p_i p_i)$. Hence $(p_i 0) a^j b = (p_i \mathrm{B})$. From this, one can infer that either $Qb^3 a^j b^2 = Qb^3$ or $Qb^3 a^j b^3 = Qb^3$, and that $|w|$ is not minimal, which is a contradiction.
So $w$ starts with $b^3 a^q$, and $|w| = \Omega(q)$. It is not hard to see that $|w| = O(q)$, so $|w| = \Theta(q)$. 

It is a little harder to find the exact reset threshold given in the next proposition. 

\begin{proposition}\label{prop:primereset} If the PFA $P^{\mathbf{p}}$ is constructed with relative primes $p_1,\ldots,p_r$, then it has $3r+\sum_{i=1}^r p_i$ states and reset threshold
	$$
	r(P^{\mathbf{p}}) = 5r - 2+ \sum_{i=1}^{r-1} %\prod_{j=i}^r p_j.
	p_i p_{i+1} \cdots p_r .
	$$
\end{proposition}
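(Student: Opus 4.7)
My plan is to verify the state count directly and then prove the reset threshold formula by matching upper and lower bounds, pivoting around the explicit synchronizing word
$$
w_{\mathrm{opt}} = b^3 \prod_{i=1}^{r-1} \bigl( a^{q_i}\, b\, a\, b^3 \bigr), \qquad q_i := p_i p_{i+1} \cdots p_r.
$$
The state count is immediate: group $i$ contributes the $p_i + 3$ states $p_i 0, p_i 1, \ldots, p_i p_i, p_i \mathrm{A}, p_i \mathrm{B}$, summing to $3r + \sum_{i=1}^r p_i$.

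For the upper bound I would trace $w_{\mathrm{opt}}$ through the PFA and check inductively that after the initial $b^3$ the state set is $\{p_j 0 : 1 \le j \le r\}$, and that applying the $i$-th block $a^{q_i} b a b^3$ to $\{p_j 0 : i \le j \le r\}$ yields $\{p_j 0 : i+1 \le j \le r\}$. The crucial sub-steps are: $p_j \mid q_i$ for every surviving $j$, so $a^{q_i}$ sends each $p_j 0$ to $p_j p_j$; the $b$ makes all states $p_j \mathrm{A}$; the next $a$ produces $\{p_{j+1} \mathrm{B} : i \le j \le r-2\} \cup \{p_r \mathrm{A}, p_r p_r\}$ via the three distinct transitions on $\mathrm{A}$-states; and the final $b^3$ collapses the $\mathrm{B}$-states to $0$-states and the residue $\{p_r \mathrm{A}, p_r p_r\}$ in group $r$ to $\{p_r 0\}$. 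After the last block ($i = r-1$) the state set is the singleton $\{p_r 0\}$. The length is $3 + \sum_{i=1}^{r-1}(q_i + 5) = 5r - 2 + \sum_{i=1}^{r-1} q_i$, which is the claimed value.

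For the lower bound I would induct on $r$. The base case $r = 1$ is a small verification: $b^3$ is the unique shortest synchronizing word since $a$ is undefined on $p_1 \mathrm{B}$ and $Q b^k$ still contains $p_1 \mathrm{B}$ for $k \le 2$. For the inductive step, the excerpt already argues that every shortest synchronizing word begins with $b^3 a^{q_1}$, at which point the state set is $\{p_i p_i : i\}$. I would extend this case analysis letter by letter: after $b^3 a^{q_1}$ the next symbol must be $b$ (another $a$ cycles past $p_i p_i$ and costs $q_1$ additional $a$'s to return); after $b^3 a^{q_1} b$ the next symbol must be $a$ (another $b$ spirals back to $\{p_i 0\}$, wasting $q_1 + 3$ letters); after $b^3 a^{q_1} b a$ exactly three more $b$'s are needed before any $a$ is defined, because the pair $\{p_r \mathrm{A}, p_r p_r\}$ in group $r$ must be collapsed down through $\{p_r \mathrm{B}, p_r \mathrm{A}\}$ and $\{p_r 0, p_r \mathrm{B}\}$ before landing on $\{p_r 0\}$. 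This fixes the optimal prefix as $b^3 a^{q_1} b a b^3$ of length $q_1 + 8$, landing at $\{p_j 0 : 2 \le j \le r\}$. The restriction of $P^{\mathbf p}$ to groups $2, \ldots, r$ is isomorphic to $P^{(p_2,\ldots,p_r)}$, and the remaining task is to synchronize it starting from its $\{p_j 0\}$-state, which by the same ``prefix-$b^3$ is forced'' argument costs exactly $r(P^{(p_2,\ldots,p_r)}) - 3$. Invoking the inductive hypothesis and adding the two contributions gives the claimed lower bound.

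The main obstacle will be the forced-prefix analysis in the inductive step, in particular ruling out clever alternatives that try to interleave the eliminations of several groups by padding the $a$-run to a higher common multiple, or by postponing the $b^3$ clean-up in the hope of merging it with a later one. I would handle each such alternative by checking that the altered prefix either leaves some pawn at a state $p_i \mathrm{B}$ on which the next $a$ is undefined, forcing a full reset to $\{p_i 0\}$ and wasting at least the $q_1$ (or more) savings attempted, or else is strictly equivalent to a prefix that starts with $a^{k q_1}$ for some $k \ge 2$ and is therefore longer than the optimum by $(k-1) q_1$. The careful verification of the final $b^3$ being irreducible (rather than $b^2$ or $b$) is a direct trace that I have already sketched above.
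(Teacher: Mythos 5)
The paper leaves this proof explicitly as ``an exercise to the interested reader,'' so there is no written proof to compare against, only the hint that it is by induction on $r$; your plan is precisely that induction, with matching upper and lower bounds, so the approach aligns with what the authors intended.

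Your explicit word $w_{\mathrm{opt}} = b^3\prod_{i=1}^{r-1}\bigl(a^{q_i}\,b\,a\,b^3\bigr)$ is correct, the block-by-block trace of the state set is accurate (including the three distinct $a$-transitions from $A$-states, which depend on whether the group index is $< r-1$, $= r-1$, or $= r$), the length computes to $5r-2+\sum_{i=1}^{r-1} q_i$, and the base case $r=1$ with threshold $3$ checks out. Your lower-bound skeleton (forced prefix $b^3 a^{q_1} b a b^3$, then restrict to the subautomaton on groups $2,\ldots,r$ which is isomorphic to $P^{(p_2,\ldots,p_r)}$, and apply the inductive hypothesis minus the initial $b^3$) is sound, and the arithmetic closes.

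One inaccuracy worth flagging: your justification ``after $b^3 a^{q_1} b a$ exactly three more $b$'s are needed before any $a$ is defined'' is false when $r=2$. In that case the state set after $b^3 a^{q_1} b a$ is $\{p_2\mathrm{A},\, p_2 p_2\}$, on which $a$ \emph{is} defined; the reason $b^3$ is still optimal there is not $a$-undefinedness but a short direct comparison of alternatives (applying $a$ first takes at least $4$ more letters to merge, e.g.\ $a\,a\,b^3$ or $a\,b^3$, versus $3$ for $b^3$). For $r\ge 3$ your argument works as stated because $p_2\mathrm{B}$ is present after the $a$, and after each subsequent $b$ the state $p_r\mathrm{B}$ blocks $a$. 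So the inductive step for $r=2$ needs its own two-line verification, which you have essentially done implicitly but not stated; once added, the plan is complete. Also note that ``another $b$ spirals back to $\{p_i 0\}$'' after $b^3 a^{q_1} b$ actually takes two $b$'s, not one, but the wastage count $q_1+3$ you give is still right.
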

The proof (by induction) is left as an exercise to the interested reader. 

Let $P^{r}$ be the PFA $P^{\mathbf{p}}$ with $\mathbf{p}$ the list of the first $r$ primes (in increasing order). The following proposition shows that $M^r$ is defeated by $P^{r+1}$ if $r \ge 6$, and that $M^{\mathbf{p}}$ is defeated if it has at least $62$ states.

%This proposition implies that the reset threshold of $P^{\mathbf{p}}$ is maximized by taking for $\mathbf{p}$ an increasing list. We therefore assume from now on that $\mathbf{p}$ is increasing. If $p$ is a number, we write $\mathbf{p}\cup (p)$ for the increasing list containing $p$ and all elements of $\mathbf{p}$. 

\begin{proposition} \mbox{}
	\begin{enumerate}[(i)]
		\item Suppose that $r\geq 6$. Then $P^{r+1}$ has fewer states and larger reset threshold than $M^r$.
		\item Suppose that $M^{\mathbf{p}}$ has at least 62 states. Then there exists a prime power $p\geq 5$  with the following properties: the list $\mathbf{p}$ can be extended with one element by inserting $p$ at any position to obtain $\mathbf{p'}$, and
$P^{\mathbf{p'}}$ has fewer states and larger reset threshold than $M^{\mathbf{p}}$.
		\end{enumerate}
\end{proposition}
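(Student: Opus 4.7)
Both parts ask for a new element $p$ such that the extended PFA $P^{\mathbf{p'}}$ has fewer states and larger reset threshold than $M^{\mathbf{p}}$. Using Proposition~\ref{prop:primereset} together with the state count $1+2\sigma$ and reset threshold $2+2q$ of Martyugin's PFA (where $\sigma=\sum_{i=1}^r p_i$ and $q=\prod_{i=1}^r p_i$), the state inequality reduces to $p < \sigma - 3r - 2$, while the reset inequality reduces to $5r+3 + \sum_{i=1}^{r}\prod_{j=i}^{r+1} p'_j > 2+2q$. Since the $i=1$ summand equals $pq$ regardless of where $p$ is inserted in $\mathbf{p'}$, the reset inequality is implied by $(p-2)q + 5r + 1 > 0$, which holds trivially for any $p\geq 3$. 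Hence in both parts only the state inequality is nontrivial.

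For part~(i) I take $p = p_{r+1}$ and set $D(r) := \sigma - 3r - 2 - p_{r+1}$. A direct computation gives $D(6) = 41 - 20 - 17 = 4 > 0$, and the increment
$$
D(r+1) - D(r) = 2 p_{r+1} - p_{r+2} - 3
$$
is nonnegative for all $r \geq 2$ (a short numerical check, combined with Bertrand's postulate, gives $p_{k+1} \leq 2p_k - 3$ for $k\geq 3$). Therefore $D(r) \geq D(6) > 0$ for every $r\geq 6$.

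For part~(ii), let $S$ be the set of primes dividing $q$. Since each $p_i$ is at least the sum of its distinct prime factors (a short induction from $\ell_1\ell_2 \geq \ell_1+\ell_2$ for primes $\ell_1,\ell_2\geq 2$) and the $p_i$ are pairwise coprime, one obtains $\sigma \geq \sum_{\ell\in S}\ell \geq \sum_{i=1}^{|S|} t_i$ (with $t_i$ the $i$th prime), and $r\leq |S|$. Let $\ell_0$ be the smallest prime not in $S$, and define $p := \ell_0$ if $\ell_0 \geq 5$, $p := 9$ if $\ell_0 = 3$, and $p := 8$ if $\ell_0 = 2$. Then $p$ is a prime power $\geq 5$, coprime to every $p_i$, with $p \leq \max(9, t_{|S|+1})$. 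The main obstacle is to verify $p < \sigma - 3r - 2$: for $|S| \leq 5$ the hypothesis $\sigma \geq 31$ dominates, giving $\sigma - 3r - 2 \geq 31 - 3|S| - 2 \geq 14 > 13 \geq p$; for $|S| \geq 6$ the stronger bound $\sigma \geq \sum_{i=1}^{|S|} t_i$ kicks in, and the inequality reduces to showing $\sum_{i=1}^{k} t_i - t_{k+1} - 3k - 3 \geq 0$ for all $k\geq 6$, which follows by induction from the base value $41-17-18-3=3$ and the same Bertrand estimate used in part~(i).
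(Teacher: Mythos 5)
Your proof is correct and takes essentially the same route as the paper's: the reset-threshold comparison reduces to a trivial bound coming from the leading term $pq$, the state-count inequality is the crux, part (ii) is split by the number of distinct prime factors, and both inductions rest on the sharp form $p_{k+1}\le 2p_k-3$ of Bertrand's postulate. The only (purely organizational) difference is that for the subcase of at least six distinct prime factors, the paper reuses part (i) by comparing state-count differences against $M^t$ and $P^{t+1}$, whereas you re-derive the equivalent inequality $\sum_{i=1}^{k} t_i - t_{k+1} - 3k - 3 \ge 0$ by a second induction; the underlying arithmetic is identical.
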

	
%
%\begin{proposition}
%Suppose that $r \ge 6$. Then $P^{r+1}$ has fewer states than $M^r$. Furthermore, if $M^\mathbf{p}$ has at least $70$ states, then $\mathbf{p}$ can be extended with a prime number to obtain $\mathbf{p'}$, and $P^{\mathbf{p'}}$ has fewer states than $M^{\mathbf{p}}$.
%\end{proposition}

\begin{proof}
Let $n(A)$ denote the number of states of a PFA $A$. 
First observe that $r(P^{r+1}) \geq p_{r+1}\cdot \prod_{i=1}^r p_i > 2+2\cdot \prod_{i=1}^r p_i = r(M^r)$, where $p_i$ is the $i$\textsuperscript{th} prime for each $i$. Similarly, $r(P^{\mathbf{p'}})\geq 5\cdot\prod_{p\in \mathbf{p}} p>r(M^{\textbf{p}})$.	So it remains to show that $n(P^{r+1}) < n(M^r)$ and $n(P^{\mathbf{p'}}) < n(M^{\mathbf{p}})$.
\begin{enumerate}[(i)]

\item 
We prove $n(P^{r+1}) < n(M^r)$ by induction on $r$. The PFA $M^6$ has $2\cdot(2+3+5+7+11+13)+1 = 83$ states, and $P^7$ has $2+3+5+7+11+13+17+7\cdot3 = 79$ states, so the case $r=6$ is satisfied. Suppose that $r \ge 7$ and that $P^r$ has fewer states than $M^{r-1}$. Then $M^r$ has $2p_r$ more states than $M^{r-1}$, and $P^{r+1}$ has $p_{r+1}+3$ more states than $P^r$. From Bertrand's postulate, it follows that $2p_r \ge p_{r+1}+3$, so $n(P^{r+1}) < n(M^r)$.
 
\item 
\ifijfcs
The proof of $n(P^{\mathbf{p'}}) < n(M^{\mathbf{p}})$ can be found in the arXiv version of this paper \cite{arXiv}.
\else
Let $r$ be the number of elements in $\mathbf{p}$ and let $t$ be the number of distinct prime factors in these elements. We prove $n(P^{\mathbf{p'}}) < n(M^{\mathbf{p}})$ by distinguishing the cases $t \le 5$ and $t \ge 6$.

Suppose first that $t \ge 6$. Since $r \le t$ and $\sum_{p\in\mathbf{p}} p$ is at least the sum of the first $t$ prime numbers, we infer that
$$
n\big(M^{\mathbf{p}}\big) - n\big(P^{\mathbf{p}}\big) = 1-3r+{\textstyle \sum_{p\in\mathbf{p}} p} \ge n\big(M^t\big) - n\big(P^t\big).
$$
Since the $(t+1)$\textsuperscript{th} prime exceeds both $8$ and $9$ (the $7$\textsuperscript{th} prime is $17$), we can construct $\mathbf{p'}$ by choosing a prime power $p\ge 5$ which is at most the $(t+1)$\textsuperscript{th} prime. So
$$
n\big(P^{\mathbf{p'}}\big) - n\big(P^{\mathbf{p}}\big) \le
n\big(P^{t+1}\big) - n\big(P^t\big).
$$
Consequently,
$$
n\big(M^{\mathbf{p}}\big) - n\big(P^{\mathbf{p'}}\big)  \ge
n\big(M^t\big) - n\big(P^{t+1}\big).
$$
So the case $t \ge 6$ follows from claim (i).

Suppose next that $t \le 5$. Assume that $n(M^\mathbf{p}) \ge 62$. Then $r \le t \le 5$ and
$$
n(P^\mathbf{p})  = \tfrac12 \big(n(M^\mathbf{p})-1\big)+3r < n(M^\mathbf{p}) -31+15.
$$
Construct $\mathbf{p'}$ by taking for $p$ the smallest power $\ge 5$ of the smallest prime number that does not yet occur as a factor in $\mathbf{p}$. As $t \le 5$, $p \in \{8,9,5,7,11,13\}$ follows. So $p + 3 \le 31 - 15$ and
$$
n\big(P^{\mathbf{p'}}\big) = n\big(P^{\mathbf{p}}\big) + p + 3 < n(M^\mathbf{p}),
$$
which completes the proof of claim (ii).
\fi
\qed
\end{enumerate}
\end{proof}

Just like for $M^{\mathbf{p}}$ and $M^r$, $P^{\mathbf{p}}$ offers more flexibility and sometimes a better reset threshold than $P^r$. For instance, $P^4$ has $29$ states and $r(P^4) = 368$, and $P^5$ has $43$ states and $r(P^5) = 3950$. So $P^r$ does not exist with $41$ states, and adding additional states to $P^4$ to fix that yields a poor result. 
But if $\mathbf{p} = (5,7,8,9)$, then $P^{\mathbf{p}}$ has 41 states as well, and $r(P^{\mathbf{p}}) = 3114$, see Figure \ref{prime41}. The following is an example where $P^{\mathbf{p}}$ has a better reset threshold than $P^r$. If $\mathbf{p} = (2,3,5,7,11,13,17)$, then $P^7 = P^{\mathbf{p}}$ has $79$ states, and $r(P^7) = r(P^{\mathbf{p}}) = 870552$. But if $\mathbf{p} = (5,7,9,11,13,16)$, then $P^{\mathbf{p}}$ has the same number of states, and $r(P^{\mathbf{p}}) = 887980$. The order of the primes in $\mathbf{p}$ is relevant to obtain a value which exceeds $870552$.

The construction $P^{\mathbf{p}}$ (in particular $P^r$) is not transitive, but we can make it transitive with the following modification. We change $(p_i 0) a$ for all $i$ from $(p_i 0) a = (p_i 1)$ to
\begin{align*}
(p_1 0) a &= (p_r 1) & (p_i 0) a &= (p_{i-1} 1) \qquad (2 \le i \le r)
\end{align*}
One can verify that this modification gives a transitive PFA $\tilde P^{\mathbf{p}}$, which still satisfies $r(\tilde P^{\mathbf{p}}) = \Theta(q)$. It is not hard to prove that $r(\tilde P^{\mathbf{p}}) \geq q+q/{\max\{p_1,p_2,\ldots,p_r\}}$.

If we want to have extra states as well (to obtain a specific $n$), then we add them in such a way that transitivity is preserved. We add such states between state $p_r \mathrm{B}$ and state $p_r 0$, as follows.
\begin{center}
\begin{tikzpicture}[x=12mm,y=12mm,semithick]
\node[mynode,minimum width=6mm,draw,fill=CernyColB] (9L) at (0,-0.366) {\small $p_r$\scalebox{0.8}[1]{B}};
\node[mynode,minimum width=6mm,draw] (E1) at (1,0) {};
\node[mynode,minimum width=6mm,draw] (E2) at (2,0) {};
\node[mynode,minimum width=6mm] (E3) at (3,0) {$\cdots$};
\node[mynode,minimum width=6mm,draw] (E4) at (4,0) {};
\node[mynode,minimum width=6mm,draw,fill=CernyColC] (9S) at (5,-0.366) {\small $p_r$0};
\draw (9L) edge[out=60,in=-180] ++(0.366,0.366) ++(0.366,0.366) edge[->] (E1);
\draw[->] (E1) edge (E2) (E2) edge (E3) (E3) edge (E4);
\draw (E4) -- ++ (0.634,0) edge[->,out=0,in=120] (9S);
\draw foreach \n in {0,...,4} {(\n+0.5,0) node[anchor=south,inner sep=1pt] {$b$}};
\end{tikzpicture}
\end{center}
So symbol $a$ is undefined on the extra states. Adding additional states this way makes the reset threshold larger, but only marginally. So $|w| = \Theta(q)$ is not affected. 

We can use $|w| = \Theta(q)$ to derive an asymptotic estimate for both Martyugin's  construction and our construction. This has already been done for $M^r$ and its ternary variant in \cite{B18}, and $P^r$ can be settled in a similar manner. But the estimate in \cite{B18} does not distinguish between $M^r$ and $P^r$. The estimates for $r(M^r)$, its ternary variant, and $r(P^r)$, as given in \cite{B18}, are
\[
\exp\big({\Theta(1)\cdot\sqrt{n\cdot\ln(n)}}\big).
\]
%$$
%2^{^{\textstyle\Theta\big(\sqrt{n \cdot {\log(n)}}\big)}} 
%= n^{^{\textstyle\Theta\big(\sqrt{n / {\log(n)}}\big)}}.
%$$
The upper bound is valid for $r(M^{\mathbf{p}})$ and $r(P^{\mathbf{p}})$ as well. For the lower bound, an extra condition is required, since one can make very poor constructions, typically with $r = 1$, where $r$ is the number of elements of $\mathbf{p}$. A condition which suffices is $n = O\big(r^2 \log(r)\big)$. %This is because in \cite{B18}, the condition that $p_i$ is the $i$\textsuperscript{th} prime for each $i$ yields $n = O\big(r^2 \log(r)\big)$, and is applied only for that purpose. 
Notice that the $r$\textsuperscript{th} prime is $\Theta\big(r\log(r)\big)$. So $n = O\big(r^2 \log(r)\big)$ holds if $p_i$ is at most proportional to the $r$\textsuperscript{th} prime for each $i$. 

The above estimates were derived to show that Martyugin's construction was strictly between polynomial and exponential. But to compare Martyugin's construction and our construction, we need better estimates. To derive such estimates, we will use some classical results of number theory. After that, one can easily conclude that the reset threshold of our construction is better than Martyugin's construction, by a factor which lies strictly between polynomial and exponential.

%\begin{proposition}
%If $2(p_1 + p_2 + \cdots + p_r) + 1 = \big(1-o(1)\big)n$, and $n \ge 2 \sum \mathbf{p} + 1$, then
%$$
%r(M_n^{r}) = n^{^{\textstyle \sqrt{1/2}\, \big(1 \pm o(1)\big) \sqrt{n / \ln(n)}}} =  e^{^{\textstyle \sqrt{1/2}\, \big(1 \pm o(1)\big) \sqrt{n \cdot \ln(n)}}} \ge r(M_n^{\mathbf{p}}).
%$$
%If $p_1 + p_2 + \cdots + p_r + 3r = \big(1-o(1)\big)n$,  and $n \ge \sum \mathbf{p} + 3 r'$, where $r'$ is the number of elements of $\mathbf{p}$, then
%$$
%r(P_n^{r}) = n^{^{\textstyle \big(1 \pm o(1)\big) \sqrt{n / \ln(n)}}} =  e^{^{\textstyle \big(1 \pm o(1)\big) \sqrt{n \cdot \ln(n)}}}\ge r(P_n^{\mathbf{p}}).
%$$
%\end{proposition}

\begin{theorem}Let $n\in \mathbb{N}$ and suppose that $r$ is maximal such that $n(M^r)\leq n$. Let $\mathbf{p}$ be such that $n(M^{\mathbf{p}})\leq n$. Then
\[
	r(M^r) = \exp\big({(1\pm o(1))\sqrt{n\cdot\ln(n)/2}}\big),\qquad r(M^{\mathbf{p}}) \leq \exp\big({(1+ o(1))\sqrt{n\cdot\ln(n)/2}}\big).
\]
Let $n\in \mathbb{N}$ and suppose that $r$ is maximal such that $n(P^r)\leq n$. Let $\mathbf{p}$ be such that $n(P^{\mathbf{p}})\leq n$. Then
\[
r(P^r) = \exp\big({(1\pm o(1))\sqrt{n\cdot\ln(n)}}\big),\qquad r(P^{\mathbf{p}}) \leq \exp\big({(1+ o(1))\sqrt{n\cdot\ln(n)}}\big).
\]
\end{theorem}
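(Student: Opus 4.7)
The plan is to reduce each of the four estimates to an explicit computation with the primorial $\prod p_i$ and the prime sum $\sum p_i$, and then invoke two classical forms of the Prime Number Theorem: $\theta(x) := \sum_{p\le x}\ln p \sim x$ and $\sum_{p\le x} p \sim x^2/(2\ln x)$. All four statements share the same skeleton; only the coefficients relating the state count to the prime sum change between the $M$-constructions and the $P$-constructions.

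First I would handle $M^r$ and $P^r$ directly. By the explicit formulas, $\ln r(M^r) = \theta(p_r) + O(1) \sim p_r$ and $n(M^r) = 2\sum_{i\le r}p_i + O(1) \sim p_r^2/\ln p_r$; inverting the latter yields $\ln p_r \sim \tfrac12 \ln n$ and $p_r \sim \sqrt{n\ln n/2}$, which gives the $M^r$ asymptotic. For $P^r$ the computation is almost identical: $n(P^r) = 3r + \sum_{i\le r} p_i \sim \tfrac12 p_r^2/\ln p_r$ (the term $3r = O(p_r/\ln p_r)$ is negligible), and $r(P^r) = \Theta(q)$ because $\sum_{i=1}^{r-1}p_i\cdots p_r \le 2q$ by a geometric-series bound (using $p_i\ge 2$), so $\ln r(P^r) \sim p_r \sim \sqrt{n\ln n}$. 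The extra factor of $\sqrt{2}$ compared to $M^r$ reflects that $P^r$ uses roughly half as many states per prime. Since $r$ is chosen maximal with $n(M^r)\le n$ (respectively $n(P^r)\le n$), the gap $n - n(M^r)$ is at most $2p_{r+1} = o(n)$, so the asymptotics pass from the subsequence $\{n(M^r)\}$ to all $n$.

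For the general constructions $M^{\mathbf{p}}$ and $P^{\mathbf{p}}$, the crux is the following combinatorial lemma: if $a_1,\ldots,a_r\ge2$ are pairwise coprime integers with $\sum a_i\le S$, then $\sum\ln a_i \le (1+o(1))\sqrt{S\ln S}$ as $S\to\infty$. I would prove this in three steps. First, decompose each $a_i$ into its maximal prime-power divisors $q_{i,j}^{e_{i,j}}$: because $xy \ge x+y$ for $x,y\ge2$, this preserves $\sum\ln a_i$ while not increasing $\sum a_i$, and the resulting prime bases are distinct by pairwise coprimality. Second, for each prime $q$ the efficiency $e\ln q/q^e$ of a prime power $q^e$ is maximized at $e=1$ over positive integers (tied with $e=2$ only for $q=2$), so one may further assume $e_{i,j}=1$, reducing to maximizing $\sum_{p\in T}\ln p$ over subsets $T$ of primes with $\sum_{p\in T} p \le S$. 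Third, splitting $T$ into $\{p\le y\}$ and $\{p>y\}$, using that $\ln p/p$ is decreasing for $p\ge e$, both pieces can be bounded via $\theta$, and optimizing in $y$ through $y^2/(2\ln y)\sim S$ gives $y\sim \sqrt{S\ln S}$ and $\sum_{p\in T}\ln p \le (1+o(1))\theta(y) \sim \sqrt{S\ln S}$.

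Applying this lemma with $S=(n-1)/2$ yields $\ln r(M^{\mathbf{p}}) \le (1+o(1))\sqrt{n\ln n/2}$, and applying it with $S\le n$ to $P^{\mathbf{p}}$ (after the same geometric-series observation that $\ln r(P^{\mathbf{p}}) = \sum\ln p_i + O(1)$) gives $\ln r(P^{\mathbf{p}}) \le (1+o(1))\sqrt{n\ln n}$. The main obstacle is the third step of the lemma: obtaining the sharp leading constant $1$ rather than some larger constant requires carefully separating the contributions of small and large primes in $T$ and a tight choice of the threshold $y$, rather than a crude bound like $|T| = O(\sqrt{S/\ln S})$ combined with $\ln(\max T)\le \ln S$, which would yield only a constant greater than $1$.
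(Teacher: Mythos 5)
Your handling of $M^r$ and $P^r$ matches the paper's: invert $n(P^r)\sim\tfrac12 p_r^2/\ln p_r$ via the Prime Number Theorem and use $\ln\prod_{i\le r}p_i=\theta(p_r)\sim p_r$. For $M^{\mathbf{p}}$ and $P^{\mathbf{p}}$, however, you take a genuinely different route via a coprime-knapsack lemma, and it has a gap exactly at the step you flag as the obstacle. After your steps 1--2 the task is to bound $\sum_{p\in T}\ln p$ for a set $T$ of primes with $\sum_{p\in T}p\le S$. The split you describe --- $\sum_{p\in T,\,p\le y}\ln p\le\theta(y)\sim y$ together with $\sum_{p\in T,\,p>y}\ln p\le(S\ln y)/y$, then optimize over $y$ --- forces $y^2\approx S\ln y$, i.e.\ $y\sim\sqrt{S\ln S/2}$, and a total $\sim 2y\sim\sqrt{2S\ln S}$: a factor $\sqrt2$ too large. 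The $\theta(y)$ bound on the small piece ignores the budget entirely. To get the sharp constant you need a coupling: choose $y$ with $\sum_{p\le y}p=S$; then $\sum_{p\in T,\,p>y}p\le\sum_{p\le y,\,p\notin T}p$, and since $\ln p/p$ decreases past $e$, each $\ln p$ with $p>y$ satisfies $\ln p\le\tfrac{\ln y}{y}p$ while each omitted $p'\le y$ satisfies $\tfrac{\ln y}{y}p'\le\ln p'$; chaining these gives $\sum_{p\in T}\ln p\le\theta(y)+O(1)\sim\sqrt{S\ln S}$. Your write-up acknowledges the difficulty but does not supply this exchange. (Step 2, pushing exponents to $1$, also needs its own exchange argument for the same reason.)

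The paper avoids all of this. It notes $r'\le r$ (pairwise coprime integers $\ge2$ have distinct prime divisors, so the $i$-th smallest is at least the $i$-th prime), applies AM--GM to get $\prod_{p\in\mathbf p}p\le(n/r')^{r'}$, and uses monotonicity of $t\mapsto(n/t)^t$ on $[1,r]$ (valid since $n/r\ge e$) to bound this by $(n/r)^r=n^{(1+o(1))r/2}$, plugging in the already-derived $r\sim\sqrt{4n/\ln n}$. This is shorter, needs no further prime-number theory for the general $\mathbf p$, and would be the approach I'd recommend unless you patch step 3 as above.
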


\begin{proof}
We only prove the second claim, because the proof of the first claim is similar. Using the asymtotic estimate $\sum_{i=1}^r p_i \sim \tfrac12 r^2 \ln(r)$, we infer that
$$
n \sim n(P^r) \sim \tfrac12 r^2 \ln(r) + O(r) = r^{2 + o(1)}.
$$
Consequently,
$$
n \sim \tfrac12 r^2 \ln(r) = \tfrac12 r^2 \frac{\ln(n)}{2 + o(1)} \sim \tfrac14 r^2 \ln(n),
$$
and
$$
r \sim \sqrt{4n /{\ln(n)}} = n ^ {(1 - o(1))/2}.
$$
Using the asymtotic estimate $\prod_{i=1}^r p_i \sim r^{(1\pm o(1))r}$ and $r(P^r) = \Theta\big(\prod_{i=1}^r p_i\big)$, 
%Proposition \ref{prop:primereset}, 
we obtain
$$
r(P^r) \sim r^{(1 \pm o(1)) r} = n^{(1\pm o(1))r/2}. 
$$
This yields the estimate for $r(P^{r})$.

We proceed with the estimate for $r(P^{\mathbf{p}})$. Notice that $\mathbf{p}$ contains $r'\leq r$ relatively prime numbers. So $\prod_{p\in\mathbf{p}}p$ is the product of $r'$ numbers whose sum is less than $n$. This product is at most $(n/r')^{r'}$. Since $n/r' \ge n/r \ge \exp(1)$ for large $n$, we infer that $(n/r')^{r'} \le (n/r)^r$ for large $n$. So by Proposition \ref{prop:primereset},
$$
r(P^{\mathbf{p}}) \le r\cdot {\textstyle \prod_{p\in\mathbf{p}}p} \leq r (n/r)^r = r\big(n^{(1+o(1))/2}\big)^r = n^{(1+ o(1))r/2},
$$
which yields the estimate for $r(P^{\mathbf{p}})$.
\end{proof}

Martyugin's ternary construction satisfies the same estimate as our binary construction.

\section{Binary PFAs with larger reset thresholds for $n \ge 41$ states}\label{sec:LargerRT}

Now that we have the improved prime number construction, it remains to prove that $\mathbf{p}$ can be chosen such that $P^{\mathbf{p}}$ and $\tilde P^{\mathbf{p}}$ defeat the \v{C}ern\'y family for all $n \ge 41$. In order to do this, we will construct a series of PFAs of this type, for which $n = O\big(r^2 \log(r)\big)$ does \emph{not} hold. Instead, we construct PFAs of which the reset threshold is only $\Theta(n^3)$. This is sufficient and more easy.

To show for any construction that it defeats the \v{C}ern\'y family for large enough $n$, we first need an upper bound for the reset thresholds of the automata in the \v{C}ern\'y family which is valid for all $c$, but does not depend on $c$. Again, let $n' = n - c - 1$.
From Theorem \ref{thm:reduction} and Proposition \ref{prop:bounds_fc}, it follows that
\begin{align*}
r(C^c_n) &\le n'(n'-1) + c+1 + f_c(n')\\
 &\le (n')^2 + (c+1)^2+ (c+1)n' \big\lceil \log_2(n) \big\rceil\\
 &= (n'+c+1)^2+  (c+1)n'\big\lceil \log_2(n)-2 \big\rceil\\
 &\le \tfrac14 n^2 \big\lceil \log_2(n)+2 \big\rceil.
\end{align*}
This is not really a good upper bound, but it will be sufficient for all $n \ge 47$. For 
$41 \le n \le 46$, our general argument will not be precise enough.

We first assume that $n \ge 47$. Notice that $\big\lceil \log_2(52) + 2 \big\rceil = 8$, and that 
$$
\big\lceil \log_2(m+52) + 2 \big\rceil \le \tfrac1{10}m + 8 \le \tfrac29(m+36)
$$ 
%For all $n\geq 52$, it follows that 
%\begin{equation} \label{rest52}
%\begin{split}
%54\,r(C^c_{n}) \le 3n^2(n-16) < 3n(n-1)(n-15).
%\end{split}
%\end{equation}
%
%Let $i \in \N$. The PFAs $P^{\mathbf{p}}$ and $\tilde P^{\mathbf{p}}$ of the $4$ relatively prime numbers $8$, $7+2i$, $9+2i$, $11+2i$ have 
%$$
%(3 + 8) + (3 + 7 + 2i) + (3 + 9 + 2i) + (3 + 11 + 2i) = 47 + 6i
%$$
%states. For all $47 + 6i \le n \le 52 + 6i$, we can use the same $4$ relatively prime numbers in our prime number automaton with $n$ states. For all $n\geq 47$ and $i\geq 0$,
%\begin{align*}
%54q &= 54\cdot 8(7+2i)(9+2i)(11+2i) = 16(21+6i)(27+6i)(33+6i)\\
%&> \frac{17}{7} (21+6i)\cdot 2 (26+6i)\cdot \frac{56}{17}(34+6i) > (51+6i)(52+6i)(112+18i)\\
%&> 3(51+6i)(52+6i)(37+6i) \geq 3(n-1) n  (n-15).
%\end{align*}
%By comparison with \eqref{rest52}, we see that $P^{\mathbf{p}}$ and $\tilde P^{\mathbf{p}}$ defeat the \v{C}ern\'y family if the number of states is at least $47$. 
for all $m \in \N$. Now take $m := \max\{n-52,0\}$. Then it follows from the above that
\begin{equation} \label{rest52}
\begin{split}
54\,r(C^c_{n}) &\le 3(m+36)(m+52)(m+52) \\
&\le 3(m+37)(m+51)(m+52) = (m+51)(m+52)(3m+111).
\end{split}
\end{equation}

Let $i \in \N$. The PFAs $P^{\mathbf{p}}$ and $\tilde{P}^{\mathbf{p}}$ of the $4$ relatively prime numbers $8$, $7+2i$, $9+2i$, $11+2i$ have
$$
(3 + 8) + (3 + 7 + 2i) + (3 + 9 + 2i) + (3 + 11 + 2i) = 47 + 6i
$$
states. For all $47 + 6i \le n < 53 + 6i$, we can use the same $4$ relatively prime numbers in our constructions of $P^{\mathbf{p}}$ and $\tilde{P}^{\mathbf{p}}$ with $n \ge 47$ states. From $i \ge 0$ and $6i \ge n - 52$, we infer that $2i \ge \frac{m}3$. So $q \ge 8\big(\tfrac{m}3 + 7\big)\big(\tfrac{m}3 + 9\big)\big(\tfrac{m}3 + 11\big)$ and
\begin{equation}
\begin{split}
\ifijfcs
\else
\!
\fi
54 q &\ge 16(m+21)(m+27)(m+33) \\
&\ge 16(m+21)(m+26)(m+34) 
= \big(\tfrac{17}{7}m + 51\big)(2m+52)\big(\tfrac{56}{17}m + 112\big).
\end{split}
\end{equation}
By way of a factor comparison with \eqref{rest52}, we see that $P^{\mathbf{p}}$ and $\tilde{P}^{\mathbf{p}}$ defeat the \v{C}ern\'y family if the number of states is at least $47$.  

With $41 \le n \le 46$ states, we can choose specific relatively prime numbers, and compare the reset threshold of the best PFA of the \v{C}ern\'y family, say $C_n^{c'}$, with the product $q$. The columns with r.t.\@ and r.t.~t.\@ give the reset thresholds for $P^{\mathbf{p}}$ and the transitive variant $\tilde P^{\mathbf{p}}$ respectively.
\begin{center}
\begin{tikzpicture}[x=2mm,y=-5mm]
\fill[tableIndex] (0,0) rectangle (4,6) (41,0) rectangle (45,6);
\fill[tableBold] (8,0) rectangle (20,6);
\fill[tableFrame] (0,-1.1) rectangle (45,0);
\fill[red] (14,1) rectangle (20,2);
\begin{scope}[white,shift={(0,-0.5)}]
\node at (2,0) {{\mathversion{bold}$n\mathstrut$}};
\node at (6,0) {{\mathversion{bold}$c'\mathstrut$}};
\node at (11,-0.05) {{\mathversion{bold}$r(C^{c'}_n)$}};
\node at (17,0) {{\mathversion{bold}$q\mathstrut$}};
\node at (23,0) {{\bf r.t.\strut}};
\node at (29,0) {{\bf r.t.~t.\strut}};
\node at (36.5,0) {{\bf primes\strut}};
\node at (43,0) {{\mathversion{bold}$n\mathstrut$}};
\end{scope}
\draw[white] \foreach \x in {4,8,14,20,26,32,41} {(\x,-1.05) -- (\x,0)};
\draw[tableFrame,very thick] \foreach \x in {4,8,14,20,26,32,41} {(\x,0) -- (\x,6)};
\foreach[count=\y] \n/\c/\rc/\q/\rp/\rpt\prs in {41/13/2465/2520/3114/3056/{(5,7,8,9)}, 42/13/2601/2520/3117/3062/{(5,7,8,9)}, 43/13/2739/3080/3802/3726/{(5,7,8,11)}, 44/14/2882/3465/4275/4177/{(5,7,9,11)}, 45/14/3028/3960/4869/4683/{(5,8,9,11)}, 46/15/3177/3960/4872/4689/{(5,8,9,11)}}%, 47/15/3331/5544/6453/6465/{(7,8,9,11)}, 48/14/3490/5544/6456/6471/{(7,8,9,11)}, 49/15/3658/6552/7623/7633/{(7,8,9,13)}, 50/15/3831/6552/7626/7639/{(7,8,9,13)}, 51/{15,16}/4006/8008/9313/9297/{(7,8,11,13)}, 52/16/4188/9009/10457/10311/{(7,9,11,13)}, 53/16/4372/10296/11744/11741/{(8,9,11,13)}}
{
\draw[tableFrame,very thick] (0,\y-1) -- (45,\y-1);
\node at (2,\y-0.44) {\n};
\node at (6,\y-0.44) {\c};
\node at (11,\y-0.44) {\bf\rc};
\node at (17,\y-0.44) {\bf\q};
\node at (23,\y-0.44) {\rp};
\node at (29,\y-0.44) {\rpt};
\node at (36.5,\y-0.4) {\prs};
\node at (43,\y-0.44) {\n};
}
\draw[tableFrame,very thick] (0,-1.1) rectangle (45,6);
\end{tikzpicture}
\end{center}
It appears that the value of $q$ is insufficient for estimation if $n = 42$. But both constructions $P^{\mathbf{p}}$ and $\tilde P^{\mathbf{p}}$ require at least $5\cdot 7\cdot 8\cdot 9+5\cdot 7\cdot 8 = 2800$ steps to synchronize if $n=42$, which is sufficient for estimation.

\section{Conclusion}

The \v{C}ern\'y family presented in this paper contains for all $n\leq 10$ a binary PFA with $n$ states and maximal possible reset threshold.
The analysis for the different members of the family has been done in general, by determining the maximal reset threshold in terms of recurrent sequences.

We also have shown that for $n \ge 41$ the \v{C}ern\'y family does not contain extremal PFAs anymore. This is proved by a new prime number construction which outperforms earlier known constructions.
For large $n$, there are constructions based on rewrite systems as introduced in \cite{BDZ19} with exponentially large reset thresholds, but they are insufficient to beat the \v{C}ern\'y family for all $n\geq 41$.	

We leave it as an open question if the \v{C}ern\'y family contains any extremal PFA for $11\leq n\leq 40$. The largest reset thresholds of the \v{C}ern\'y family are given below.
\begin{center}	
%\vspace{-3pt}
\begin{tikzpicture}[x=10mm,y=-5mm]
% draw frame header
\foreach[count=\t] \L in {{11/119, 12/146, 13/176, 14/211, 15/248, 16/288, 17/332, 18/379, 19/429, 20/483}, {21/539, 22/599, 23/663, 24/732, 25/804, 26/881, 27/961, 28/1044, 29/1132, 30/1222}, {31/1317, 32/1416, 33/1517, 34/1624, 35/1733, 36/1846, 37/1963, 38/2082, 39/2207, 40/2334}} {
   \begin{scope}[shift={(0,3*\t)}]
	\fill[tableIndex] (0,-1) rectangle (10,0);
	\fill[tableFrame] (-1.37,-1) rectangle (0,1);
	\draw[white] (-1.37,0) -- (0,0);
	\node[white,anchor=east] at (0,-0.44) {{\mathversion{bold}$n\,\mathstrut$}};
	
	% draw numbers
	\foreach[count=\i] \n/\p in \L {
		\draw[tableFrame,very thick] (\i-1,-1) -- (\i-1,1);
		\node at (\i-0.5,-0.44) {$\n\mathstrut$};
		\node at (\i-0.5,0.56) {$\p\mathstrut$};
	}
	% draw frame boundary
	\draw[tableFrame,very thick] (-1.37,-1) -- (10,-1) -- (10,1) -- (-1.37,1) -- cycle (0,0) -- (10,0);
	\node[white,anchor=east] at (0,0.53) {{\mathversion{bold}$r(C^{c'}_n)\mathstrut$}};
   \end{scope}
} 
\end{tikzpicture}
%\vspace{-5pt}
\end{center}

\section*{Acknowledgements}

We thank the referees for their positive remarks and suggestions. We also thank Hans Zantema for discussions on the topic.

\ifijfcs
\bibliographystyle{ws-ijfcs}
\else
\bibliographystyle{splncs04}
\fi	
	\bibliography{Cernyref}
\end{document}